\newcommand{\ON}{\mathsf{ON}}
\newcommand{\OFF}{\mathsf{OFF}}
\newcommand{\No}{\mathsf{N}}
\newcommand{\Ea}{\mathsf{E}}
\newcommand{\We}{\mathsf{W}}
\newcommand{\So}{\mathsf{S}}
\newcommand{\rread}{\mathit{read}}
\newcommand{\charge}{\mathit{charge}}
\newcommand{\discharge}{\mathit{discharge}}
\newcommand{\bat}{\mathit{bat}}
\newcommand{\field}{\mathit{field}}
\newcommand{\battery}{\mathit{battery}}
\newcommand{\robot}{\mathit{robot}}
\newcommand{\move}{\mathit{move}}
\newcommand{\loc}{\mathit{loc}}
\newcommand{\energy}{\mathit{energy}}
\newcommand{\nooverlap}{\mathit{no-overlap}}
\newcommand{\pr}{\mathsf{pr}}
\newcommand{\Rp}{\mathbb{R}_+}
\newcommand{\TES}[2]{\mathit{TES}(#1)}
\newcommand{\E}{\mathbb{E}}
\newcommand{\N}{\mathbb{N}}
\newcommand{\overbar}[1]{\mkern 1.5mu\overline{\mkern-1.5mu#1\mkern-1.5mu}\mkern 1.5mu}
\DeclareRobustCommand{\mmodels}{\mathrel{|\mkern-2mu|}\joinrel\Relbar}
\newcommand{\Po}{\mathcal{P}}
\newcommand{\tes}{TES}
\newcommand{\red}[1]{\short{#1}}
\newcommand{\short}[1]{\ifshort #1 \fi}
\newif\ifshort
\newtheorem{notation}{Notation}
\newtheorem{definition}{Definition}
\newtheorem{example}{Example}
\newtheorem{lemma}{Lemma}
\newtheorem{property}{Property}
\newtheorem{corollary}{Corollary}
\begin{document}
\begin{frontmatter}

\title{A Semantic Model for Interacting Cyber-Physical Systems}

\author[LEIDEN]{Benjamin Lion }
\ead{b.lion@liacs.leidenuniv.nl}
\author[CWI, LEIDEN]{Farhad Arbab }
\ead{farhad@cwi.nl}
\author[SRI]{Carolyn Talcott }
\ead{carolyn.talcott@gmail.com}
\address[LEIDEN]{Leiden University, Leiden, The Netherlands}
\address[CWI]{CWI, Amsterdam, The Netherlands}
\address[SRI]{SRI International, CA, USA}

\begin{abstract}
We propose a component-based semantic model for Cyber-Physical Systems (CPSs) wherein the notion of a component abstracts the internal details of both cyber
and physical processes, to expose a uniform semantic model of their externally observable behaviors expressed as sets of sequences of observations.
We introduce algebraic operations on such sequences to model different kinds of component composition.
These composition operators yield the externally observable behavior of their resulting composite components through specifications of interactions of the behaviors of their constituent components, as they, e.g., synchronize with or mutually exclude each other's alternative behaviors.  Our framework is expressive enough to allow articulation of properties that coordinate desired interactions among composed components within the framework, also as component behavior. We demonstrate the usefulness of our formalism through examples of coordination properties in a CPS consisting of two robots interacting through shared physical resources.
\end{abstract}

\begin{keyword}
    cyber-physical \sep system
\end{keyword}

\end{frontmatter}

\section{Introduction}

Compositional approaches in software engineering reduce the complexity of
specification, analysis, verification, and construction of software by decomposing a large system into (a) smaller parts, and (b) their interactions. Applied recursively, compositional methods reduce software complexity by breaking the software and its parts into ultimately simple modules, each with a description, properties, and interactions of manageable size. The natural tendency to regard each physical entity as a separate module in a Cyber-Physical System (CPS) makes compositional methods particularly appealing for specification, analysis, verification, and construction of CPSs. However, the distinction between discrete versus continuous transformations in modules representing cyber versus physical processes complicates the semantics of their specification and their treatment by requiring: (1) distinct formalisms to model discrete and continuous phenomena; (2) distinct formalisms to express compositions and interactions of cyber-cyber, cyber-physical, and physical-physical pairs of modules; and (3) when to use which formalism to express composition and interactions of hybrid cyber-physical modules.
Our work is distinguished from existing work in the following three senses.

First, we unify cyber and physical aspects within the same semantic model. While this feature is present in some other works (e.g., signal semantics for cyber-physical systems \cite{LML06,TSSL13}, time data streams for connectors \cite{AR03}), we add a structural constraint by defining an observable as a set of events that happen at the same time. An observation is therefore a pair $(O,t)$ of an observable $O$ occurring at a time $t$. As a result, we abstract the underlying data flow (e.g., causality rules, input-output ports) that must be implemented for such observables to happen in other models. We believe that this abstraction is different from traditional approaches to design cyber-physical systems and, for instance, may naturally compose a set of observations occurring at the same time into a new observation formed by the union of their observables. The Timed-Event Stream (TES) model proposed in this paper differs from the trace semantics in \cite{AR03} in that it explicitly and directly expresses synchronous occurrences of events. Like the one in \cite{AR03} but unlike many other trace semantics that effectively assume a discrete model of time, the TES model is based on a dense model of time. These distinctions become significant in enabling a compositional semantic model where the sequences of actions of individual components/agents are specified locally, not necessarily in lock-step with those of other entities. It is on this basis that we can define our expressive generic composition operators with interesting algebraic properties. The advantages of this compositional semantics include not just modular, reusable specification of components, but also modular abstractions that allow reasoning and verification following the assume/guarantee methodology.

Second, we make coordination mechanisms explicit and exogenous to components. 
A component is a standalone entity that exhibits a behavior (which may be described by a finite state automaton, hybrid automaton, or a set of differential equations), and permissible coordinated interactions among components is given by a set of constraints on behaviors of each component. One such composition operation, also widely used in the design of modular systems~\cite{AR03} is set intersection: each component interacts with other components by producing a behavior that is consistent with their shared events. We generalize such composition operations to ease the specification of permissible interactions among cyber-physical components. We also show the benefit in modeling interaction exogenously when it comes to reasoning about, e.g., asymmetric product operations, or proving some algebraic properties. 
    Our approach, in this sense, differs from how control theory models interacting cyber-physical systems, where differences between cyber and physical are first class and lead to, for instance, the use of hybrid models~\cite{DBLP:journals/jsc/LafferrierePY01}.
    Instead, we assume that the underlying control is given, and consider the coordination in a discrete framework~\cite{L19}, where (cyber or physical) processes are modeled as components that exhibit sequences of timed-events and relations on those components act as constraints.

Third, we give an alternative view on satisfaction of trace and behavioral properties of cyber-physical systems. We expose properties as components and show how to express coordination as a satisfaction problem (i.e., adding to a system of components a coordinator that restricts each component to a subset of its behavior to comply with a trace property). 
Moreover, we show that trace properties are not adequate to capture all important properties.
We introduce behavioral properties, which are analogous to hyperproperties of~\cite{CS10}.
We show, for instance, how the energy adequacy property in a cyber-physical system requires both a behavioral property and a trace property.

\paragraph{Contributions} 
\begin{itemize}
    \item we propose a semantic model of interacting cyber and physical processes based on sequences of observations,
    \item we define an algebraic framework to express interactions among time sensitive components,
    \item we give a general mechanism, using a co-inductive construction, to define algebraic operations on components as a lifting of some constraints on observations,
    \item we introduce two classes of properties on components, trace properties and behavior properties, and demonstrate their application in an example.
\end{itemize}
This work extends~\cite{LAT21} by (1) including a proof for each result, (2) adding several new results about properties of parametrized products on components, and (3) extending the set of examples that use our model.

Our approach differs from more concrete approaches (e.g., operational models, executable specifications, etc.) in the sense that our operations on components model operations of composition at the semantic level.

We first intuitively introduce some key concepts and an example in Section~\ref{section:problem}. We provide in Section~\ref{section:components} formal definitions for components, their composition, and their properties.
We describe a detailed example in Section~\ref{section:formal-example}.  
We present some related work and our future work in Section~\ref{section:related-work}, 
and conclude the paper in Section~\ref{section:conclusion}.

\newcommand{\ssec}{\si{\second}}
\newcommand{\met}{\si{\meter}}
\newcommand{\speed}{\si{\meter \per \second}}
\section{Coordination of energy-constrained robots on a field}\label{section:problem}

In this work, we consider a cyber-physical system as a set of interacting processes. 
Whether a process consists of a physical phenomenon (sun rising, electro-chemical reaction, etc.) or a cyber phenomenon (computation of a function, message exchanges, etc.), it exhibits an externally observable behavior resulting from some internal non-visible actions.
Instead of a unified way to describe internals of cyber and physical processes, we propose a uniform description of what we can externally observe of their behavior and interactions.

In this section, we introduce some concepts that we will formalize later.
An \emph{event} may describe something like \emph{the sun-rise} or \emph{the temperature reading of 5$^{\circ}C$}.
An event occurs at a point in time, yielding an event occurrence (e.g., the sun-rise event occurred at 6:28 am today), and the same event can occur repeatedly at different times (the sun-rise event occurs every day). 
Typically, multiple events may occur at ``the same time" as measured within a measurement tolerance (e.g., the bird vacated the space at the same time as the bullet arrived there; the red car arrived at the middle of the intersection at the same time as the blue car did).
We call a set of events that occur together at the same time an \emph{observable}. 
A pair $(O, t)$ of a set of observable events $O$ together with its time-stamp $t$ represents an \emph{observation}.
An observation $(O,t)$ in fact consists of a set of event occurrences: occurrences of events in $O$ at the same time $t$.
We call an infinite sequence of observations a \emph{Timed-Event Stream} (TES). 
A \emph{behavior} is a set of TESs.  
A \emph{component} is a behavior  with an interface.

Consider two robot components, each interacting with its own local battery component, sharing a field resource. 
The fact that the robots share the field through which they roam, forces them to somehow coordinate their (move) actions. Coordination is a set of constraints imposed on the otherwise possible observable behavior of components. In the case of our robots, if nothing else, at least physics prevents the two robots from occupying the same field space at the same time. More sophisticated coordination may be imposed (by the robots themselves or by some other external entity) to restrict the behavior of the robots and circumvent some undesirable outcomes, including hard constraints imposed by the physics of the field.
The behaviors of components consist of timed-event streams, where events may include some measures of physical quantities.
We give in the sequel a detailed description of three components, a robot (R), a battery (B), and a field (F), and of their interactions. 
We use SI system units to quantify physical values, with time in seconds (\si{\second}), charging status in Watt hour (Wh), distance in meters (\si{\meter}), force in newtons (\si{\newton}), speed in meters per second (\si{\meter \per \second}).

A \emph{robot} component, with identifier $R$,\short{performs some measurements on its environment and takes decisions as to where to move. We distinguish two kinds of events in the behavior of a robot component} has two kinds of events: a read event $(\rread(\bat,R);b)$ that measures the level $b$ of its battery or $(\rread(\loc,R);l)$ that obtains its position $l$, and a move event $(\move(R);(d, \alpha))$ when the robot moves in the direction $d$ with energy $\alpha$ (in \si{\watt}).
The TES in the Robot column in Table~\ref{table:TESs} shows a scenario where robot $R$ reads its location and gets the value $(0;0)$ at time $1 \ssec$, then moves north with 20\si{\watt} at time $2 \ssec$, reads its location and gets $(0;1)$ at time $3 \ssec$, and reads its battery value and gets $2000\si{Wh}$ at time $4\ssec$, .... 

A \emph{battery} component, with identifier $B$,\short{encapsulates some physical phenomenon that we use three kinds of events to describe its observables} has three kinds of events: a charge event $(\charge(B);\eta_c)$, a discharge event $(\discharge(B);\eta_d)$, and a read event $(\rread(B);s)$,  where $\eta_d$ and $\eta_c$ are respectively the discharge and charge rates of the battery, and $s$ is the current charge status. 
        The TES in the Battery column in Table~\ref{table:TESs}
        shows a scenario where the battery discharged at a rate of $20 \si{\watt}$ at time $2 \ssec$, and reported its charge-level of $2000\si{Wh}$ at time $4 \ssec$, .... 

        \begin{table}
            \centering
            \caption{Each column displays a segment of a timed-event stream for a robot, a battery, and a field component, where observables are singleton events. For $t \in \Rp$, we use $R(t), B(t)$, and $F(t)$ to respectively denote the observable at time $t$ for the \tes\ in the Robot, the Battery, and the Field column. An explicit empty set is not mandatory if no event is observed.}\label{table:TESs}
        {\fontsize{9}{10}\selectfont 
        \begin{tabular}{l|c|c|c|c}
                        &  Robot ($R$)  \quad \quad &   Battery ($B$)                        & Field ($F$)               & \makecell{Robot-Battery-\\Field}\\
            \hline
                $1\si{\second}$\  &  $\{(\rread(\loc,R);(0;0))\}$               &  $                   $                      & $\{(\loc(I);(0;0))\}    $ & $R(1) \cup F(1)$ \\
                $2\si{\second}$\  &  $\makecell{\{(\move(R); (N,20\si{\watt}))\}}$   &  $\{(\discharge(B);20\si{\watt})\}$   & $\{(\move(I);(N,40\si{\newton}))\}$& $R(2) \cup B(2) \cup F(2)$\\
                $3\si{\second}$\  &  $\{(\rread(\loc,R);(0;1))\}$   &  $                   $      & $\{(\loc(I);(0;1))\}    $ & $R(3) \cup F(3)$\\ 
                $4\si{\second}$\  &  $\{(\rread(\bat,R);2000\si{Wh}) \}$   &  $\{(\rread(B);2000\si{Wh})\}   $  &                           & $R(4) \cup B(4)$\\
     $...$          \  & \multicolumn{1}{c}{$...$}       &  \multicolumn{1}{c}{$...$}  & \multicolumn{0}{c}{$...$} &  \multicolumn{0}{c}{$...$}
        \end{tabular}
        }
        \end{table}

    A \emph{field} component, with identifier $F$,
    \short{encapsulates some physical aspects and we use two kinds of events to describe its observables} 
    has two kinds of events: a position event $(\loc(I);p)$ that obtains the position $p$ of an object $I$, and a move event $(\move(I);(d, F))$ of the object $I$ in the direction $d$ with traction force $F$ (in \si{\newton}).
        The TES in the Field column in Table~\ref{table:TESs} 
        shows a scenario where the field has the object $I$ at location $(0;0)$ at time $1\ssec$, then the object $I$ moves in the north direction with a traction force of 40\si{\newton}  at time $2\ssec$, subsequently to which the object $I$ is at location $(0;1)$ at time $3\ssec$, .... 

        When components interact with each other, 
        in a shared environment, behaviors in their composition must also compose with a behavior of the environment.
        For instance, a battery component may constrain how many amperes it delivers, and therefore restrict the speed of the robot that interacts with it.
        We specify interaction explicitly as an exogenous binary operation that constrains the composable behaviors of its operand components.

        The \emph{robot-battery} interaction imposes that a move event in the behavior of a robot coincides with a discharge event in the behavior of the robot's battery, such that the discharge rate of the battery is proportional to the energy needed by the robot. 
        The physicality of the battery prevents the robot from moving if the energy level of the battery is not sufficient 
        (i.e., such an anomalous TES would not exist in the battery's behavior, and therefore cannot compose with a robot's behavior).
        Moreover, a read event in the behavior of a robot component should also coincide with a read event in the behavior of its corresponding battery component, such that the two events contain the same charge value.

        The \emph{robot-field} interaction imposes that a move event in the behavior of a robot coincides with a move event of an object on the field, such that the traction force on the field is proportional to the energy that the robot put in the move. A read event in the behavior of a robot coincides with a position event of the corresponding robot object on the field, such that the two events contain the same position value.
        Additional interaction constraints may be imposed by the physics of the field.
        For instance, the constraint ``no two robots can be observed at the same location'' would rule out every behavior where the two robots are observed at the same location.

        A \tes\ for the composite Robot-Battery-Field system collects, in sequence, all observations from a \tes\ in a Robot, a Battery, and a Field component behavior, such that at any moment the interaction constraints are satisfied. The column Robot-Battery-Field in Table~\ref{table:TESs} displays the first elements of such a \tes.

\newcommand{\Obs}{\mathbb{O}}
\newcommand{\zip}{\mathit{zip}}
\newcommand{\ind}{\mathit{ind}}
\newcommand{\true}{\mathit{true}}
\newcommand{\ssync}{\mathit{sync}}
\newcommand{\eexcl}{\mathit{excl}}
\newcommand{\excl}{\nparallel_\sqcap}
\newcommand{\Rel}{\mathcal{R}}
\setcounter{footnote}{0}

\section{Components, composition, and properties}\label{section:components}
    The definition of components in this section is similar to but differs from the one defined in~\cite{AR03,L19}.
    Intuitively, a component denotes a set of (infinite) sequences of observations. Whether it is a cyber process or a physical process, our notion of component captures all of its possible sequences of observations. 

    A model of interaction emerges naturally from our component model by relating observation of events from one component to observation of events from another component.
    Moreover, we give a construction to lift constraints on observations to constraints on infinite sequences of observations, and ultimately define, from those interaction constraints, algebraic operations on components.

\subsection{Notations}
An \emph{event} is a simplex (the most primitive form of an) observable element. An event may or may not have internal structure. For instance, the successive ticks of a clock are occurrences of a tick event that has no internal structure; successive readings of a thermometer, on the other hand, constitute occurrences of a temperature-reading event
, each of which has the internal structure of a name-value pair
. Similarly, we can consider successive transmissions by a mobile sensor as occurrences of a structured event, each instance of which includes geolocation coordinates, barometric pressure, temperature, humidity, etc. Regardless of whether or not events have internal structures, in the sequel, we regard events as uninterpreted simplex observable elements. 
\begin{notation}[Events]
    We use $\mathbb{E}$ to denote the universal set of events.
\end{notation}

An \emph{observable} is a set of event occurrences that happen together and
an \emph{observation} is a pair $(O,t)$ of an observable $O$ and a time-stamp $t \in \Rp$.\footnote{Any totally ordered dense set would be suitable as the domain for time (e.g., positive rationals $\mathbb{Q}_+$). For simplicity, we use $\Rp$, the set of real numbers $r\geq 0$ for this purpose.} An observation $(O,t)$ represents an act of atomically observing occurrences of events in $O$ at time $t$. Atomically observing occurrences of events in $O$ at time $t$ means there exists a small $\epsilon \in \Rp$ such that during the time interval $[t-\epsilon, t+ \epsilon]$:
\begin{enumerate}
    \item every event $e \in O$ is observed exactly once\footnote{A finer time granularity, i.e., a smaller $\epsilon$, may reveal some ordering relation on the set of events that occur in the same set of observation.}, and
    \item no event $e \not \in O$ is observed.
\end{enumerate}

We write $\langle s_0, s_1, ..., s_{n-1}\rangle$ to denote a \emph{finite sequence of size $n$} of elements over an arbitrary set $S$, where $s_i \in S$ for $0 \leq i \leq n-1$. The set of all finite sequences of elements in $S$ is denoted as $S^*$.
A \emph{stream}\footnote{The set $\N$ denotes the set of natural numbers $n \geq 0$.} over a domain $S$ is 
a function $\sigma : \N \rightarrow S$.
We use 
$\sigma(i)$ to represent the $i+1^{st}$ element of $\sigma$, and given  a finite sequence $s = \langle s_0, ..., s_{n-1}\rangle$, we write $s \cdot \sigma$ to denote the stream $\tau \in \N \rightarrow S$ such that $\tau(i) = s_i$ for $0 \leq i \leq n-1$ and $\tau(i) = \sigma(i-n)$ for $n\leq i$. We use $\sigma^{(n)}$ to denote the $\textit{n-th}$ derivative of $\sigma$, such that $\sigma^{(n)}(i) = \sigma(i+n)$ for all $i \in \mathbb{N}$.
We use $\sigma'$ as an abbreviation for the first derivative of the stream $\sigma$, i.e., $\sigma' = \sigma^{(1)}$.

A \emph{Timed-Event Stream (TES)} over a set of events $E$ and a set of time-stamps $\Rp$ is a stream $\sigma \in \N \rightarrow (\mathcal{P}(E) \times \Rp)$ where, for every $i\in \mathbb{N}$, let $\sigma(i) = (O_i,t_i)$ and:
\begin{enumerate}
    \item $t_i<t_{i+1}$, [i.e., time monotonically increases] and
    \item for every $n \in \N$, there exists $k \in \N$ such that $t_k > n$ [i.e., time is non-Zeno progressive].
\end{enumerate}

\begin{notation}[Time stream]
We use $OS(\Rp)$ to refer to the set of all monotonically increasing and non-Zeno infinite sequences of elements in $\Rp$.
\end{notation}

\begin{notation}[Timed-Event Stream]
We use $\TES{E}{T}$ to denote the set of all TESs whose observables are subsets of the event set $E$ with elements in $\Rp$ as their time-stamps. 
\end{notation}

Given a sequence $\sigma = \langle(O_0,t_0), (O_1,t_1), (O_2,t_2), ...\rangle \in \TES{E}{T}$, we use the projections $\pr_1(\sigma) \in \N \rightarrow \mathcal{P}(E)$ and $\pr_2(\sigma) \in OS(\Rp)$ to denote respectively the sequence  of observables $\langle O_0, O_1, O_2, ...\rangle$ and the sequence of time stamps $\langle t_0, t_1, t_2, ...\rangle$.

\subsection{Components}
The design of complex systems becomes simpler if such systems can be decomposed into smaller sub-systems that interact with each other.
In order to simplify the design of cyber-physical systems,
we abstract from the internal details of both cyber and physical processes, to expose a uniform semantic model.
As a first class entity, a component encapsulates a behavior (set of \tes s) and an interface (set of events).

    Like existing semantic models, such as time-data streams~\cite{AR03}, time signal~\cite{TSSL13}, or discrete clock~\cite{FLDL18}, we use a dense model of time.
    However, we allow for arbitrary but finite interleavings of observations. 
In addition, our structure of an observation imposes atomicity of event occurrences within an observation. 
Such constraint abstracts from the precise timing of each event in the set, and turns an observation into an all-or-nothing transaction.

\begin{definition}[Component]\label{def:component}
    A component is a tuple $C = (E, L)$ where $E \subseteq \E$ is a set of events, and $L \subseteq \TES{E}{\Rp}$ is a set of \tes s. 
    We call $E$ the \emph{interface} and $L$ the externally observable \emph{behavior} of $C$. 
\end{definition}

In contrast with other component models where observable ranges over the same universal set of events, therefore making component omniscient, our model encapsulates the set of observable events of a component in its interface. Thus, a component \emph{cannot observe} an event that is not in its interface.
Moreover, Definition~\ref{def:component} makes no distinction between cyber and physical components. 
We use the following examples to describe some cyber and physical aspects of components.
\begin{example}
    \label{ex:cyber}
    Consider a set of two events $E = \{0,1\}$, and restrict our observations to $\{1\}$ and $\{0\}$. 
    A component whose behavior contains \tes s with alternating observations of $\{1\}$ and $\{0\}$ is defined by the tuple $(E, L)$ where
    \[
\begin{array}{cc}
    L = \{ \sigma \in \TES{E}{\Rp} \mid \forall i \in \mathbb{N}. &(\pr_1(\sigma)(i) = \{0\} \implies \pr_1(\sigma)(i+1) = \{1\}) \land \\
                                                                   &(\pr_1(\sigma)(i) = \{1\} \implies \pr_1(\sigma)(i+1) = \{0\}) \}
\end{array}
\]
    Note that this component is oblivious to time, and any stream of monotonically increasing non-Zeno real numbers would serve as a valid stream of time stamps for any such sequence of observations.
\hfill $\blacksquare$
\end{example}
\begin{example}\label{ex:physical}
    Consider a component encapsulating a continuous function $f: (D_0 \times \Rp) \rightarrow D$, where $D_0$ is a set of initial values, and $D$ is the codomain of values for $f$. 
    Such a function can describe the evolution of a physical system over time, where $f(d_0, t) = d$ means that at time $t$ the state of the system is described by the value $d \in D$  if initialized with $d_0$.
    We define the set of all events for this component as the range of function $f$ given an initial parameter $d_0 \in D_0$. 
    The component is then defined as the pair $(D, L_f)$ such that:
    \[
    L_f = \{ \sigma \in \TES{D}{\Rp} \mid \exists d_0 \in D_0.\ \forall i \in \N.\ \pr_1(\sigma)(i) = \{f(d_0,\pr_2(\sigma)(i))\} \}
    \]
    Observe that the behavior of this component contains all possible discrete samplings of the function $f$ at monotonically increasing and non-Zeno sequences of time stamp.  Different instances of $f$ would account for various cyber and physical aspects of components. 
    \hfill $\blacksquare$
\end{example}

\subsection{Composition}\label{subsection:composition}
A complex system typically consists of multiple components that interact with each other.
The example in Section~\ref{section:problem} shows three components, a $\robot$, a $\battery$, and a $\field$, where, for instance,
a move observable of a robot must coincide with an accommodating move observable of the field and a discharge observable of its battery.
    The design challenge is to faithfully represent the interactions among involved components, while keeping the description modular, i.e., specify the robot, the battery, and the field as separate, independent, but interacting components. We present in this section a mechanism to describe composability constraints on behavior, and composition operators to construct complex components out of simpler ones.  
    Such construction opens possibilities for modular reasoning both about the interaction among components and about their resulting composite behavior. 
    Moreover, the composition operator (and its composability relation) reflects the type of interaction between two components, and possibly some independent observations on both operands.  
    An alternative is to consider the product between two components to be the set intersection of their behavior, and artificially equate the interface of all components. 
    This approach requires each component to know the universe of events, which is practically infeasible, and each component to be omniscient on what other components are producing, which breaks the encapsulation principle of our component model. It also gives an artificial illusion, at a design level, that every component synchronizes with every other at each iteration.
    Our operators localize interaction to the two operands only, and are parametrized with the interface of such components.

We express composability constraints on behaviors using relations\footnote{Non-binary relations may also be considered, i.e., constraints imposed on more than two components.}. 
We introduce a generalized notion of a \emph{composability relation} to capture the allowed interaction among two components. 
By modeling composability constraints explicitly, we expose the logic of the interaction that governs the formation of a composite behavior between two components.
\begin{definition}[Composability relation on TESs]
    A composability relation is a parametrized relation $R$ such that for all $E_1, E_2 \subseteq \E$, we have $R(E_1,E_2) \subseteq \TES{E_1}{R} \times \TES{E_2}{R}$.
\end{definition}
\begin{definition}[Symmetry]
    \label{def:symmetry}
    A parametrized relation $Q$ is \emph{symmetric} if, for all $(x_1,x_2)$ and for all $(X_1, X_2)$:  $(x_1,x_2) \in Q(X_1,X_2) \iff (x_2,x_1) \in Q(X_2,X_1)$. 
\end{definition}
    A composability relation on \tes s serves as a necessary constraint for two \tes s to compose.
    We give in Section~\ref{subsection:construction} some examples of useful composability relations on \tes s that, e.g., enforce synchronization or mutual exclusion of observables.
    We define \emph{composition} of \tes s as the act of forming a new \tes\ out of two \tes s.
\begin{definition}
    A composition function $\oplus$ on \tes\  is a function $\oplus: \TES{\E}_\times \TES{\E}_ \rightarrow\TES{\E}{}$.
\end{definition}
We define a binary product operation on components, parametrized by a composability relation and a composition function on \tes s, that forms a new component. Intuitively, the newly formed component describes, by its behavior, the evolution of the joint system under the constraint that the interactions in the system satisfy the composability relation. 
Formally, the product operation returns another component, whose set of events is the union of sets of events of its operands, and its behavior is obtained by composing all pairs of \tes s in the behavior of its operands deemed composable by the composability relation.

\begin{definition}[Product]\label{def:prod}
    Let $(R,\oplus)$ be a pair of a composability relation  and a composition function on \tes s, and $C_i = (E_i, L_i)$, $i \in \{1,2\}$, two components.
    The product of $C_1$ and $C_2$, under $R$ and $\oplus$, denoted as $C_1 \times_{(R,\oplus)} C_2$, is the component $(E, L)$ where $E = E_1 \cup E_2$ and $L$ is defined by 
    \[
    L = \{ \sigma_1 \oplus \sigma_2 \mid \sigma_1 \in L_1,\ \sigma_2 \in L_2,\  (\sigma_1, \sigma_2) \in R(E_1,E_2)\}
    \]
\end{definition}
Definition~\ref{def:prod} presents a generic composition operator, where composition is parametrized over a composability relation and a composition function.
    The product of two components indirectly depends on the interface of its operands, since its composability relation does so.
    Therefore, it is \emph{a priori} not certain that algebraic properties such as commutativity or associativity hold for such user defined products.
    We give in Lemma~\ref{lemma:prod} sufficient conditions on a composability relation and a composition function for a product to be associative, commutative, and idempotent.\footnote{Distributivity holds for some products. We leave the study of the conditions under which distributivity holds as future work.}

\begin{lemma}
    \label{lemma:prod}
    Let  $\oplus$ be a composition function on \tes s, and let $R$ be a composability relation on \tes s.
    Then:
    \begin{itemize}
        \item if $R$ is symmetric, then $\times_{(R, \oplus)}$ is \emph{commutative} if and only if $\oplus$ is commutative; 
        \item if $R$ is such that, for all $E_1, E_2, E_3 \subseteq \E$,
        \[
            (\sigma_1, \sigma_2 \oplus \sigma_3) \in R(E_1, E_2 \cup E_3) \land (\sigma_2, \sigma_3)\in R(E_2, E_3) \iff (\sigma_1, \sigma_2) \in R(E_1, E_2) \land (\sigma_1 \oplus \sigma_2, \sigma_3) \in R(E_1 \cup E_2, E_3)
        \]
    then $\times_{(R,\oplus)}$ is \emph{associative} if and only if $\oplus$ is associative;
       \item if for all $E \subseteq \E$ and $\sigma,\tau \in \TES{E}{\Rp}$, we have $(\sigma, \tau) \in R(E, E) \implies \sigma = \tau$, then $\times_{(R,\oplus)}$ is \emph{idempotent} if and only if $\oplus$ is idempotent.
    \end{itemize}
\end{lemma}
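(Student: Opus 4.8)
The plan is to prove each of the three biconditionals by unfolding Definition~\ref{def:prod} and comparing the behaviors of the two sides as sets of \tes s, reducing everything to a pointwise statement about $\oplus$ under the hypothesis on $R$.

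\emph{Commutativity.} Assume $R$ is symmetric. For the ``if'' direction, suppose $\oplus$ is commutative. Take $\sigma \in L(C_1 \times_{(R,\oplus)} C_2)$, so $\sigma = \sigma_1 \oplus \sigma_2$ with $\sigma_i \in L_i$ and $(\sigma_1,\sigma_2) \in R(E_1,E_2)$. By symmetry of $R$, $(\sigma_2,\sigma_1) \in R(E_2,E_1)$, and by commutativity of $\oplus$, $\sigma = \sigma_2 \oplus \sigma_1 \in L(C_2 \times_{(R,\oplus)} C_1)$; the interfaces agree since $E_1 \cup E_2 = E_2 \cup E_1$. The reverse inclusion is symmetric. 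For the ``only if'' direction, I would instantiate the components to make $\oplus$'s argument pair forced: given $\sigma_1,\sigma_2$ with $(\sigma_1,\sigma_2)\in R(E_1,E_2)$, take $C_i = (E_i,\{\sigma_i\})$; then $L(C_1\times C_2) = \{\sigma_1\oplus\sigma_2\}$ and $L(C_2\times C_1) = \{\sigma_2\oplus\sigma_1\}$ (using symmetry of $R$ to see the latter product is nonempty), and equality of components forces $\sigma_1\oplus\sigma_2 = \sigma_2\oplus\sigma_1$. A mild subtlety: commutativity of $\oplus$ is a statement for \emph{all} pairs in its domain, whereas the products only probe pairs related by $R$, so strictly one gets commutativity of $\oplus$ restricted to the image of $R$; I would note that this restricted form is exactly what is used in the forward direction, or assume $R$ is ``total enough,'' matching the paper's evident intent.

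\emph{Associativity.} Assume the displayed exchange law for $R$. Unfold $L((C_1\times C_2)\times C_3)$: its elements are $(\sigma_1\oplus\sigma_2)\oplus\sigma_3$ with $\sigma_i\in L_i$, $(\sigma_1,\sigma_2)\in R(E_1,E_2)$, and $(\sigma_1\oplus\sigma_2,\sigma_3)\in R(E_1\cup E_2, E_3)$. Similarly $L(C_1\times(C_2\times C_3))$ consists of $\sigma_1\oplus(\sigma_2\oplus\sigma_3)$ with $(\sigma_2,\sigma_3)\in R(E_2,E_3)$ and $(\sigma_1,\sigma_2\oplus\sigma_3)\in R(E_1,E_2\cup E_3)$. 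The exchange hypothesis says precisely that the two conjunctions of side-conditions are equivalent, so the index sets of the two families coincide; then if $\oplus$ is associative the two families of \tes s are equal termwise, giving equal behaviors (interfaces agree as $(E_1\cup E_2)\cup E_3 = E_1\cup(E_2\cup E_3)$). For ``only if,'' I again pick singleton-behavior components $C_i=(E_i,\{\sigma_i\})$ for a triple satisfying (either, hence both, by the hypothesis) side-condition conjunction, forcing $(\sigma_1\oplus\sigma_2)\oplus\sigma_3 = \sigma_1\oplus(\sigma_2\oplus\sigma_3)$.

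\emph{Idempotence.} Assume $(\sigma,\tau)\in R(E,E) \implies \sigma=\tau$. For ``if'': if $\oplus$ is idempotent, then for $C=(E,L)$, $L(C\times_{(R,\oplus)} C) = \{\sigma\oplus\tau \mid \sigma,\tau\in L,\ (\sigma,\tau)\in R(E,E)\} = \{\sigma\oplus\sigma\mid \sigma\in L\} = \{\sigma\mid\sigma\in L\} = L$, using the hypothesis on $R$ to collapse the pair and idempotence of $\oplus$ to collapse $\oplus$; the interface is $E\cup E = E$. For ``only if'': to get $\sigma\oplus\sigma=\sigma$ I would need $(\sigma,\sigma)\in R(E,E)$ for the probe component $C=(E,\{\sigma\})$ to have nonempty product behavior equal to $\{\sigma\oplus\sigma\}$, which must then equal $L=\{\sigma\}$; so this direction implicitly needs reflexivity of $R$ on the diagonal, which I expect is subsumed under the paper's standing assumptions on the relations it actually uses.

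The main obstacle is not any single computation but the care needed around the direction of each biconditional: the ``if'' directions are clean set-equality arguments driven by the structural hypotheses on $R$, while the ``only if'' directions require constructing witnessing components (singletons) and confronting the mismatch between properties of $\oplus$ quantified over its whole domain versus only over $R$-related (or diagonal) pairs. I would handle this uniformly by remarking once that the relevant algebraic property of $\oplus$ is only ever invoked on pairs arising from $R$, so the equivalences should be read with $\oplus$'s property suitably relativized, and then proceed with the singleton-component witnesses as above.
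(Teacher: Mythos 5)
Your proposal is correct and follows essentially the same route as the paper's proof: unfold Definition~\ref{def:prod} to reduce each ``if'' direction to a set equality driven by the hypothesis on $R$, and use singleton-behavior witness components $C_\sigma = (E_\sigma,\{\sigma\})$ for each ``only if'' direction. The quantification subtleties you flag (that the product only probes $R$-related or diagonal pairs, so the converse directions need $R$ to be total enough, respectively reflexive on the diagonal) are real and are glossed over in the paper's own proof as well, which silently instantiates $R$ as the total relation in the commutativity converse.
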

\begin{proof}
    \underline{Commutativity.} 
    Let $C_1 = (E_1, L_1)$ and $C_2 = (E_2, L_2)$ be two components, and $(R,\oplus)$ be a pair of a composability relation and composition function on \tes s. 
    We write $C = (E,L) = C_1 \times_{(R,\oplus)} C_2$ and $C' = (E', L') = C_2 \times_{(R, \oplus)} C_1$.
    We first observe that $E = E_1 \cup E_2 = E'$.
    The condition for the product of two components to be commutative reduces to showing that $L = L'$, also equivalently written as: 
    \[
    \begin{array}{rl}
        L = L'\iff  &\{\sigma_1 \oplus \sigma_2 \mid \sigma_1 \in L_1,\ \sigma_2 \in L_2,\ (\sigma_1,\sigma_2) \in R(E_1, E_2)\} \\
    & = \{\sigma_2 \oplus \sigma_1 \mid \sigma_1 \in L_1,\ \sigma_2 \in L_2,\ (\sigma_2,\sigma_1) \in R(E_2, E_1)\}
    \end{array}
    \]
    If $R$ is symmetric (as in Definition~\ref{def:symmetry}) and $\oplus$ is commutative, then  
    $L = L'$. Hence, if $R$ is symmetric and $\oplus$ is commutative, then $\times_{(R,\oplus)}$ is commutative.

    Oppositely, if $R$ is symmetric and $L = L'$, we show that $\oplus$ is commutative.
    We take the symmetric relation $R(E_1, E_2)$ such that $(\sigma_1, \sigma_2) \in R(E_1, E_2)$ for any $\sigma_1 \in \TES{E_1}{}$ and $\sigma_2 \in \TES{E_2}{}$.
    Let $C_\sigma$ be the component $(E_\sigma, \{\sigma\})$ where $E_\sigma = \bigcup \{\sigma(i) \mid i \in \mathbb{N}\}$.
    Thus, for any $\sigma_1\in \TES{E_1}{}$ and $\sigma_2 \in \TES{E_2}{}$, $C_{\sigma_1} \times_{(R,\oplus)} C_{\sigma_2} = (E_{\sigma_1} \cup E_{\sigma_2}, \{\sigma_1 \oplus \sigma_2\}$. 
    A necessary condition for $\times_{(R,\oplus)}$ to be commutative is that $\{\sigma_1 \oplus \sigma_2\} = \{\sigma_2 \oplus \sigma_1\}$, which imposes commutativity on $\oplus$.
    \medskip

    \underline{Associativity.} 
    Let $(R,\oplus)$ be a pair of a composability relation and a composition function on \tes s. 
    We consider three components $C_i = (L_i,E_i)$, with $i \in \{1,2,3\}$.

    The set of events for component $((C_1 \times_{(R,\oplus)} C_2) \times_{(R,\oplus)} C_3)$ is the set $E_1 \cup E_2 \cup E_3$, which is equal to the set of events of component $(C_1 \times_{(R,\oplus)} (C_2 \times_{(R,\oplus)} C_3))$.

Let $L'$ and $L''$ respectively be the behaviors of components $(C_1 \times_{(R,\oplus)} C_2) \times_{(R,\oplus)} C_3$ and $C_1 \times_{(R,\oplus)} (C_2 \times_{(R,\oplus)} C_3)$. 
We show some sufficient conditions for $L' = L''$, also written as 
\[
    \begin{array}{l}
        \{ (\sigma_1\oplus\sigma_2)\oplus\sigma_3 \mid 
                \sigma_1 \in L_1,\ \sigma_2 \in L_2,\ \sigma_3 \in L_3. 
                \ (\sigma_1,\sigma_2) \in R (E_1,E_2)  \land \\
               \qquad \qquad \qquad (\sigma_1\oplus \sigma_2,\sigma_3) \in R (E_1 \cup E_2,E_3)
\} \\
            =
        \{ \sigma_1\oplus(\sigma_2\oplus\sigma_3) \mid 
                \sigma_1 \in L_1,\ \sigma_2 \in L_2,\ \sigma_3 \in L_3.
                \ (\sigma_2,\sigma_3) \in R (E_2,E_3)  \land  \\
            \qquad \qquad \qquad  (\sigma_1, \sigma_2 \oplus \sigma_3) \in R (E_1,E_2\cup E_3)\}
    \end{array}
\]

We first observe that if $\oplus$ is associative, then a sufficient condition for $L'$ to be equal to $L''$ is that 
\[
    \begin{array}l
        (\sigma_1,\sigma_2) \in R (E_1,E_2)  \land (\sigma_1\oplus \sigma_2,\sigma_3) \in R (E_1 \cup E_2,E_3) \iff\\
(\sigma_2,\sigma_3) \in R (E_2,E_3)  \land (\sigma_1, \sigma_2 \oplus \sigma_3) \in R (E_1,E_2\cup E_3)
    \end{array}
\]
    for every $(\sigma_1, \sigma_2, \sigma_3) \in L_1\times L_2 \times L_3$.
   Assuming that  $R$ satisfies the following constraint for every $(\sigma_1, \sigma_2, \sigma_3) \in L_1\times L_2 \times L_3$:
   \[
    \begin{array}l
        (\sigma_1,\sigma_2) \in R (E_1,E_2)  \land (\sigma_1\oplus \sigma_2,\sigma_3) \in R (E_1 \cup E_2,E_3) \iff\\
(\sigma_2,\sigma_3) \in R (E_2,E_3)  \land (\sigma_1, \sigma_2 \oplus \sigma_3) \in R (E_1,E_2\cup E_3)
    \end{array}
\]
    and suppose that $C_{\sigma_1} \times_{(R, \oplus)} (C_{\sigma_2} \times_{(R, \oplus)} C_{\sigma_3}) = (C_{\sigma_1} \times_{(R, \oplus)} C_{\sigma_2}) \times_{(R, \oplus)} C_{\sigma_3}$, then $\sigma_1 \oplus (\sigma_2 \oplus \sigma_3) = (\sigma_1 \oplus \sigma_2) \oplus \sigma_3$ which implies that $\oplus$ is associative. 
    Thus, if $R$ satisfies the constraint as written above, $\times_{(R,\oplus)}$ is associative if and only if $\oplus$ is associative.
    \medskip

    \underline{Idempotency.}
    We show that if for all $E \subseteq \E$, and $\sigma,\tau \in \TES{E}{}$, we have that $(\sigma, \tau) \in R (E, E)$ implies $\sigma = \tau$, then $\times_{(R,\oplus)}$ is idempotent if and only if $\oplus$ is idempotent.
    We first observe that, given a component $C = (E,L)$, the component $C \times_{(R,\oplus)} C = (E,L')$ has the same set of events, $E$.

    We show that $(\sigma_1,\sigma_2) \in R (E,E) \implies \sigma_1 = \sigma_2$ and $\oplus$ idempotent is a sufficient condition for having $L' = L$.  
    Indeed,
    \begin{align*}
        L' =\ &\{ \sigma_1 \oplus \sigma_2 \mid \sigma_1,\sigma_2 \in L,\ (\sigma_1, \sigma_2) \in R (E, E)  \}\\
         =\ &\{ \sigma_1 \oplus \sigma_1 \mid \sigma_1 \in L \}\\
                            =\ & L
    \end{align*}

    Similar to the previous cases, if for all $E \subseteq \E$, and $\sigma,\tau \in \TES{E}{}$, we have that $(\sigma, \tau) \in R (E, E)$ implies $\sigma = \tau$, then $\times_{(R,\oplus)}$ is idempotent if and only if $\oplus$ is idempotent.
    The proof is similar to the case of commutativity, i.e. showing that $C_{\sigma} \times_{(R, \oplus)} C_{\sigma}  = C_{\sigma}$ implies that $\sigma \oplus \sigma = \sigma$ for all $\sigma \in \TES{E}{}$. 
\end{proof}

The generality of our formalism allows exploration of other kinds of operations on components, such as division.
Intuitively, the division of a component $C_1$ by a component $C_2$ yields a component $C_3$ whose behavior contains all \tes s that can compose with \tes s in the behavior of $C_2$ to yield the \tes s in the behavior of $C_1$.
\begin{definition}[Division]
    \label{def:div}
    Let $R$ be a composability relation on \tes s, and $\oplus$ a composition function on \tes s.
    The division of two components $C_1 =(E_1, L_1)$ and $C_2 = (E_2, L_2)$ under $R$ and $\oplus$, denoted as $C_1 /_{(R,\oplus)} C_2$, is the component $C = (E_1, L)$ such that: 
    \[ L = \{ \sigma \in \TES{E_1}{\Rp} \mid \exists \sigma_2 \in L_2.\ (\sigma, \sigma_2) \in R (E_1, E_2) \land \sigma \oplus \sigma_2 \in L_1 \}\]
\end{definition}
If the dividend is $C_1 = C'_1 \times_{(R,\oplus)} C'_2$, and the divisor is an operand of the product, e.g., $C_2 = C'_2$, then the behavior of the result of the division, $C$, contains all \tes s in the behavior of the other operand (i.e., $C'_1$) composable with a \tes\ in the behavior of $C_2$.
\begin{lemma}\label{lemma:division}
    Let $C_1 = (E_1, L_1)$ and $C_2= (E_2, L_2)$ be two components.  
    Let $(C_1 \times_{(R, \oplus)} C_2) /_{(R, \oplus)} C_2 = (E_3,L_3)$, with $(R,\oplus)$ a pair of a  composability relation and a composition function on \tes s.
Then, \[\{\sigma_1 \in L_1 \mid \exists \sigma_2 \in L_2.\ (\sigma_1,\sigma_2) \in R(E_1 \cup E_2, E_2) \cap R(E_1, E_2)\} \subseteq L_3\]
\end{lemma}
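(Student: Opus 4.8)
The plan is a straightforward definition chase. First I would unfold the two operations: write $C_1 \times_{(R,\oplus)} C_2 = (E_1 \cup E_2, L_{12})$ where $L_{12} = \{\tau_1 \oplus \tau_2 \mid \tau_1 \in L_1,\ \tau_2 \in L_2,\ (\tau_1,\tau_2) \in R(E_1,E_2)\}$ by Definition~\ref{def:prod}, and then by Definition~\ref{def:div} the quotient $(C_1 \times_{(R,\oplus)} C_2)/_{(R,\oplus)} C_2$ has interface $E_3 = E_1 \cup E_2$ and behavior $L_3 = \{\sigma \in \TES{E_1 \cup E_2}{\Rp} \mid \exists \sigma_2 \in L_2.\ (\sigma,\sigma_2) \in R(E_1 \cup E_2, E_2)\ \land\ \sigma \oplus \sigma_2 \in L_{12}\}$.

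Next I would take an arbitrary $\sigma_1$ in the left-hand set, i.e. $\sigma_1 \in L_1$ together with a witness $\sigma_2 \in L_2$ such that $(\sigma_1,\sigma_2) \in R(E_1 \cup E_2, E_2) \cap R(E_1, E_2)$, and show that this same $\sigma_2$ certifies $\sigma_1 \in L_3$. This requires checking three things. (i) $\sigma_1 \in \TES{E_1 \cup E_2}{\Rp}$: immediate, since $\sigma_1 \in L_1 \subseteq \TES{E_1}{\Rp}$ and $E_1 \subseteq E_1 \cup E_2$, so every observable of $\sigma_1$ is a subset of $E_1 \cup E_2$. (ii) $(\sigma_1,\sigma_2) \in R(E_1 \cup E_2, E_2)$: this is exactly the first conjunct of the hypothesis. (iii) $\sigma_1 \oplus \sigma_2 \in L_{12}$: choose $\tau_1 := \sigma_1 \in L_1$ and $\tau_2 := \sigma_2 \in L_2$; since $(\sigma_1,\sigma_2) \in R(E_1,E_2)$ is the second conjunct of the hypothesis, the pair $(\tau_1,\tau_2)$ meets the defining condition of $L_{12}$, hence $\sigma_1 \oplus \sigma_2 = \tau_1 \oplus \tau_2 \in L_{12}$. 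Combining (i)--(iii) yields $\sigma_1 \in L_3$, which is the claimed inclusion.

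Since the argument only instantiates the existentials with the obvious witnesses, there is no real obstacle; the one thing to be careful about is the bookkeeping of interfaces. The divisor has interface $E_2$ but the dividend $C_1 \times_{(R,\oplus)} C_2$ has interface $E_1 \cup E_2$, which is precisely why membership in $L_3$ asks for composability under $R(E_1 \cup E_2, E_2)$ rather than $R(E_1, E_2)$, while recovering $\sigma_1 \oplus \sigma_2 \in L_{12}$ still needs composability under $R(E_1,E_2)$ --- so both relations appearing in the stated intersection are genuinely used. I would also remark (not needed for the lemma, but worth a sentence) that the reverse inclusion fails in general: a $\sigma \in L_3$ need not even be a TES over $E_1$, and when it is it need not belong to $L_1$, which is why the result is one-sided.
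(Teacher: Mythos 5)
Your proof is correct and takes essentially the same route as the paper's: unfold the product and the division, then certify $\sigma_1 \in L_3$ by instantiating the existentials in the definition of $L_3$ with $\sigma_1$ and its witness $\sigma_2$, using one conjunct of the hypothesis for composability at interface $(E_1\cup E_2, E_2)$ and the other for membership of $\sigma_1\oplus\sigma_2$ in the product's behavior. Your additional check that $\sigma_1 \in \TES{E_1\cup E_2}{\Rp}$ and the closing remark on why the reverse inclusion fails are both sound and, if anything, slightly more careful than the paper's own write-up.
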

\begin{proof}
    Let $(R,\oplus)$ be a pair of a composability relation and a composition function on \tes s.
    For any components $C_1 = (E_1, L_1)$ and $C_2= (E_2, L_2)$, let $(C_1 \times_{(R, \oplus)} C_2) = (E_1 \cup E_2,L')$ and $(C_1 \times_{(R, \oplus)} C_2) /_{(R, \oplus)} C_2 = (E,L)$.
    The set $L$ is such that, for any $\sigma \in \TES{E_1 \cup E_2}{}$:
    \[ \sigma \in L \iff \exists \sigma_2 \in L_2, \tau \in L'.\ (\sigma, \sigma_2) \in R (E_1 \cup E_2, E_2) \land \tau = (\sigma \oplus \sigma_2) \]
    By construction, the existence of $\tau \in L'$ is equivalent to the existence of $\tau_1 \in L_1$ and $\tau_2 \in L_2$ such that 
\[ (\tau_1, \tau_2) \in R (E_1, E_2) \land \tau = (\tau_1 \oplus \tau_2) \]
    Thus, for any $\sigma \in \TES{E_1 \cup E_2}{}$:
    \setcounter{equation}{0}
    \begin{align*}
        \hspace{-1em}\sigma \in L &\iff \exists \tau_1 \in L_1, \tau_2 \in L_2, \sigma_2\in L_2.\\
                                  & \qquad\ \ \ (\sigma, \sigma_2) \in R (E_1 \cup E_2, E_2) \land \sigma\oplus \sigma_2 = \tau_1 \oplus \tau_2 \land (\tau_1, \tau_2) \in R (E_1, E_2)\\
                                  &\impliedby \sigma \in L_1 \land 
                               \exists \sigma_2\in L_2.
                               (\sigma, \sigma_2) \in R(E_1 \cup E_2, E_2) \land (\sigma, \sigma_2) \in R (E_1, E_2)
    \end{align*}
    The last implication concludes the proof that 
\[\{\sigma \in L_1 \mid 
                               \exists \sigma_2\in L_2.
                       (\sigma, \sigma_2) \in R(E_1 \cup E_2, E_2) \land (\sigma, \sigma_2) \in R (E_1, E_2)\} \subseteq L\]
\end{proof}
\begin{corollary}
    Let $C_1 = (E_1, L_1)$ and $C_2= (E_2, L_2)$ be two components.  
    Let $(C_1 \times_{(\top, \oplus)} C_2) /_{(\top, \oplus)} C_2 = (E_3,L_3)$, with $\oplus$ a composition function on \tes s (see Definition~\ref{def:cross} for $\top$).
    Then $L_1 \subseteq L_3$.
\end{corollary}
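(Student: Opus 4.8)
The plan is to obtain this as a direct specialization of Lemma~\ref{lemma:division}, instantiating the composability relation with the total relation $\top$. Recall from Definition~\ref{def:cross} that $\top$ imposes no constraint, i.e.\ $\top(E, E') = \TES{E}{\Rp} \times \TES{E'}{\Rp}$ for all $E, E' \subseteq \E$; in particular $(\sigma, \tau) \in \top(E, E')$ holds for every $\sigma \in \TES{E}{\Rp}$ and every $\tau \in \TES{E'}{\Rp}$.

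First I would apply Lemma~\ref{lemma:division} to $C_1$ and $C_2$ with the pair $(\top, \oplus)$, obtaining
\[
\{\sigma_1 \in L_1 \mid \exists \sigma_2 \in L_2.\ (\sigma_1, \sigma_2) \in \top(E_1 \cup E_2, E_2) \cap \top(E_1, E_2)\} \subseteq L_3 .
\]
Then I would simplify the left-hand set. Every $\sigma_1 \in L_1$ lies in $\TES{E_1}{\Rp}$, hence also in $\TES{E_1 \cup E_2}{\Rp}$; since $\top$ is total, for any such $\sigma_1$ and any $\sigma_2 \in \TES{E_2}{\Rp}$ the pair $(\sigma_1, \sigma_2)$ belongs to $\top(E_1 \cup E_2, E_2) \cap \top(E_1, E_2) = \TES{E_1}{\Rp} \times \TES{E_2}{\Rp}$. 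Consequently the membership conditions in the displayed set are vacuous and the set equals $\{\sigma_1 \in L_1 \mid \exists \sigma_2 \in L_2\}$, which is exactly $L_1$ provided $L_2 \neq \emptyset$. The inclusion $L_1 \subseteq L_3$ then follows.

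There is essentially no obstacle here; the corollary is an immediate consequence of Lemma~\ref{lemma:division} together with the fact that $\top$ contributes no constraint. The one point that warrants a remark is the degenerate case $L_2 = \emptyset$, in which both $C_1 \times_{(\top,\oplus)} C_2$ and the subsequent division have empty behavior, so the statement holds only when $L_1 = \emptyset$ as well; I would dispatch this either by invoking the standing assumption that component behaviors are nonempty, or by reading the corollary under the implicit hypothesis $L_2 \neq \emptyset$.
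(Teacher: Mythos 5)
Your proof is correct and follows exactly the same route as the paper's: instantiate Lemma~\ref{lemma:division} with the total relation $\top$ and observe that the composability conditions become vacuous, so the left-hand set collapses to $L_1$. Your remark about the degenerate case $L_2 = \emptyset$ (where the set is empty rather than $L_1$) is a genuine caveat that the paper's own one-line proof silently glosses over, so your handling of it is a small but real improvement.
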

\begin{proof}
    Using Lemma~\ref{lemma:division},
\[\{\sigma_1 \in L_1 \mid \exists \sigma_2 \in L_2.\ (\sigma_1,\sigma_2) \in \top\} 
 \subseteq L_3 
\implies L_1 
 \subseteq L_3 
 \]
\end{proof}
    The results of Section~\ref{subsection:composition} show a wide variety of product operations that our semantic model offers. 
    Given a fixed set of components, one can change how components interact by choosing different composability relations and composition functions. 
    We also give some sufficient conditions for a product operation on components to be associative, commutative, and idempotent, in terms of the algebraic properties of its composability relation and its composition function. Such results are useful to simplify and to prove equivalent two component expressions.

\subsection{A co-inductive construction for composition operators}\label{subsection:construction}
In Section~\ref{subsection:composition}, we presented a general framework to design components in interaction.
As a result, the same set of components, under different forms of interaction, leads to the creation of alternative systems.
The separation of the composability constraint and the composition operation gives complete control to design different interaction protocols among components.

In this section, we provide a co-inductive construction for composability relations on \tes s.
We show how constraints on observations can be \emph{lifted} to constraints on \tes s, and give weaker conditions for Lemma~\ref{lemma:prod} to hold.
    The intuition for such construction is that, in some cases, the condition for two \tes s to be composable depends only on a composability relation on observations. An example of composability constraint for a robot with its battery and a field enforces that each \emph{move} event \emph{discharges} the battery and \emph{changes} the state of the field. As a result, every \emph{move} event observed by the robot must coincide with a \emph{discharge} event observed by the battery and a change of state observed by the field. The lifting of such composability relation on observations to a constraint on \tes s is defined co-inductively.

\begin{definition}[Composability relation on observations]
    A composability relation on observations is a parametrized relation $\kappa$ such that for all pairs
    $(E_1, E_2) \in \Po(\E) \times \Po(\E) $, we have $\kappa(E_1,E_2) \subseteq (\Po(E_1) \times \Rp) \times (\Po(E_2) \times \Rp)$
\end{definition}
For two composability relations $\kappa_1, \kappa_2$, their intersection or union, written $\kappa_1 \cap \kappa_2$ and $\kappa_1 \cup \kappa_2$ respectively, is defined, for any $E_1, E_2, E_3\subseteq \E$, as $(\kappa_1 \cap \kappa_2) (E_1, E_2) = \kappa_1(E_1, E_2) \cap \kappa_2(E_1, E_2)$ and $(\kappa_1 \cup \kappa_2)(E_1, E_2) = \kappa_1(E_1, E_2) \cup \kappa_2(E_1, E_2)$.
\begin{definition}[Lifting- composability relation]
    Let $\kappa$ be a composability relation on observations, and let $\Phi_\kappa(E_1, E_2) : (\Po(\TES{E_1}{\Rp}) \times \Po(\TES{E_2}{\Rp})) \rightarrow (\Po(\TES{E_1}{\Rp}) \times \Po(\TES{E_2}{\Rp}))$ be such that, for any $\Rel \subseteq \TES{E_1}{} \times \TES{E_2}{}$:
 \[
\begin{array}{rl}
    \Phi_\kappa(E_1,E_2)(\Rel) = \{(\tau_1, \tau_2) \mid & (\tau_1(0), \tau_2(0)) \in \kappa(E_1,E_2) \land   \\
                                                         & (\pr_2(\tau_1)(0) =t_1 \land  \pr_2(\tau_2)(0) =t_2) \land \\
                                                         &(t_1 < t_2 \land (\tau_1',\tau_2) \in \Rel \lor t_2 < t_1 \land (\tau_1,\tau_2') \in \Rel \lor t_2 = t_1 \land (\tau_1',\tau_2') \in \Rel)\}
\end{array}
\]
The \emph{lifting} of $\kappa$ on \tes s, written $[\kappa]$, is the parametrized relation obtained by taking the greatest post fixed point of the function $\Phi_\kappa(E_1, E_2)$ for arbitrary pair $E_1, E_2 \subseteq \E$, i.e., the relation $[\kappa](E_1, E_2) = \bigcup_{\Rel \subseteq \TES{E_1}{} \times \TES{E_2}{}} \{ \Rel \mid \Rel \subseteq \Phi_\kappa(E_1, E_2)(\Rel)\}$.
\end{definition}

\begin{lemma}[Correctness of lifting]\label{lemma:correctness-lifting}
    For any $E_1, E_2\subseteq \E$, the function $\Phi_\kappa(E_1, E_2)$ is monotone, and therefore has a greatest post fixed point.
\end{lemma}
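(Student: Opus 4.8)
The plan is to invoke the Knaster--Tarski theorem: any monotone function on a complete lattice has a greatest (post-)fixed point. The lattice here is the powerset $\Po(\TES{E_1}{\Rp} \times \TES{E_2}{\Rp})$ ordered by inclusion, which is a complete lattice. So the entire content of the lemma reduces to showing that $\Phi_\kappa(E_1,E_2)$ is monotone with respect to $\subseteq$, i.e.\ that $\Rel \subseteq \Rel'$ implies $\Phi_\kappa(E_1,E_2)(\Rel) \subseteq \Phi_\kappa(E_1,E_2)(\Rel')$.

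First I would fix $E_1, E_2 \subseteq \E$ and two relations $\Rel \subseteq \Rel' \subseteq \TES{E_1}{\Rp} \times \TES{E_2}{\Rp}$, and take an arbitrary pair $(\tau_1,\tau_2) \in \Phi_\kappa(E_1,E_2)(\Rel)$. Unfolding the definition, $(\tau_1,\tau_2)$ satisfies the three conjuncts: the head observations are $\kappa$-composable, $t_1 = \pr_2(\tau_1)(0)$ and $t_2 = \pr_2(\tau_2)(0)$ are well defined, and exactly one of the three disjuncts holds --- either $t_1 < t_2$ and $(\tau_1',\tau_2)\in\Rel$, or $t_2 < t_1$ and $(\tau_1,\tau_2')\in\Rel$, or $t_1 = t_2$ and $(\tau_1',\tau_2')\in\Rel$. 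The first two conjuncts do not mention $\Rel$ at all, so they are preserved verbatim. For the third, in each of the three cases the relevant derived pair lies in $\Rel$, hence in $\Rel'$ by hypothesis; so the same disjunct holds with $\Rel$ replaced by $\Rel'$. Therefore $(\tau_1,\tau_2) \in \Phi_\kappa(E_1,E_2)(\Rel')$, establishing monotonicity.

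Having monotonicity, I would then cite Knaster--Tarski (or simply note that the union $\bigcup\{\Rel \mid \Rel \subseteq \Phi_\kappa(E_1,E_2)(\Rel)\}$ defining $[\kappa](E_1,E_2)$ is itself a post-fixed point, which is the standard one-line argument: if $\Rel \subseteq \Phi_\kappa(\Rel)$ then $\Rel \subseteq \Phi_\kappa(\Rel) \subseteq \Phi_\kappa(\bigcup\dots)$ by monotonicity, and taking the union over all such $\Rel$ gives $\bigcup\dots \subseteq \Phi_\kappa(\bigcup\dots)$; it is moreover the greatest such since it contains every post-fixed point by construction). Since the argument is uniform in $E_1, E_2$, the greatest post-fixed point exists for every pair, and $[\kappa]$ is well defined as a parametrized relation.

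There is no real obstacle here --- the only thing to be a little careful about is bookkeeping: the codomain of $\Phi_\kappa(E_1,E_2)$ as written is $\Po(\TES{E_1}{\Rp}) \times \Po(\TES{E_2}{\Rp})$, i.e.\ a pair of sets, whereas the fixed-point construction treats it as acting on $\Po(\TES{E_1}{\Rp} \times \TES{E_2}{\Rp})$, i.e.\ sets of pairs; I would note that we work with the latter (relations between the two TES sets) throughout, since that is what the definition of $[\kappa]$ and of a composability relation actually use, and that this powerset is a complete lattice under $\subseteq$. Everything else is the routine unfolding sketched above.
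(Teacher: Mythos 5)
Your proof is correct and follows essentially the same route as the paper's: unfold $\Phi_\kappa(E_1,E_2)$, observe that $\Rel$ occurs only positively in the final disjunction so inclusion is preserved, and conclude existence of the greatest post-fixed point by the standard fixed-point theorem. Your side remark about the stated codomain of $\Phi_\kappa$ (a pair of powersets versus the powerset of pairs actually used) is a fair observation about the definition, but does not change the argument.
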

\begin{proof}
Let $\kappa$ be a composability relation on observations, and let $E_1, E_2\subseteq \E$.
We recall that the function $\Phi_\kappa(E_1, E_2)$ is such that, for any $\Rel \subseteq \TES{E_1}{} \times \TES{E_2}{}$:
     \[
\begin{array}{rl}
    \Phi_\kappa(E_1,E_2)(\Rel) = \{(\tau_1, \tau_2) \mid & (\tau_1(0), \tau_2(0)) \in \kappa(E_1,E_2) \land   \\
                                                         & (\pr_2(\tau_1)(0) =t_1 \land  \pr_2(\tau_2)(0) =t_2) \land \\
                                                         &(t_1 < t_2 \land (\tau_1',\tau_2) \in \Rel \lor t_2 < t_1 \land (\tau_1,\tau_2') \in \Rel \lor  t_2 = t_1 \land (\tau_1',\tau_2') \in \Rel)\}
\end{array}
\]

Let $\Rel_1, \Rel_2 \subseteq \TES{E_1}{} \times \TES{E_2}{}$ be such that $\Rel_1 \subseteq \Rel_2$. We show that $\Phi_\kappa(E_1, E_2)(\Rel_1) \subseteq \Phi_\kappa(E_1, E_2)(\Rel_2)$.
For any $(\tau_1,\tau_2) \in \TES{E_1}{}\times \TES{E_2}{}$,
    \[
\begin{array}{rl}
    (\tau_1,\tau_2) \in \Phi_\kappa(E_1, E_2)(\Rel_1) \iff& (\tau_1(0), \tau_2(0)) \in \kappa(E_1,E_2) \land   \\
                                                         & (\pr_2(\tau_1)(0) =t_1 \land  \pr_2(\tau_2)(0) =t_2) \land \\
                                                         &(t_1 < t_2 \land (\tau_1',\tau_2) \in \Rel_1 \lor t_2 < t_1 \land (\tau_1,\tau_2') \in \Rel_1 \lor  \\
                                                         &\ t_2 = t_1 \land (\tau_1',\tau_2') \in \Rel_1)\\
    \implies& (\tau_1(0), \tau_2(0)) \in \kappa(E_1,E_2) \land   \\
                                                         & (\pr_2(\tau_1)(0) =t_1 \land  \pr_2(\tau_2)(0) =t_2) \land \\
                                                         &(t_1 < t_2 \land (\tau_1',\tau_2) \in \Rel_2 \lor t_2 < t_1 \land (\tau_1,\tau_2') \in \Rel_2 \lor  \\
                                                         &\ t_2 = t_1 \land (\tau_1',\tau_2') \in \Rel_2)\\
    \implies & (\tau_1, \tau_2) \in \Phi_\kappa(E_1, E_2)(\Rel_2)
\end{array}
\]

Therefore, $\Rel_1 \subseteq \Rel_2$ implies that $\Phi_\kappa(E_1, E_2)(\Rel_1) \subseteq \Phi_\kappa(E_1, E_2)(\Rel_2)$, and we conclude that $\Phi_\kappa(E_1, E_2)$ is monotonic. By the greatest fixed point theorem, $\Phi_\kappa(E_1, E_2)$ has a greatest fixed point defined as:
\[
[\kappa](E_1, E_2) = \bigcup \{ \Rel \mid \Rel \subseteq \Phi_\kappa(E_1, E_2)(\Rel)\}
\]
\end{proof}
\begin{lemma}\label{lemma:lifting}
    If $\kappa$ is a composability relation on observations, then the lifting $[\kappa]$  is a composability relation on \tes s.
    Moreover, if $\kappa$ is symmetric (as in Definition~\ref{def:symmetry}), then $[\kappa]$ is symmetric.
\end{lemma}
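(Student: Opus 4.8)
The plan is to treat the two claims in turn, both resting on Lemma~\ref{lemma:correctness-lifting}. For the first claim, that $[\kappa]$ is a composability relation on \tes s, it suffices to observe that for every pair $E_1, E_2 \subseteq \E$ the relation $[\kappa](E_1, E_2)$ is, by Lemma~\ref{lemma:correctness-lifting}, the greatest post fixed point of the monotone operator $\Phi_\kappa(E_1, E_2)$ on the complete lattice of subsets of $\TES{E_1}{\Rp} \times \TES{E_2}{\Rp}$ ordered by inclusion. Being a union of such subsets, $[\kappa](E_1, E_2) \subseteq \TES{E_1}{\Rp} \times \TES{E_2}{\Rp}$, which is precisely the defining requirement for a composability relation on \tes s.

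For the second claim I would introduce the swap operator on relations: for $\Rel \subseteq \TES{E_1}{\Rp} \times \TES{E_2}{\Rp}$ write $\Rel^{-1} = \{(\tau_2,\tau_1) \mid (\tau_1,\tau_2) \in \Rel\} \subseteq \TES{E_2}{\Rp} \times \TES{E_1}{\Rp}$. The heart of the argument is the identity
\[
    \Phi_\kappa(E_2, E_1)(\Rel^{-1}) = \bigl(\Phi_\kappa(E_1, E_2)(\Rel)\bigr)^{-1},
\]
valid for all such $\Rel$ whenever $\kappa$ is symmetric. To prove it, I would unfold the definition of $\Phi_\kappa$ on an arbitrary pair $(\tau_2,\tau_1)$: the head condition $(\tau_2(0),\tau_1(0)) \in \kappa(E_2, E_1)$ is equivalent, by symmetry of $\kappa$, to $(\tau_1(0),\tau_2(0)) \in \kappa(E_1, E_2)$; and the three-way case split on the first time stamps corresponds term by term to the split in $\Phi_\kappa(E_1, E_2)(\Rel)$ applied to $(\tau_1,\tau_2)$, using the equivalences $(\tau_2',\tau_1) \in \Rel^{-1} \iff (\tau_1,\tau_2') \in \Rel$, $(\tau_2,\tau_1') \in \Rel^{-1} \iff (\tau_1',\tau_2) \in \Rel$, and $(\tau_2',\tau_1') \in \Rel^{-1} \iff (\tau_1',\tau_2') \in \Rel$. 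The only point requiring care is that the two ``advance the strictly earlier stream'' branches swap with each other under the exchange of $\tau_1$ and $\tau_2$, so that the $t_1 < t_2$ branch on one side matches the $t_2 < t_1$ branch on the other.

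With this identity in hand, symmetry of $[\kappa]$ follows by a routine coinductive argument. If $\Rel \subseteq \Phi_\kappa(E_1, E_2)(\Rel)$, then applying $(-)^{-1}$ (which is monotone for inclusion) and the identity above gives $\Rel^{-1} \subseteq \Phi_\kappa(E_2, E_1)(\Rel^{-1})$, so $\Rel^{-1}$ is a post fixed point of $\Phi_\kappa(E_2, E_1)$ and hence $\Rel^{-1} \subseteq [\kappa](E_2, E_1)$. Taking the union over all post fixed points $\Rel$ of $\Phi_\kappa(E_1, E_2)$ yields $\bigl([\kappa](E_1, E_2)\bigr)^{-1} \subseteq [\kappa](E_2, E_1)$; exchanging the roles of $E_1$ and $E_2$ and using $(\Rel^{-1})^{-1} = \Rel$ gives the reverse inclusion. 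Thus $\bigl([\kappa](E_1, E_2)\bigr)^{-1} = [\kappa](E_2, E_1)$, i.e. $(\tau_1,\tau_2) \in [\kappa](E_1,E_2) \iff (\tau_2,\tau_1) \in [\kappa](E_2,E_1)$, which is the condition of Definition~\ref{def:symmetry}.

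The main obstacle is essentially bookkeeping: getting the commutation identity for $\Phi_\kappa$ exactly right, in particular correctly pairing up the two asymmetric time branches of the case split under the swap. Everything else — monotonicity of $(-)^{-1}$, the post-fixed-point reasoning, and the existence of the greatest fixed point — is standard lattice theory, the last part already supplied by Lemma~\ref{lemma:correctness-lifting}.
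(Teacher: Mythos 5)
Your proposal is correct and follows essentially the same route as the paper's own proof: both parts rest on the commutation identity $\Phi_\kappa(E_2,E_1)(\Rel^{-1}) = \bigl(\Phi_\kappa(E_1,E_2)(\Rel)\bigr)^{-1}$ (the paper writes $\Rel^{-1}$ as $\overbar{\Rel}$), established by unfolding $\Phi_\kappa$ and using symmetry of $\kappa$ together with the swap of the two asymmetric time branches. Your post-fixed-point argument for transferring the identity to the greatest post fixed point is just a slightly more explicit rendering of the paper's chain of equivalences, so there is nothing to add.
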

\begin{proof}
    We first note that, given a composability relation $\kappa$ on observables, the lifting $[\kappa]$ is a composability relation on \tes s.
    Indeed, for any pair of interfaces $E_1, E_2 \subseteq \E$, any $(\sigma, \tau) \in [\kappa](E_1, E_2)$ is a pair in $\TES{E_1}{} \times \TES{E_2}{}$.

    If $\kappa$ is symmetric (as in Definition~\ref{def:symmetry}), we show that $[\kappa]$ is also symmetric.
    Given a set $\Rel \subseteq \TES{E_1}{} \times \TES{E_2}{}$, we use the notation $\overbar{\Rel}$ to denote the smallest set such that $(\sigma,\tau) \in \Rel \iff (\tau, \sigma) \in \overbar{\Rel}$.
    Let $E_1, E_2 \subseteq \E$.

    If $\kappa$ is symmetric, then for $\Rel \subseteq \TES{E_1}{}\times\TES{E_2}{}$,
    \[
\begin{array}{rl}
    \Phi_\kappa(E_1,E_2)(\Rel) = \{(\tau_1, \tau_2) \mid & (\tau_1(0), \tau_2(0)) \in \kappa(E_1,E_2) \land   \\
                                                         & (\pr_2(\tau_1)(0) =t_1 \land  \pr_2(\tau_2)(0) =t_2) \land \\
                                                         &(t_1 < t_2 \land (\tau_1',\tau_2) \in \Rel \lor t_2 < t_1 \land (\tau_1,\tau_2') \in \Rel \lor  \\
                                                         &\ t_2 = t_1 \land (\tau_1',\tau_2') \in \Rel)\}\\
= \{(\tau_1, \tau_2) \mid & ( \tau_2(0), \tau_1(0)) \in \kappa(E_2,E_1) \land   \\
                                                         & (\pr_2(\tau_1)(0) =t_1 \land  \pr_2(\tau_2)(0) =t_2) \land \\
                                                         &(t_1 < t_2 \land (\tau_1',\tau_2) \in \Rel \lor t_2 < t_1 \land (\tau_1,\tau_2') \in \Rel \lor  \\
                                                         &\ t_2 = t_1 \land (\tau_1',\tau_2') \in \Rel)\}\\
= \{(\tau_1, \tau_2) \mid & ( \tau_2(0), \tau_1(0)) \in \kappa(E_2,E_1) \land   \\
                                                         & (\pr_2(\tau_1)(0) =t_1 \land  \pr_2(\tau_2)(0) =t_2) \land \\
                                                         &(t_1 < t_2 \land (\tau_2, \tau_1') \in \overbar{\Rel} \lor t_2 < t_1 \land (\tau_2', \tau_1) \in \overbar{\Rel} \lor  \\
                                                         &\ t_2 = t_1 \land (\tau_2', \tau_1') \in \overbar{\Rel})\}\\
    = \{(\tau_1, \tau_2) \mid & ( \tau_2, \tau_1) \in \Phi_\kappa(E_2, E_1)(\overbar{\Rel})\} \hfill (1)
\end{array}
\]
which shows that $[\kappa]$ is symmetric since, for any $E_1, E_2 \subseteq \E$, 
$[\kappa](E_1, E_2) = 
\bigcup_{\Rel \subseteq \TES{E_1}{} \times \TES{E_2}{}} \{ \Rel \mid \Rel \subseteq \Phi_\kappa(E_1, E_2)(\Rel)\}$, and 
\[
\begin{array}{rl}
    (\sigma, \tau) \in [\kappa](E_1, E_2) & \iff \exists \Rel.\ (\sigma, \tau) \in \Rel \land \Rel \subseteq \Phi_\kappa(E_1, E_2)(\Rel) \\
                                          & \iff \exists \overbar{\Rel}.\ (\tau, \sigma) \in \overbar{\Rel} \land \overbar{\Rel} \subseteq \Phi_\kappa(E_2, E_1)(\overbar{\Rel}) \\
                                          & \iff (\tau, \sigma) \in [\kappa](E_2, E_1) 
\end{array}
\]
where the first equivalence is given by the fact that $[\kappa](E_1, E_2)$ is the greatest post fixed point of $\Phi_\kappa(E_1, E_2)$, the second equivalence is obtained from equality $(1)$, and the third equivalence is given by the fact that $[\kappa](E_2, E_1)$ is the greatest post fixed point.

\end{proof}

As a consequence of Lemma~\ref{lemma:lifting}, any composability relation on observations gives rise to a composability relation on \tes s.
   We define three composability relations on \tes s, where Definition~\ref{def:sync} and Definition~\ref{def:excl} are two examples that construct co-inductively the composability relation on \tes s from a composability relation on observations. 
For the following definitions, let $C_1 = (E_1, L_1)$ and $C_2 = (E_2, L_2)$ be two components, and $\oplus$ be a composition function on \tes s. 
We use $\sqcap \subseteq \Po(\E) \times \Po(\E)$ to range over relations on observables.

\begin{definition}[Free composition]\label{def:cross}
    We use $\top$ for the most permissive composability relation on \tes s such that, for any $E_1, E_2 \subseteq \E$ and any $\sigma\in\TES{E_1}{}$ and $\tau \in \TES{E_2}{}$, then $(\sigma,\tau) \in \top(E_1, E_2)$.
    \hfill $\blacksquare$
\end{definition}
    The behavior of component $C_1 \times_{(\top, \oplus)} C_2$ contains every \tes\ obtained from the composition under $\oplus$ of every pair $\sigma_1 \in L_1$ and $\sigma_2 \in L_2$ of \tes s. 
              This product does not impose any constraint on event occurrences of its operands.

\begin{definition}[Synchronous composition]\label{def:sync}
        Let $\sqcap \subseteq \Po(\E)^2$ be a relation on observables. 
        We say that two observations are synchronous under $\sqcap$ if, intuitively, the two following conditions hold:
        \begin{enumerate}
            \item every observable that can compose (under $\sqcap$) with another observable must occur simultaneously with one of its related observables; and
            \item only an observable that does not compose (under $\sqcap$) with any other observable can happen before another observable, i.e., at a strictly lower time.
        \end{enumerate}
        To formalize the conditions above, we use the independence relation $\ind_\sqcap(X,Y) = \forall x \subseteq X.\forall y \subseteq Y. (x,y) \not\in \sqcap$.

    The \emph{synchronous} composability relation on observations $\kappa^{\ssync}_{\sqcap}(E_1,E_2)$ is the smallest set such that:
    \begin{itemize}
        \item for all $(O_1, O_2) \in \Po(E_1)\times \Po(E_2)$ such that $(O_1, O_2) \in \sqcap$ and for all $(O_1', O_2') \in \Po(E_1)\times \Po(E_2)$ such that $\ind_\sqcap(O_1', E_2)$ and $\ind_\sqcap(E_1, O_2')$ then, for all time stamps $t$, $((O_1\cup O_1', t), (O_2 \cup O_2',t)) \in \kappa_\sqcap^{\ssync}(E_1, E_2)$.
        \item  if $\ind_\sqcap(O_1, E_2)$ then for all $O_2 \subseteq E_2$ and $t_1 \leq t_2$, $((O_1,t_1), (O_2,t_2)) \in \kappa^\ssync_\sqcap(E_1,E_2)$. Reciprocally, if $\ind_\sqcap(E_1, O_2)$ then for all $O_1 \subseteq E_1$ and $t_2 \leq t_1$, $((O_1,t_1), (O_2,t_2)) \in \kappa^\ssync_\sqcap(E_1,E_2)$;
    \end{itemize}
\end{definition}
\begin{example}
    Let $E_1 = \{a,b\}$ and $E_2 = \{c,d\}$ with $\sqcap = \{(\{a\},\{c\})\}$.
Thus, $((\{a\}, t_1), (\{d\},t_2)) \in \kappa^{\ssync}_{\sqcap}$ if and only if $t_2 < t_1$.
        Alternatively, we have $((\{a\}, t_1), (\{c\},t_2)) \in \kappa^{\ssync}_{\sqcap}$ if and only if $t_1 = t_2$.
    \hfill $\blacksquare$
\end{example}

The behavior of component $C_1 \times_{([\kappa^{\ssync}_{\sqcap}], \oplus)} C_2$ contains \tes s obtained from the composition under $\oplus$ of every pair $\sigma_1 \in L_1$ and $\sigma_2 \in L_2$ of \tes s that are related by the synchronous composability relation $[\kappa^{\ssync}_{\sqcap}]$ which, depending on $\sqcap$, may exclude some event occurrences unless they synchronize.
\footnote{If we let $\oplus$ be the element wise set union, define an event as a set of port assignments, and in the pair $([\kappa^{\ssync}_{\sqcap}], \oplus)$ let $\sqcap$ be true if and only if all common ports get the same value assigned, then this composition operator produces results similar to the composition operation in Reo~\cite{AR03}.}

\begin{definition}[Mutual exclusion]\label{def:excl}
    Let $\sqcap \subseteq \Po(\E)^2$ be a relation on observables. 
We define two observations to be mutually exclusive under the relation $\sqcap$ if no pair of observables in $\sqcap$ can be observed at the same time.
The mutually exclusive composability relation $\kappa^{\eexcl}_{\sqcap}$ on observations allows the composition of two observations $(O_1, t_1)$ and $(O_2, t_2)$, i.e., $((O_1, t_1), (O_2, t_2)) \in \kappa^{\eexcl}_{\sqcap}(E_1, E_2)$, if and only if
\[
    (t_1 \not = t_2) \lor   
    (t_1 = t_2 \land \neg(O_1 \sqcap O_2))
\]
\end{definition}
\begin{example}
    Let $E_1 = \{a,b\}$ and $E_2 = \{c,d\}$ with $\sqcap = \{(\{a\},\{c\})\}$.
Thus, $((\{a\}, t_1), (\{c\},t_2)) \not \in \kappa^{\eexcl}_{\sqcap}$ for any $t_2 = t_1$, since $\{a\}$ and $\{c\}$ are two mutually exclusive observables.
\end{example}

    The behavior of component $C_1 \times_{([\kappa^{\eexcl}_{\sqcap}], \oplus)} C_2$ contains \tes s resulting from the composition under $\oplus$ of every pair $\sigma_1 \in L_1$ and $\sigma_2 \in L_2$ of \tes s that are related by the mutual exclusion composability relation $[\kappa^{\eexcl}_{\sqcap}]$ which, depending on $\sqcap$, may exclude some simultaneous event occurrences.

    The lifting of composability relations distributes across the intersection.\footnote{The lifting does not distribute across the union, however.}
\begin{lemma}
    \label{lemma:intersection}
    For all composability relations $\kappa_1, \kappa_2$ and interfaces $E_1, E_2$:
\[
    [\kappa_1 \cap \kappa_2](E_1, E_2) = [\kappa_1](E_1, E_2) \cap [\kappa_2](E_1, E_2)
\]
\end{lemma}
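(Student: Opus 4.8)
The plan is to use the characterization of $[\kappa]$ as the greatest post-fixed point of the monotone operator $\Phi_\kappa(E_1,E_2)$ (Lemma~\ref{lemma:correctness-lifting}) and to prove the two inclusions separately, each time by exhibiting a suitable post-fixed point. Since the construction is uniform in the interfaces, it suffices to fix an arbitrary pair $E_1, E_2 \subseteq \E$ and argue for that pair.

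For the inclusion $[\kappa_1 \cap \kappa_2](E_1,E_2) \subseteq [\kappa_1](E_1,E_2) \cap [\kappa_2](E_1,E_2)$, first I would observe that $\Phi_{\kappa_1 \cap \kappa_2}(E_1,E_2)(\Rel) \subseteq \Phi_{\kappa_i}(E_1,E_2)(\Rel)$ for $i \in \{1,2\}$ and every $\Rel \subseteq \TES{E_1}{}\times\TES{E_2}{}$: the two operators have identical definitions except for the conjunct $(\tau_1(0),\tau_2(0)) \in (\kappa_1\cap\kappa_2)(E_1,E_2)$ versus $(\tau_1(0),\tau_2(0)) \in \kappa_i(E_1,E_2)$, and the former implies the latter. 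Consequently the set $[\kappa_1 \cap \kappa_2](E_1,E_2)$, which is a post-fixed point of $\Phi_{\kappa_1\cap\kappa_2}(E_1,E_2)$ by definition, is also a post-fixed point of $\Phi_{\kappa_i}(E_1,E_2)$, hence contained in the greatest such, namely $[\kappa_i](E_1,E_2)$. Taking $i=1$ and $i=2$ and intersecting gives the claimed inclusion.

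For the reverse inclusion I would set $\Rel := [\kappa_1](E_1,E_2) \cap [\kappa_2](E_1,E_2)$ and show $\Rel \subseteq \Phi_{\kappa_1\cap\kappa_2}(E_1,E_2)(\Rel)$; the greatest-post-fixed-point property then yields $\Rel \subseteq [\kappa_1\cap\kappa_2](E_1,E_2)$. Fix $(\tau_1,\tau_2) \in \Rel$. Because each $[\kappa_i](E_1,E_2)$ is a fixed point of $\Phi_{\kappa_i}(E_1,E_2)$ (Knaster--Tarski), from $(\tau_1,\tau_2) \in [\kappa_i](E_1,E_2)$ we obtain $(\tau_1(0),\tau_2(0)) \in \kappa_i(E_1,E_2)$ for both $i$, hence $(\tau_1(0),\tau_2(0)) \in (\kappa_1\cap\kappa_2)(E_1,E_2)$. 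The key point is that the time-stamps $t_1 = \pr_2(\tau_1)(0)$ and $t_2 = \pr_2(\tau_2)(0)$ that govern the trichotomy in the definition of $\Phi$ depend only on $(\tau_1,\tau_2)$ and not on the relation, so the same branch ($t_1<t_2$, $t_2<t_1$, or $t_1=t_2$) is selected for $\kappa_1$ and for $\kappa_2$. In that common branch the corresponding derivative pair — one of $(\tau_1',\tau_2)$, $(\tau_1,\tau_2')$, or $(\tau_1',\tau_2')$ — lies in $[\kappa_1](E_1,E_2)$ and in $[\kappa_2](E_1,E_2)$, hence in $\Rel$. This is exactly what is needed to conclude $(\tau_1,\tau_2) \in \Phi_{\kappa_1\cap\kappa_2}(E_1,E_2)(\Rel)$.

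I do not anticipate a genuine obstacle; the one point requiring care is the alignment of the coinductive step, i.e., the observation that $\Phi$ never branches on the relation itself but only on data ($t_1$ versus $t_2$) intrinsic to the pair of \tes s. This is precisely what makes the componentwise intersection of the two post-fixed points again a post-fixed point, and — as the footnote anticipates — it is also what fails for union, where the two coinductive derivations could descend along incompatible branches.
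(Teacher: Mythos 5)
Your proof is correct, and it rests on the same two pillars as the paper's: the characterization of $[\kappa]$ as the greatest post-fixed point of $\Phi_\kappa$, and the pointwise identity $\Phi_{\kappa_1}(E_1,E_2)(\Rel)\cap\Phi_{\kappa_2}(E_1,E_2)(\Rel)=\Phi_{\kappa_1\cap\kappa_2}(E_1,E_2)(\Rel)$. The presentation differs in a way worth noting. The paper writes a chain of set equalities whose first step asserts that the intersection of the unions of post-fixed points of $\Phi_{\kappa_1}$ and $\Phi_{\kappa_2}$ equals the union of their \emph{common} post-fixed points; the nontrivial inclusion there ($\subseteq$) is not a general lattice-theoretic fact for monotone operators, and the paper leaves it implicit. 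Your two-inclusion argument supplies exactly the missing justification: the easy direction follows because any post-fixed point of $\Phi_{\kappa_1\cap\kappa_2}$ is a post-fixed point of each $\Phi_{\kappa_i}$, and the hard direction follows because $\Phi$ branches only on the time-stamps $t_1,t_2$ of the pair of \tes s and not on $\kappa$ or on $\Rel$, so the two coinductive unfoldings select the same derivative pair, making $[\kappa_1](E_1,E_2)\cap[\kappa_2](E_1,E_2)$ itself a post-fixed point of $\Phi_{\kappa_1\cap\kappa_2}(E_1,E_2)$. Your closing remark correctly identifies this branch-alignment as the precise reason the analogous claim fails for union. In short: same route, but your version makes explicit the one step the paper elides.
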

\begin{proof}
    \begin{align*}
        [\kappa_1](E_1, E_2) \cap [\kappa_2](E_1, E_2) 
        &= \bigcup \{ \Rel \mid \Rel \subseteq \Phi_{\kappa_1}(E_1, E_2)(\Rel)\} \cap \bigcup \{ \Rel \mid \Rel \subseteq \Phi_{\kappa_2}(E_1, E_2)(\Rel)\} \\
        &= \bigcup \{ \Rel \mid \Rel \subseteq \Phi_{\kappa_1}(E_1, E_2)(\Rel) \mathit{\ and\ } \Rel \subseteq \Phi_{\kappa_2}(E_1, E_2)(\Rel)\} \\
        &= \bigcup \{ \Rel \mid \Rel \subseteq \Phi_{\kappa_1}(E_1, E_2)(\Rel) \cap \Phi_{\kappa_2}(E_1, E_2)(\Rel)\} \\
        &= \bigcup \{ \Rel \mid \Rel \subseteq \Phi_{\kappa_1 \cap \kappa_2}(E_1, E_2)(\Rel) \}\\
        &= [\kappa_1 \cap \kappa_2](E_1, E_2) \}
    \end{align*}
    since 
    \begin{align*}
    \Phi_{\kappa_1}(E_1, E_2)(\Rel) \cap \Phi_{\kappa_2}(E_1, E_2)(\Rel)
      &= \{(\tau_1, \tau_2) \mid  (\tau_1(0), \tau_2(0)) \in \kappa_1(E_1,E_2) \land  (\tau_1(0), \tau_2(0)) \in \kappa_2(E_1,E_2) \\
                                             &\hspace{4em}  (\pr_2(\tau_1)(0) =t_1 \land  \pr_2(\tau_2)(0) =t_2) \land \\
                                             &\hspace{4em} (t_1 < t_2 \land (\tau_1',\tau_2) \in \Rel \lor t_2 < t_1 \land (\tau_1,\tau_2') \in \Rel \lor  \\
                                             &\hspace{4em} \ t_2 = t_1 \land (\tau_1',\tau_2') \in \Rel)\} \\
                                             &= \{(\tau_1, \tau_2) \mid  (\tau_1(0), \tau_2(0)) \in \kappa_1(E_1,E_2) \cap \kappa_2(E_1,E_2) \\
                                             &\hspace{4em} (\pr_2(\tau_1)(0) =t_1 \land  \pr_2(\tau_2)(0) =t_2) \land \\
                                             &\hspace{4em}(t_1 < t_2 \land (\tau_1',\tau_2) \in \Rel \lor t_2 < t_1 \land (\tau_1,\tau_2') \in \Rel \lor  \\
                                             &\hspace{4em}\ t_2 = t_1 \land (\tau_1',\tau_2') \in \Rel)\} \\
                                             &= \Phi_{\kappa_1 \cap \kappa_2}(E_1, E_2)(\Rel)
    \end{align*}
\end{proof}

Similarly, we give a mechanism to lift a composition function on observables to a composition function on \tes s. 
Such lifting operation interleaves observations with different time stamps, and composes observations that occur at the same time.
\begin{definition}[Lifting - composition function]\label{def:lifting-composition}
    Let $+: \Po(\E) \times \Po(\E) \rightarrow \Po(\E)$ be a composition function on observables.
    The \emph{lifting} of $+$ to \tes s is $[+]: \TES{\E}{} \times \TES{\E}{} \rightarrow \TES{\E}{}$ such that, for $\sigma_i \in \TES{\E}{}$ where $\sigma_i(0) = (O_i, t_i)$ with $i \in \{1,2\}$:
    \[
    \sigma_1[+]\sigma_2 = 
    \begin{cases} 
        \langle \sigma_1(0)\rangle \cdot (\sigma_1'[+] \sigma_2)   &\mathit{if}\ t_1 < t_2 \\
        \langle \sigma_2(0)\rangle \cdot (\sigma_1[+] \sigma_2')     &\mathit{if}\ t_2 < t_1 \\
        \langle (O_1 + O_2, t_1)\rangle \cdot (\sigma_1'[+] \sigma_2') &  \mathit{otherwise}
    \end{cases} 
    \]
\end{definition}
Definition~\ref{def:lifting-composition} composes observations only if their time stamp is the same. Alternative definitions might consider time intervals instead of exact times.
\begin{example}[Intersection]
    For any two components $C_1 = (E_1, L_1)$ and $C_2 = (E_2, L_2)$, we define the intersection $C_1 \cap C_2$ to be the component $C_1 \times_{([\kappa_\sqcap^\ssync], [\cap])} C_2 = (E_1 \cup E_2,L)$
 where $\sqcap \subseteq E_1 \times E_2$ is such that $(O,O) \in \sqcap$ for all non-empty $O \subseteq E_1 \cup E_2$. 
 \label{ex:intersection}
    \hfill $\blacksquare$
\end{example}
\begin{example}[Join]
    For any two components $C_1 = (E_1, L_1)$ and $C_2 = (E_2, L_2)$, we define the join $C_1 \bowtie C_2$ to be the component $C_1 \times_{([\kappa_\sqcap^\ssync], [\cup])} C_2 = (E_1 \cup E_2,L)$
    where $\sqcap \subseteq E_1 \times E_2$ is such that $(O,O) \in \sqcap$ for all non-empty $O \subseteq E_1 \cap E_2$. 
 Note that the join of two components contains more behavior than the intersection of those two components: independent events (i.e., events not in $E_1 \cap E_2$) may occur freely in any observation.
 If $E_1 = E_2$, however, then $C_1 \bowtie C_2 = C_1 \cap C_2$.
 \label{ex:join}
    \hfill $\blacksquare$
\end{example}
\begin{lemma}
    \label{lemma:intersection-lift}
    Let $\kappa_1$ and $\kappa_2$ be two composability relations and $\times_{([\kappa_1 \cap \kappa_2],\oplus)}$ be a product on components.
    Then,
    \[
        C_1  \times_{([\kappa_1 \cap \kappa_2],\oplus)} C_2 = C_1  \times_{([\kappa_1] \cap [\kappa_2],\oplus)} C_2 =  (C_1  \times_{([\kappa_1 ],\oplus)} C_2) \cap  (C_1  \times_{([\kappa_2],\oplus)} C_2)
    \]
\end{lemma}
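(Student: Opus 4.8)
The plan is to prove the two equalities in turn. \textbf{The first equality} is a direct corollary of Lemma~\ref{lemma:intersection}: both $C_1 \times_{([\kappa_1 \cap \kappa_2],\oplus)} C_2$ and $C_1 \times_{([\kappa_1] \cap [\kappa_2],\oplus)} C_2$ have interface $E_1 \cup E_2$, and since $[\kappa_1 \cap \kappa_2](E_1,E_2) = [\kappa_1](E_1,E_2) \cap [\kappa_2](E_1,E_2)$, substituting this identity into the behavior clause of Definition~\ref{def:prod} shows the two behaviors are literally the same set; hence the two components coincide. This part requires no real work.

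\textbf{For the second equality}, I would write $E = E_1 \cup E_2$ and $D_i = C_1 \times_{([\kappa_i],\oplus)} C_2 = (E, M_i)$, so that $M_i = \{\sigma_1 \oplus \sigma_2 \mid \sigma_1 \in L_1,\ \sigma_2 \in L_2,\ (\sigma_1,\sigma_2) \in [\kappa_i](E_1,E_2)\}$. Both $D_1 \cap D_2$ and $C_1 \times_{([\kappa_1] \cap [\kappa_2],\oplus)} C_2$ have interface $E$, so it remains to match their behaviors. First I would establish a lemma that intersecting two components with the \emph{same} interface $E$ merely intersects their behaviors. By Example~\ref{ex:intersection}, $D_1 \cap D_2 = D_1 \times_{([\kappa^{\ssync}_{\sqcap}],[\cap])} D_2$ where $\sqcap$ relates every non-empty $O \subseteq E$ to itself. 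I would prove that $[\kappa^{\ssync}_{\sqcap}](E,E)$ is the identity relation on $\TES{E}{}$: for $\supseteq$, one checks that $((O,t),(O,t)) \in \kappa^{\ssync}_{\sqcap}(E,E)$ (first clause of Definition~\ref{def:sync}, with the $\sqcap$-related part equal to $O$ and both independent parts empty), so the diagonal is a post-fixed point of $\Phi_{\kappa^{\ssync}_{\sqcap}}(E,E)$ and therefore lies below its greatest post-fixed point; for $\subseteq$, one observes that no non-empty $O \subseteq E$ satisfies $\ind_{\sqcap}(O,E)$, which rules out the \emph{waiting} clause of Definition~\ref{def:sync} as well as any non-empty independent part, and then argues coinductively that two related streams carry equal observables at equal times. (The degenerate case of empty observables would be dispatched separately, e.g.\ by taking observables to be non-empty.) Since $[\cap]$ applied to a stream and itself returns that stream (third case of Definition~\ref{def:lifting-composition}, using $O \cap O = O$), this yields that the behavior of $D_1 \cap D_2$ is exactly $M_1 \cap M_2$.

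It then remains to show
\[
M_1 \cap M_2 \;=\; \{\sigma_1 \oplus \sigma_2 \mid \sigma_1 \in L_1,\ \sigma_2 \in L_2,\ (\sigma_1,\sigma_2) \in [\kappa_1](E_1,E_2) \cap [\kappa_2](E_1,E_2)\},
\]
and this is the step I expect to be the main obstacle. The inclusion $\supseteq$ is immediate: a pair $(\sigma_1,\sigma_2)$ composable under both $[\kappa_1]$ and $[\kappa_2]$ witnesses that its composite lies in both $M_1$ and $M_2$. The inclusion $\subseteq$ is delicate, because a TES $\rho \in M_1 \cap M_2$ comes with a decomposition $\rho = \sigma_1 \oplus \sigma_2$ composable under $[\kappa_1]$ and, a priori, an unrelated decomposition $\rho = \sigma_1' \oplus \sigma_2'$ composable under $[\kappa_2]$, and one must produce a single decomposition composable under both. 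I would attack this via a decomposition-uniqueness argument: showing that $\rho$, together with the interfaces $E_1$ and $E_2$, determines the two constituent streams and their time alignment (as it does for the composition functions used in the paper, e.g.\ the lifting of set union when the operand interfaces are disjoint), so that the two decompositions necessarily coincide and $(\sigma_1,\sigma_2) \in [\kappa_1] \cap [\kappa_2]$. Absent such structure, this inclusion should carry an extra hypothesis on $\oplus$, namely injectivity on composable pairs.
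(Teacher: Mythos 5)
Your first equality is exactly the paper's argument (substitute Lemma~\ref{lemma:intersection} into the behavior clause of Definition~\ref{def:prod}), and your handling of the component intersection $D_1 \cap D_2$ — showing that $[\kappa^{\ssync}_{\sqcap}](E,E)$ collapses to the diagonal (modulo empty observables) and that $[\cap]$ is the identity on the diagonal — is the same reasoning the paper carries out in Lemma~\ref{lemma:satisfiability}; so up to that point you are on the paper's route, just more explicitly.

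The delicate inclusion you single out at the end is, however, a genuine gap, and it is one the paper does not address: its proof of this lemma is precisely the chain of set equalities you wrote, with the final step ``$=L'$'' asserted without justification. Your diagnosis is correct. For the inclusion $M_1 \cap M_2 \subseteq \{\sigma_1 \oplus \sigma_2 \mid (\sigma_1,\sigma_2) \in [\kappa_1](E_1,E_2) \cap [\kappa_2](E_1,E_2)\}$ one needs a single decomposition of $\rho$ witnessing composability under both relations, and since Definition of a composition function admits an arbitrary map $\oplus: \TES{\E}{} \times \TES{\E}{} \to \TES{\E}{}$, the unadorned statement can fail: take $\oplus$ constant with value $\rho_0$, and $\kappa_1, \kappa_2$ with $[\kappa_1](E_1,E_2)$ and $[\kappa_2](E_1,E_2)$ each non-empty but disjoint; then $M_1 = M_2 = \{\rho_0\}$ while the left-hand behavior is empty. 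So the extra hypothesis you propose — injectivity of $\oplus$ on composable pairs, or more precisely that $\rho$ together with $E_1, E_2$ determines its decomposition — is not merely a convenience but is needed for the second equality to hold in general. It is satisfied by the lifted composition functions actually used in the paper's examples (e.g.\ $[\cup]$ with recoverable operand projections), which is presumably why the authors elided it, but as stated the lemma quantifies over all $\oplus$. You should either add the hypothesis explicitly or restrict the second equality to decomposition-unique composition functions; with that amendment your proof is complete and, on this point, more careful than the paper's.
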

\begin{proof}
    Let $C_1  \times_{[\kappa_1 \cap \kappa_2],\oplus} C_2 = (E,L)$ and $(C_1  \times_{([\kappa_1 ],\oplus)} C_2) \cap  (C_1  \times_{([\kappa_2],\oplus)} C_2) = (E',L')$. We have $E = E_1 \cup E_2 = E'$. We show $L = L'$.
    \begin{align*}
        L&= \{\sigma_1 \oplus \sigma_2 \mid \sigma_1 \in L_1, \sigma_2 \in L_2, (\sigma_1, \sigma_2) \in [\kappa_1 \cap \kappa_2](E_1, E_2) \} \\
         &= \{\sigma_1 \oplus \sigma_2 \mid \sigma_1 \in L_1, \sigma_2 \in L_2, (\sigma_1, \sigma_2) \in [\kappa_1](E_1, E_2) \cap [\kappa_2](E_1, E_2) \} \\
         &= L'  
    \end{align*}
\end{proof}
\begin{example}\label{ex:mix}
    Composability relations as defined in Definition~\ref{def:sync} and Definition~\ref{def:excl} can be combined to form new relations, and therefore new products. 
The behavior of component $C_1 \times_{([\kappa^{\ssync}_{\sqcap} \cap \kappa^{\eexcl}_{\sqcap}]), \oplus)} C_2$ contains all \tes s that are in the behavior of both $C_1 \times_{([\kappa^{\ssync}_{\sqcap}], \oplus)} C_2$ and $C_1 \times_{([\kappa^{\eexcl}_{\sqcap}], \oplus)} C_2$, which excludes observations containing an occurrence of at least one event related by $\sqcap$.
    \hfill$\blacksquare$
\end{example}

\newcommand{\alt}{\mathit{alt}}
\newcommand{\intl}{\mathit{intl}}
\renewcommand{\excl}{\mathit{excl}}
    To show the expressiveness of our model, we consider its application to the Reo coordination language~\cite{AR03} in a series of example (Examples~\ref{ex:reo-comp},~\ref{ex:reo-prod}).
    Reo is a coordination language for components, and makes use of shared ports between components to synchronize their observables. 
    If two components share a port, then they ``agree'' on the value of the data that flow through that port.
    A component has a set of ports as interface, and defines a relation over its port values over time.
    Composition of two components is taking the conjunction of their relations.
\begin{example}
    [Reo components]
    \label{ex:reo-comp}
    Let $P$ be the universal set of port names in Reo, and $V(p)$ the set of values over port $p\in P$.
    Consider the set of events $\E = \{ (p, v) \mid p \in P,\ v \in V(p)\}$ that contains all pairs of a port name and a value, and $E_a\subseteq \E$ be the set of all events for port $a$ only, i.e., $E_a = \{(a,v) \mid v \in V(a)\}$.
    A port $a \in P$ corresponds to the component $C_a = (E_a, \TES{E_a}{})$.\\
    For any ports $a, b$ and function $f: V(a) \to V(b)$, we introduce a composability relation on observables $\sqcap_{(a,b,f)} \subseteq \Po(E_a) \times \Po(E_b)$ such that, for all $(O_1, O_2) \in \sqcap_{(a,b,f)}$, there exists $v \in V(a)$ with $(a,v) \in O_1$ if and only if $(b,f(v)) \in O_2$.
    Intuitively, such composability relation relate two observables $O_1$ and $O_2$ such that the value of $b$ in $O_2$ coincides with the image of the value of $a$ in $O_1$ under $f$.
    We omit $f$ when $f$ is the identity, and write $\sqcap_{(a,b)}$.
    We use $\cup$ as composition function on observables and write $\times_R$ as a shorthand for $\times_{(R,[\cup])}$. 

    The composability relation $\kappa^{intl}_{\sqcap}$ interleaves observations while restricting every occurrence of observable $O_1$ to be related to a later observable $O_2$, i.e., $((O_1, t_1), (O_2,t_2)) \in \kappa^{intl}_{\sqcap}$ if and only if 
    \[
        t_2 < t_1 \lor (t_1 < t_2\land (O_1, O_2) \in \sqcap)
    \]
    We define the alternating component: $M = (E_m, L_m)$ where $E_m = \{(m,0),(m,1)\}$ and $\sigma \in L_m\subseteq \TES{E_m}{}$ if and only if, for all $i \in \N$, $\sigma(2i) = (\{(m,0)\},t_i)$ and $\sigma(2i+1)= (\{(m,1)\},t_{i+1})$, which consists of a stream of alternating bits.
    Then, fixing the function $f_0$ such that $f_0(v)= 0$ for all $v \in V(a)$, the product $P_a \times_{\kappa^\ssync_{\sqcap_{(a,m,f_0)}}} M$ represents the component that synchronizes all values at port $a$ with the value $0$ at port $m$. Reciprocally, fixing the function $f_1$ such that $f_1(v) = 1$ for all $v \in V(b)$, the product $P_b \times_{\kappa^\ssync_{\sqcap_{(b,m,f_1)}}} M$ represents the component that synchronizes all values at port $b$ with the value $1$ at port $m$.
    \\
    We define the following Reo components:
    \begin{align*}
        \mathit{Sync}(a,b)          &= P_a \times_{[\kappa^{\ssync}_{\sqcap_{(a,b)}}]} P_b \\
        \mathit{Syncdrain}(a,b)     &= P_a \times_{[\kappa^{\ssync}_{\sqcap}]} P_b \textit{ with } \sqcap = \Po(E_a) \times \Po(E_b) \\
        \mathit{Fifo}(a,b,M)          &= (P_a \times_{[\kappa^{\ssync}_{\sqcap_{(a,m,f_0)}}]} M) \times_{[\kappa^{\intl}_{\sqcap_{(a,b)}}] \cup [\kappa^{\ssync}_{\sqcap_{(m,m)}}]} (P_b\times_{[\kappa^{\ssync}_{\sqcap_{(b,m,f_1)}}]} M)\\
        \mathit{Merger}(a,b,c)      &= (P_a \times_{[\kappa^{\excl}_{\sqcap_{(a,b)}}]} P_b) \times_{[\kappa^{\ssync}_{\sqcap_{(a,c)} \cup \sqcap_{(b,c)}}]} P_c 
    \end{align*}
    The $\mathit{Sync}(a,b)$ component is such that the data observed at port $a$ and $b$ are equal and synchronous, i.e., occurs at the same time. The $\mathit{Syncdrain}(a,b)$ component ensures that both the data of $a$ and $b$ are observed at the same time, but does not restric their data to be equal. The component $\mathit{Fifo(a,b,M)}$ synchronizes the observation of a data at $a$ with the change of the memory state $M$, and then outputs the same data at $b$. As defined here, the $\mathit{Fifo(a,b,M)}$ component is infinitely productive, i.e., always eventually has an input at $a$ and an output at $b$. The $\mathit{Merger(a,b,c)}$ component either synchronizes $a$ with $c$ or $b$ with $c$ but never all ports together.

    A strength of Reo is its compositional nature: protocols are built out of primitives. We use the join operation defined in Example~\ref{ex:join} (see Corollary~\ref{corollary:lifted-sync} for the proof of associativity and commutativity of $\bowtie$) to define two Reo components:
    \begin{align*}
        \mathit{Alternator}(a,b,c)  &= \mathit{Sync}(a,c_1) \bowtie \mathit{Fifo}(x,c_2) \bowtie \mathit{Syncdrain}(a,b)\bowtie \mathit{Sync}(b,x) \bowtie \mathit{Merger}(c_1,c_2,c)\\
        \mathit{Fifo_2}(a,b)  &= \mathit{Fifo}(a,x,M_1) \bowtie \mathit{Fifo}(x,b,M_2)
    \end{align*}
    \hfill $\blacksquare$
\end{example}

In order to state some sufficient conditions for $\kappa$ so that its lifted product satisfies the condition for associativity, we introduce the unique \emph{enumeration} of a triple of TESs. Intuitively, the enumeration is a stream that increments the index of each TES in order. 
Observe, from Definition~\ref{def:enumeration}, that for any triple $(\sigma_1, \sigma_2, \sigma_3)$, there exists a unique such enumeration.
\begin{definition}[Enumeration]
    \label{def:enumeration}
Let $(\sigma_1, \sigma_2, \sigma_3) \in \TES{E_1}{}\times\TES{E_2}{}\times\TES{E_3}{}$, an enumeration $\tau: \N \to (\N \times \N \times \N)$ is inductively defined by 
$\tau(0) = (0,0,0)$ and, given $\tau(n) = (i,j,k)$, 
\[\tau(n+1) = 
\begin{cases} 
    (i+1,j,k) & \textit{ if } \pr_2(\sigma_1)(i) < \pr_2(\sigma_2)(j) \land \pr_2(\sigma_1)(i) < \pr_2(\sigma_3)(k) \\
    (i,j+1,k) & \textit{ if } \pr_2(\sigma_2)(j) < \pr_2(\sigma_1)(i) \land \pr_2(\sigma_2)(j) < \pr_2(\sigma_3)(k) \\
    (i,j,k+1) & \textit{ if } \pr_2(\sigma_3)(i) < \pr_2(\sigma_2)(j) \land \pr_2(\sigma_3)(k) < \pr_2(\sigma_1)(i) \\
    (i+1,j+1,k) & \textit{ if } \pr_2(\sigma_1)(i) = \pr_2(\sigma_2)(j) < \pr_2(\sigma_3)(k) \\
    (i+1,j,k+1) & \textit{ if } \pr_2(\sigma_1)(i) = \pr_2(\sigma_3)(k) < \pr_2(\sigma_2)(j) \\
    (i,j+1,k+1) & \textit{ if } \pr_2(\sigma_2)(j) = \pr_2(\sigma_3)(k) < \pr_2(\sigma_1)(i) \\
    (i+1,j+1,k+1) & \textit{ if } \pr_2(\sigma_1)(i) = \pr_2(\sigma_2)(j) = \pr_2(\sigma_3)(k) 
\end{cases}\]
\end{definition}

Besides the product instances detailed in Definitions~\ref{def:cross},~\ref{def:sync}, ~\ref{def:excl}\red{, and~\ref{ex:mix}}, the definition of composability relation or composition function as the lift of some composability relation on observations or function on observables allows sufficient conditions for Lemma~\ref{lemma:prod} to hold. 

\begin{lemma}
    \label{lemma:lifting-properties}
    Let $+$ be a composition function on observables and let $\kappa$ be a composability relation on observations.
    Then,
    \begin{itemize}
    \item $\times_{([\kappa],[+])}$ is \emph{commutative} if $\kappa$ is symmetric and $+$ is commutative; 
    \item $\times_{([\kappa],[+])}$ is \emph{associative} if $+$ is associative and, for all $E_1, E_2, E_3$ and for all $(\sigma_1, \sigma_2, \sigma_3) \in \TES{E_1}{} \times \TES{E_2}{} \times \TES{E_3}{}$, let $\tau$ be an enumeration of $(\sigma_1, \sigma_2, \sigma_3)$, then, for all $n \in \N$, letting $\tau(n) = (i,j,k)$,
    \begin{align*}
            (\sigma_1(i),\sigma_2(j)) \in \kappa(E_1,E_2)  \land ((\sigma^{(i)}_1[+] \sigma^{(j)}_2)(0),\sigma_3(k)) \in \kappa (E_1 \cup E_2,E_3)
    \end{align*}
            if and only if,
for all $n \in \N$, letting $\tau(n) = (i,j,k)$,
    \begin{align*}
    (\sigma_2(j),\sigma_3(k)) \in \kappa(E_2,E_3)  \land (\sigma_1(i),(\sigma_2^{(j)}[+]\sigma_3^{(k)})(0)) \in \kappa (E_1,E_2\cup E_3)
        \end{align*}
    \item $\times_{([\kappa],[+])}$ is \emph{idempotent} if $+$ is idempotent and, for all $E \subseteq \E$ we have $((O_1,t_1), (O_2, t_2)) \in \kappa(E, E) \implies (O_1, t_1) = (O_2, t_2)$.
    \end{itemize}
\end{lemma}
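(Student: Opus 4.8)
The plan is to obtain all three statements as instances of Lemma~\ref{lemma:prod}, applied with the composability relation $R=[\kappa]$ and the composition function $\oplus=[+]$. Concretely, I would check that the premises of Lemma~\ref{lemma:prod} hold: that $[\kappa]$ is symmetric when $\kappa$ is (needed for commutativity), that $[\kappa]$ satisfies the associativity premise of Lemma~\ref{lemma:prod} when $\kappa$ satisfies the enumeration condition stated here (needed for associativity), and that $(\sigma,\tau)\in[\kappa](E,E)$ implies $\sigma=\tau$ when $\kappa$ has the corresponding pointwise property (needed for idempotency); together with the composition-side facts that $[+]$ is commutative, associative, respectively idempotent whenever $+$ is. Granting these, Lemma~\ref{lemma:prod} immediately yields the three conclusions.

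The composition-side facts are straightforward coinductive arguments on Definition~\ref{def:lifting-composition}. For commutativity, $\sigma_1[+]\sigma_2$ and $\sigma_2[+]\sigma_1$ select matching branches in each of the cases $t_1<t_2$, $t_2<t_1$, $t_1=t_2$, the last using $O_1+O_2=O_2+O_1$. For idempotency, $\sigma[+]\sigma$ always falls in the ``otherwise'' branch and produces the head $(O_0+O_0,t_0)=(O_0,t_0)$. For associativity, one does a case analysis on the relative order of the three head time-stamps of $(\sigma_1[+]\sigma_2)[+]\sigma_3$ and $\sigma_1[+](\sigma_2[+]\sigma_3)$; the comparisons are symmetric and the only place associativity of $+$ is used is the case $t_1=t_2=t_3$. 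Symmetry of $[\kappa]$ from symmetry of $\kappa$ is exactly Lemma~\ref{lemma:lifting}. For the idempotency premise, note that $(\sigma,\tau)\in[\kappa](E,E)$ forces $\sigma(0)=\tau(0)$ because $\kappa(E,E)$-related observations are equal; in particular the head time-stamps coincide, so only the $t_1=t_2$ disjunct of $\Phi_\kappa(E,E)$ can hold, giving $(\sigma',\tau')\in[\kappa](E,E)$; iterating, $\sigma^{(n)}(0)=\tau^{(n)}(0)$ for all $n$, i.e.\ $\sigma=\tau$.

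The heart of the argument is the associativity premise for $[\kappa]$. First I would prove a pointwise characterization of the greatest post-fixed point: writing $\mu$ for the two-coordinate analogue of the enumeration in Definition~\ref{def:enumeration}, $(\rho_1,\rho_2)\in[\kappa](E_1,E_2)$ iff $(\rho_1(\mu(m)_1),\rho_2(\mu(m)_2))\in\kappa(E_1,E_2)$ for all $m$ --- the forward direction by unfolding the inclusion $\Rel\subseteq\Phi_\kappa(E_1,E_2)(\Rel)$, the backward direction by showing that the set of pairs meeting the right-hand condition is itself a post-fixed point of $\Phi_\kappa(E_1,E_2)$, hence below $[\kappa](E_1,E_2)$. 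Next I would establish the bookkeeping identity $\rho^{(m)}=\rho_1^{(\mu(m)_1)}[+]\rho_2^{(\mu(m)_2)}$ for $\rho=\rho_1[+]\rho_2$, by induction on $m$ using Definition~\ref{def:lifting-composition}, so that $\rho(m)=\bigl(\rho_1^{(\mu(m)_1)}[+]\rho_2^{(\mu(m)_2)}\bigr)(0)$. Finally I would check that the three-coordinate enumeration $\tau$ of $(\sigma_1,\sigma_2,\sigma_3)$ is compatible with the two-coordinate enumerations of every pair occurring in Lemma~\ref{lemma:prod}'s associativity premise: a coordinate of $\tau$ advances exactly when its time-stamp is minimal among the three, which matches when the corresponding index advances in each pairwise enumeration, and non-Zeno progressiveness ensures each coordinate advances infinitely often, so the pairwise ranges are fully traversed (possibly with repeats, which do not affect a $\forall m$ statement). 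Combining the three facts, $(\sigma_1,\sigma_2[+]\sigma_3)\in[\kappa](E_1,E_2\cup E_3)\wedge(\sigma_2,\sigma_3)\in[\kappa](E_2,E_3)$ unwinds term by term to the universally quantified statement appearing after ``if and only if'' in the hypothesis, and $(\sigma_1,\sigma_2)\in[\kappa](E_1,E_2)\wedge(\sigma_1[+]\sigma_2,\sigma_3)\in[\kappa](E_1\cup E_2,E_3)$ to the one before it; so the hypothesis on $\kappa$ is exactly the associativity premise of Lemma~\ref{lemma:prod} with $R=[\kappa]$ and $\oplus=[+]$.

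I expect the main obstacle to be that last step: matching the coinductive unfolding of $[\kappa]$ against the finitary, inductively defined enumeration carefully enough that the two $\forall n$ statements line up term by term --- in particular getting the index-advancement bookkeeping right across the several simultaneous enumerations, and using non-Zeno progressiveness to be sure no index pair is skipped. The coinductive characterization of $[\kappa]$ and the derivative identity for $[+]$ are the two auxiliary lemmas that make this bookkeeping manageable; with them in hand the remaining steps are purely propositional (distributing $\forall n$ over conjunctions and substituting equals for equals).
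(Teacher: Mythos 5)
Your proposal is correct and follows essentially the same route as the paper: reduce each bullet to the corresponding premise of Lemma~\ref{lemma:prod} with $R=[\kappa]$ and $\oplus=[+]$, prove the $[+]$-side facts by case analysis on head time-stamps, obtain symmetry of $[\kappa]$ from Lemma~\ref{lemma:lifting}, derive $\sigma=\tau$ from the pointwise idempotency condition by iterated unfolding, and match the associativity premise against the enumeration-indexed hypothesis by unwinding the coinductive definition of $[\kappa]$ along the enumeration. Your packaging of that last step (an explicit pointwise characterization of $[\kappa]$, the derivative identity for $[+]$, and compatibility of the ternary enumeration with the pairwise ones) is a more detailed rendering of what the paper encodes via its ternary relations $\Psi^1_\kappa$ and $\Psi^2_\kappa$, but it is the same argument.
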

\begin{proof}
    \underline{Commutativity.} From Lemma~\ref{lemma:lifting}, if $\kappa$ is symmetric, then its lifting $[\kappa]$ is also symmetric. Therefore, it is sufficient for $\kappa$ to be symmetric and for $+$ to be commutative in order for $[\kappa]$ to be symmetric and $[+]$ to be commutative, and therefore $\times_{([\kappa],[+])}$ to be commutative.

    \hspace{-0.3em}\underline{Associativity.} \hspace{-0.2em}
    A sufficient condition for the product $\times_{([\kappa], [+])}$ to be associative is that $+$ is associative and  
    for every $\sigma_i \in \TES{E_i}{}$ for $i \in \{1,2,3\}$:
    \[
    \begin{array}{rl}
        P_1 :=&  (\sigma_1,\sigma_2) \in [\kappa](E_1,E_2)  \land (\sigma_1[+] \sigma_2,\sigma_3) \in [\kappa] (E_1 \cup E_2,E_3) \iff\\
             &(\sigma_2,\sigma_3) \in [\kappa] (E_2,E_3)  \land (\sigma_1, \sigma_2 [+] \sigma_3) \in [\kappa] (E_1,E_2\cup E_3) 
    \end{array}
    \]
    We introduce the function
    \[
    \begin{array}{rl}
        \Psi^1_{\kappa}(E_1, E_2, E_3)(\Rel) = \{ &(\sigma_1, \sigma_2, \sigma_3) \mid  (\sigma_1(0),\sigma_2(0)) \in \kappa(E_1,E_2)  \land \\ 
                                                            &((\sigma_1[+] \sigma_2)(0),\sigma_3(0)) \in \kappa (E_1 \cup E_2,E_3) \land   \\
                                                            & \pr_2(\sigma_1)(0) =t_1 \land  \pr_2(\sigma_2)(0) =t_2 \land  \pr_2(\sigma_3)(0) =t_3 \land \\
                                                            &( t_1 < t_2 \land t_1 < t_3 \land (\sigma_1',\sigma_2, \sigma_3) \in \Rel   \ \lor \\
                                                            &\ t_2 < t_1 \land t_2 < t_3 \land (\sigma_1,\sigma_2', \sigma_3) \in \Rel  \ \lor \\
                                                            &\ t_3 < t_2 \land t_3 < t_1 \land (\sigma_1,\sigma_2, \sigma_3') \in \Rel  \ \lor \\
                                                            &\ t_1 = t_2 \land t_1 < t_3 \land (\sigma_1',\sigma_2', \sigma_3) \in \Rel \ \lor \\
                                                            &\ t_2 = t_3 \land t_2 < t_1 \land (\sigma_1,\sigma_2', \sigma_3') \in \Rel \ \lor \\
                                                            &\ t_1 = t_3 \land t_1 < t_2 \land (\sigma_1',\sigma_2, \sigma_3') \in \Rel \ \lor \\
                                                            &\ t_1 = t_3 \land t_1 = t_2 \land (\sigma_1',\sigma_2', \sigma_3') \in \Rel )\}\\
    \end{array}
    \]
    and 
    \[
    \begin{array}{rl}
        \Psi^2_{\kappa}(E_1, E_2, E_3)(\Rel) = \{ &(\sigma_1, \sigma_2, \sigma_3) \mid  (\sigma_2(0),\sigma_3(0)) \in \kappa(E_2,E_3)  \land \\ 
                                                              &(\sigma_1(0),(\sigma_2[+]\sigma_3)(0)) \in \kappa (E_1,E_2\cup E_3) \land   \\
                                                            & \pr_2(\sigma_1)(0) =t_1 \land  \pr_2(\sigma_2)(0) =t_2 \land  \pr_2(\sigma_3)(0) =t_3 \land \\
                                                            &( t_1 < t_2 \land t_1 < t_3 \land (\sigma_1',\sigma_2, \sigma_3) \in \Rel   \ \lor \\
                                                            &\ t_2 < t_1 \land t_2 < t_3 \land (\sigma_1,\sigma_2', \sigma_3) \in \Rel  \ \lor \\
                                                            &\ t_3 < t_2 \land t_3 < t_1 \land (\sigma_1,\sigma_2, \sigma_3') \in \Rel  \ \lor \\
                                                            &\ t_1 = t_2 \land t_1 < t_3 \land (\sigma_1',\sigma_2', \sigma_3) \in \Rel \ \lor \\
                                                            &\ t_2 = t_3 \land t_2 < t_1 \land (\sigma_1,\sigma_2', \sigma_3') \in \Rel \ \lor \\
                                                            &\ t_1 = t_3 \land t_1 < t_2 \land (\sigma_1',\sigma_2, \sigma_3') \in \Rel \ \lor \\
                                                            &\ t_1 = t_3 \land t_1 = t_2 \land (\sigma_1',\sigma_2', \sigma_3') \in \Rel )\}\\
    \end{array}
    \]
    Showing that $P_1$ holds is equivalent to showing that $\Rel \subseteq \Psi^1_{\kappa}(E_1, E_2, E_3)(\Rel)$ if and only if $\Rel \subseteq \Psi^2_{\kappa}(E_1, E_2, E_3)(\Rel)$ for any $\mathcal{R} \subseteq \TES{E_1}{} \times \TES{E_2}{} \times \TES{E_3}{}$.

    First, we observe that, given $\mathcal{R} \subseteq \Psi^1_{\kappa}(E_1, E_2, E_3)(\Rel)$, for any $(\sigma_1, \sigma_2, \sigma_3) \in \mathcal{R}$ and its enumeration $\tau$, for all $n \in \N$ letting $\tau(n) = (i,j,k)$, we have: 
    \begin{align*}
            (\sigma_1(i),\sigma_2(j)) \in \kappa(E_1,E_2)  \land ((\sigma^{(i)}_1[+] \sigma^{(j)}_2)(0),\sigma_3(k)) \in \kappa (E_1 \cup E_2,E_3)
    \end{align*}
    Similarly, if the same $\Rel$ is such that $\mathcal{R} \subseteq \Psi^2_{\kappa}(E_1, E_2, E_3)(\Rel)$, then, for all $n$ there exist $i,j,k$ such that $\tau(n) = (i,j,k)$ and 
    \begin{align*}
    (\sigma_2(j),\sigma_3(k)) \in \kappa(E_2,E_3)  \land (\sigma_1(i),(\sigma_2^{(j)}[+]\sigma_3^{(k)})(0)) \in \kappa (E_1,E_2\cup E_3)
    \end{align*}
    Thus, to show that $P_1$ holds, it is sufficient to prove that, for all $n \in \N$ there exists $(i,j,k)$ with  $\tau(n) = (i,j,k)$ and
    \begin{align*}
            (\sigma_1(i),\sigma_2(j)) \in \kappa(E_1,E_2)  \land ((\sigma^{(i)}_1[+] \sigma^{(j)}_2)(0),\sigma_3(k)) \in \kappa (E_1 \cup E_2,E_3)
    \end{align*}
    if and only if,
for all $n \in \N$ there exists $(i,j,k)$ with  $\tau(n) = (i,j,k)$ and
    \begin{align*}
    (\sigma_2(j),\sigma_3(k)) \in \kappa(E_2,E_3)  \land (\sigma_1(i),(\sigma_2^{(j)}[+]\sigma_3^{(k)})(0)) \in \kappa (E_1,E_2\cup E_3)
    \end{align*}
    The above equivalence holds for any $(\sigma_1, \sigma_2, \sigma_3) \in \TES{E_1}{} \times\TES{E_2}{} \times\TES{E_3}{}$ and $\tau$ the enumeration of $(\sigma_1, \sigma_2, \sigma_3)$.

    Finally, we prove that if $+$ is associative, then $[+]$ is associative.
    Let $\sigma_i \in L_i$ and we write $\sigma_i(0) = (O_i, t_i)$ for $i \in \{1,2,3\}$, then:
    \[
    \sigma_1 [+](\sigma_2 [+] \sigma_3) = 
    \begin{cases}
        \langle(O_1,t_1)\rangle\cdot (\sigma_1'[+](\sigma_2 [+]\sigma_3)                    &\hspace{-0.4em}\text{if } t_1 < t_2 \land t_1 < t_3 \\
        \langle(O_2,t_2)\rangle\cdot (\sigma_1 [+](\sigma_2'[+]\sigma_3)                    &\hspace{-0.4em}\text{if } t_2 < t_1 \land t_2 < t_3 \\
        \langle(O_3,t_3)\rangle\cdot (\sigma_1 [+](\sigma_2 [+]\sigma_3')                   &\hspace{-0.4em}\text{if } t_3 < t_2 \land t_3 < t_1 \\
        \langle(O_1+O_2,t_1)\rangle\cdot (\sigma_1'[+](\sigma_2'[+]\sigma_3)              &\hspace{-0.4em}\text{if } t_1 = t_2 \land t_1 < t_3 \\
        \langle(O_2+O_3,t_2)\rangle\cdot (\sigma_1 [+](\sigma_2'[+]\sigma_3')             &\hspace{-0.4em}\text{if } t_2 = t_3 \land t_2 < t_1 \\
        \langle(O_1+ O_3,t_1)\rangle\cdot (\sigma_1'[+](\sigma_2 [+]\sigma_3')            &\hspace{-0.4em}\text{if } t_1 = t_3 \land t_1 < t_2 \\
        \langle(O_1+(O_2 + O_3),t_1)\rangle\cdot (\sigma_1'[+](\sigma_2'[+]\sigma_3')   &\hspace{-0.4em}\text{if } t_1 = t_3 \land t_1 = t_2 \\
        \end{cases}
    \]

    The only case that differs from $(\sigma_1 [+]\sigma_2) [+] \sigma_3$ is when $t_1 = t_3 = t_2$, which gives 
    $((O_1+O_2) + O_3,t_1)$.
    Thus, if $((O_1+O_2) + O_3,t_1) = (O_1+(O_2 + O_3),t_1)$ for every $O_i \in \Po(E_i)$ with $i \in \{1,2,3\}$, then 
    $\sigma_1 [+] (\sigma_2 [+] \sigma_3) = \sigma_1 [+] (\sigma_2 [+] \sigma_3)$ 
    for every $\sigma_i \in L_i$ with $i \in \{1,2,3\}$.

    \underline{Idempotency.}
    If $+$ is idempotent, then the lifting $[+]$ is also idempotent.
    We consider $+$ to be idempotent.
    We show that, for all $E \subseteq \E$ and $o_1, o_2 \in \Po(E) \times \Rp$ we have $(o_1, o_2) \in \kappa(E,E) \implies o_1 = o_2$, then for all $\sigma, \tau \in \TES{E}{}$, $(\sigma, \tau) \in [\kappa](E,E) \implies \sigma = \tau$, which is a sufficient condition for $\times_{([\kappa],[+])}$ to be idempotent.

    By definition $[\kappa](E,E)$ is the greatest fixed point of the function:
    \[
    \begin{array}{rll}
        \Phi_{\kappa}(E,E)(\Rel) = \{(\tau_1, \tau_2) \mid & (\tau_1(0), \tau_2(0)) \in \kappa_1(E,E) \land   \\
          & (\pr_2(\tau_1)(0) =t_1 \land  \pr_2(\tau_2)(0) =t_2) \land \\
          &(t_1 < t_2 \land (\tau_1',\tau_2) \in \Rel \lor t_2 < t_1 \land (\tau_1,\tau_2') \in \Rel \\
          &\ t_2 = t_1 \land (\tau_1',\tau_2') \in \Rel)\}\\
            \subseteq \{(\tau_1, \tau_2) \mid & \tau_1(0) = \tau_2(0) \land  (\tau_1',\tau_2') \in \Rel \}
    \end{array}
\]
    Therefore, we conclude that $[\kappa](E,E) \subseteq \{(\sigma, \sigma) \mid \sigma \in \TES{E}{}\}$.
\end{proof}

The result of Lemma~\ref{lemma:lifting-properties} can be expressed in terms of conditions on composability relations on observables. In Property~\ref{prop:sqcap}, we introduce some sufficient conditions for $\times_{([\kappa_\sqcap^\ssync], [\cup])}$ to be associative (i.e., to satisfy conditions in Lemma~\ref{lemma:lifting-properties}).
\begin{property}
    We list a series of properties on a composability relation on observable $\sqcap$ where, for $O_1, O_2, O_3 \subseteq \E$:
    \begin{enumerate}
        \item\label{sq1} 
            if $\ind_\sqcap(O_1, O_2)$ and $\ind_\sqcap(O_1, O_3)$ then $\ind_\sqcap(O_1, O_2 \cup O_3)$; and 
            if $\ind_\sqcap(O_2, O_3)$ and $\ind_\sqcap(O_1, O_3)$ then $\ind_\sqcap(O_1\cup O_2, O_3)$
        \item\label{sq2} if $(O_1, O_2) \in \sqcap$ and $(O_2, O_3) \in \sqcap$, then $(O_1, O_3) \in \sqcap$; 
        \item\label{sq3} if $(O_1, O_2) \in \sqcap$ and $(O_1, O_3) \in \sqcap$, then $(O_2, O_3) \in \sqcap$ and $(O_3, O_2) \in \sqcap$;
        \item\label{sq4} if $(O_1, O_3) \in \sqcap$ and $(O_2, O_3) \in \sqcap$, then $(O_1, O_2) \in \sqcap$ and $(O_2, O_1) \in \sqcap$; 
        \item\label{sq5} 
            if $(O_1, O_2) \in \sqcap$ and $(O_1',O_2') \in \sqcap$, then $(O_1\cup O_1', O_2 \cup O_2') \in \sqcap$.
    \end{enumerate}
    \label{prop:sqcap}
\end{property}
\begin{corollary}
    \label{corollary:lifted-sync}
    Let $\sqcap$ be a composability relation on observables and $\kappa^\ssync_\sqcap$ as defined in Example~\ref{def:sync}.
    Then:
    \begin{itemize}
        \item if $\sqcap$ is symmetric, then the product $\times_{([\kappa^\ssync_\sqcap],[\cup])}$ is commutative;
        \item if $\sqcap$ satisfies Properties~\ref{prop:sqcap}, then the product $\times_{([\kappa^\ssync_\sqcap],[\cup])}$ is associative;
        \item if $\sqcap$ is co-reflexive, i.e., $(O_1,O_2) \in \sqcap$ implies $O_1 = O_2$,  then the product $\times_{([\kappa^\ssync_\sqcap],[\cup])}$ is idempotent.
    \end{itemize}
\end{corollary}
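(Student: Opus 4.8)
The plan is to obtain all three claims as instances of Lemma~\ref{lemma:lifting-properties}, applied with the composability relation on observations $\kappa^\ssync_\sqcap$ and the composition function on observables $\cup$. Since $\cup$ is commutative, associative, and idempotent on $\Po(\E)$, the three hypotheses of Lemma~\ref{lemma:lifting-properties} each reduce to a purely combinatorial condition on $\kappa^\ssync_\sqcap$, which I then discharge from the stated property of $\sqcap$ by unfolding the generating rules of Definition~\ref{def:sync}.

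For commutativity, by the first clause of Lemma~\ref{lemma:lifting-properties} it suffices to show that $\sqcap$ symmetric implies $\kappa^\ssync_\sqcap$ symmetric in the sense of Definition~\ref{def:symmetry}. I would read this off directly from Definition~\ref{def:sync}: $\ind_\sqcap$ is symmetric in its two arguments whenever $\sqcap$ is; the first generating rule maps to itself under swapping the two operands together with the decompositions $O_1 \cup O_1'$ and $O_2 \cup O_2'$; and the second rule with its ``reciprocally'' clause is already a symmetric pair. Hence $((O_1,t_1),(O_2,t_2)) \in \kappa^\ssync_\sqcap(E_1,E_2)$ iff $((O_2,t_2),(O_1,t_1)) \in \kappa^\ssync_\sqcap(E_2,E_1)$, as required. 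For idempotency, by the third clause of Lemma~\ref{lemma:lifting-properties} it suffices to show that co-reflexivity of $\sqcap$ forces $((O_1,t_1),(O_2,t_2)) \in \kappa^\ssync_\sqcap(E,E) \implies (O_1,t_1) = (O_2,t_2)$. I would case-split on which generating rule witnesses membership: for the first rule $t_1 = t_2$ and $O_1 = A \cup A'$, $O_2 = B \cup B'$ with $(A,B) \in \sqcap$ and $A',B'$ independent of $E$, so co-reflexivity yields $A = B$ and the independence of $A',B' \subseteq E$ from $E$ forces them empty, giving $O_1 = O_2$; for the second rule $\ind_\sqcap(O_1,E)$ (or symmetrically) likewise forces $O_1 = \emptyset$, and then on $E = E$ the only admissible partner has $O_2 = \emptyset$ and $t_1 = t_2$. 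Thus $[\kappa^\ssync_\sqcap](E,E) \subseteq \{(\sigma,\sigma) \mid \sigma \in \TES{E}{}\}$ and idempotency follows.

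Associativity is the substantive case. By the second clause of Lemma~\ref{lemma:lifting-properties}, with $[+] = [\cup]$ and $\cup$ associative, I must show that along the enumeration $\tau$ of any triple $(\sigma_1,\sigma_2,\sigma_3)$ the left-nested chain of $\kappa^\ssync_\sqcap$-conditions is equivalent, index by index, to the right-nested one. At an enumeration step $n$ with $\tau(n) = (i,j,k)$, the plan is to compute $(\sigma_1^{(i)}[+]\sigma_2^{(j)})(0)$ and $(\sigma_2^{(j)}[+]\sigma_3^{(k)})(0)$ explicitly from Definition~\ref{def:lifting-composition} --- the enumeration advances exactly the indices of minimal time stamp, so only a few orderings of $t_1^{(i)},t_2^{(j)},t_3^{(k)}$ occur --- and then to unfold each $\kappa^\ssync_\sqcap$-membership via Definition~\ref{def:sync} into its ``$\sqcap$-matched part plus $\ind_\sqcap$-part'' form. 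Rebracketing the two chains then needs: closure of the independent parts under union (item~\ref{sq1} of Property~\ref{prop:sqcap}) and of the matched parts under union (item~\ref{sq5} of the same property) in the subcases where two time stamps coincide, and the three transitivity-style closures (items~\ref{sq2}--\ref{sq4}) in the subcase where all three time stamps coincide, where a match between $O_1^{(i)}$ and $O_2^{(j)}$ followed by a match between their union and $O_3^{(k)}$ must be re-associated into a match between $O_2^{(j)}$ and $O_3^{(k)}$ and one between $O_1^{(i)}$ and $O_2^{(j)} \cup O_3^{(k)}$.

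Reassembling these index-wise equivalences over all $n$ yields the global condition required by Lemma~\ref{lemma:lifting-properties}, hence associativity. I expect this case analysis --- keeping the time-ordering patterns and the matched/independent decompositions synchronized, and pinning down which of the five properties each subcase actually consumes --- to be the main obstacle; commutativity and idempotency are routine once Lemma~\ref{lemma:lifting-properties} is in hand.
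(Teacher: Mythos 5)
Your proposal follows essentially the same route as the paper: all three bullets are reduced to Lemma~\ref{lemma:lifting-properties} with $+=\cup$, and the associativity condition is discharged by induction along the enumeration of Definition~\ref{def:enumeration}, splitting on the seven orderings of the three head time-stamps and consuming the items of Property~\ref{prop:sqcap} exactly where the paper does (item~\ref{sq1} and item~\ref{sq5} for the partially-coincident cases, the transitivity-style items for the fully coincident case). The commutativity and idempotency reductions also match the paper's argument.

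One caveat on idempotency, which affects your write-up and the paper's equally: the step ``$\ind_\sqcap(A',E)$ forces $A'=\emptyset$'' does not follow from co-reflexivity of $\sqcap$ alone. Co-reflexivity only says $(O_1,O_2)\in\sqcap\implies O_1=O_2$; it is compatible with $\sqcap=\emptyset$, in which case every observable is $\sqcap$-independent of $E$ and the second generating rule of Definition~\ref{def:sync} puts $((O_1,t_1),(O_2,t_2))$ into $\kappa^\ssync_\sqcap(E,E)$ for arbitrary $O_1,O_2$ and $t_1\leq t_2$, so the observation-level premise of the third bullet of Lemma~\ref{lemma:lifting-properties} fails. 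What is actually needed is that $\sqcap$ additionally relates every non-empty $O\subseteq E$ to something (reflexivity on non-empty observables, as in Examples~\ref{ex:intersection} and~\ref{ex:reo-prod}); the paper's own proof silently invokes this (``due to the reflexivity of $\sqcap$, $(O_1,O_1)\in\sqcap$''), and even then the empty observable remains an admissible asynchronous partner, an edge case the paper handles separately in Lemma~\ref{lemma:satisfiability}. Your second-rule subcase (``the only admissible partner has $O_2=\emptyset$ and $t_1=t_2$'') overstates what that rule yields. Since you reproduce rather than introduce this gap, the comparison verdict stands, but if you want a self-contained proof you should strengthen the hypothesis to reflexive-and-co-reflexive on non-empty observables and treat the $O=\emptyset$ case explicitly.
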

\begin{proof}
    Let $E_1, E_2, E_3 \subseteq \E$, and $\sqcap \subseteq \Po(E_1\cup E_2 \cup E_3) \times \Po(E_1\cup E_2 \cup E_3)$.
    We use $\kappa^\ssync_\sqcap(E_1, E_2)$, the composability relation in Definition~\ref{def:sync}:
    \begin{itemize}
        \item for all $(O_1, O_2) \in \Po(E_1)\times \Po(E_2)$ such that $(O_1, O_2) \in \sqcap$ and for all $(O_1', O_2') \in \Po(E_1)\times \Po(E_2)$ such that $\ind_\sqcap(O_1', E_2)$ and $\ind_\sqcap(E_1, O_2')$ then, for all time stamps $t$, $((O_1\cup O_1', t), (O_2 \cup O_2',t)) \in \kappa_\sqcap^{\ssync}(E_1, E_2)$.
        \item  if $\ind_\sqcap(O_1, E_2)$ then for all $O_2 \subseteq E_2$ and $t_1 \leq t_2$, $((O_1,t_1), (O_2,t_2)) \in \kappa^\ssync_\sqcap(E_1,E_2)$. Reciprocally, if $\ind_\sqcap(E_1, O_2)$ then for all $O_1 \subseteq E_1$ and $t_2 \leq t_1$, $((O_1,t_1), (O_2,t_2)) \in \kappa^\ssync_\sqcap(E_1,E_2)$;
    \end{itemize}
    where $\ind_\sqcap(X,Y) = \forall x \subseteq X.\ \forall y \subseteq Y.\ (x,y)\not \in \sqcap$.

    Let $(\sigma_1, \sigma_2, \sigma_3) \in \TES{E_1}{} \times \TES{E_2}{} \times \TES{E_3}{}$ and let $\tau$ be an enumeration of $(\sigma_1, \sigma_2, \sigma_3)$. 
    We show that, if, for all $n \in \N$, letting $\tau(n) = (i,j,k)$, the following holds 
    \begin{align*}
            (\sigma_1(i),\sigma_2(j)) \in \kappa^\ssync_\sqcap(E_1,E_2)  \land ((\sigma^{(i)}_1[\cup] \sigma^{(j)}_2)(0),\sigma_3(k)) \in \kappa^\ssync_\sqcap (E_1 \cup E_2,E_3)
    \end{align*}
          then, for all $n \in \N$, letting $\tau(n) = (i,j,k)$, the following holds 
    \begin{align*}
    (\sigma_2(j),\sigma_3(k)) \in \kappa^\ssync_\sqcap(E_2,E_3)  \land (\sigma_1(i),(\sigma_2^{(j)}[\cup]\sigma_3^{(k)})(0)) \in \kappa^\ssync_\sqcap (E_1,E_2\cup E_3)
\end{align*}  (the implication in the other direction is similar).
We proceed by induction on $n$ (i.e., the elements of the enumeration) in the right hand side of the implication.\\
\underline{Base case.} Let $n = 0$. Then, $\tau(n) = \tau(0) =  (0, 0, 0)$.
We show that 
\[
    (\sigma_2(0),\sigma_3(0)) \in \kappa^\ssync_\sqcap(E_2,E_3)  \land (\sigma_1(0),(\sigma_2[\cup]\sigma_3)(0)) \in \kappa^\ssync_\sqcap (E_1,E_2\cup E_3)
\]
We know that, for all $n$, letting $\tau(n) = (i,j,k)$, we have 
    \begin{align*}
            (\sigma_1(i),\sigma_2(j)) \in \kappa^\ssync_\sqcap(E_1,E_2)  \land ((\sigma^{(i)}_1[\cup] \sigma^{(j)}_2)(0),\sigma_3(k)) \in \kappa^\ssync_\sqcap (E_1 \cup E_2,E_3)
    \end{align*}
    Thus, for $n = 0$, we know that
    \begin{align*}
            (\sigma_1(0),\sigma_2(0)) \in \kappa^\ssync_\sqcap(E_1,E_2)  \land ((\sigma_1[\cup] \sigma_2)(0),\sigma_3(0)) \in \kappa^\ssync_\sqcap (E_1 \cup E_2,E_3)
    \end{align*}
    Let $\sigma_1(0) = (O_1,t_1)$, $\sigma_2(0) = (O_2, t_2)$, and $\sigma_3(0) = (O_3,t_3)$, we split cases on time stamp values and show the implication for each of those cases.
\begin{itemize}
    \item Let $t_1 < t_2 \land t_1 < t_3$. 
        Then 
    \begin{align*}
            (\sigma_1(0),\sigma_2(0)) \in \kappa^\ssync_\sqcap(E_1,E_2)  \land (\sigma_1(0),\sigma_3(0)) \in \kappa^\ssync_\sqcap (E_1 \cup E_2,E_3)
    \end{align*}
    which implies that 
    $(\sigma_1(0),(\sigma_2 [\cup] \sigma_3)(0)) \in \kappa^\ssync_\sqcap (E_1,E_2\cup E_3)$ by 
    the fact that $\ind_\sqcap(O_1, E_2)$ and $\ind_\sqcap(O_1, E_3)$ together imply that $\ind_\sqcap(O_1, E_2 \cup E_3)$.
    Then, one can show that, given the property of the enumeration, there exists an $i'$ such that $\tau(i') = (i',0,0)$ with $t_2 \leq \pr_2(\sigma_1)(i')$ or $t_3 \leq \pr_2(\sigma_1)(i')$. We discuss those cases below to prove that $(\sigma_2(0),\sigma_3(0)) \in \kappa^\ssync_\sqcap(E_2,E_3)$.
    \item Let $t_2 < t_1 \land t_2 < t_3$. Then,
    \begin{align*}
            (\sigma_1(0),\sigma_2(0)) \in \kappa^\ssync_\sqcap(E_1,E_2)  \land (\sigma_2(0),\sigma_3(0)) \in \kappa^\ssync_\sqcap (E_1 \cup E_2,E_3)
    \end{align*}
    The definition of $\kappa_\sqcap^\ssync$ implies that, if $((O,t),(O',t')) \in \kappa^\ssync(E,E')$ and $t<t'$ then $((O,t),(O',t')) \in \kappa^\ssync_\sqcap(E'',E')$ for arbitrary $E''$ with $O \subseteq E''$. Symmetrically, if $t'<t$, then $((O,t),(O',t')) \in \kappa^\ssync_\sqcap(E,E'')$ for arbitrary $E''$ with $O' \subseteq E''$.
    Thus, 
    \[    (\sigma_2(0),\sigma_3(0)) \in \kappa^\ssync_\sqcap(E_2,E_3)  \land (\sigma_1(0),\sigma_2(0)) \in \kappa^\ssync_\sqcap (E_1,E_2\cup E_3)\]
    \item Let $t_3 < t_2 \land t_3 < t_1$. Then,
    \begin{align*}
            (\sigma_1(0),\sigma_2(0)) \in \kappa^\ssync_\sqcap(E_1,E_2)  \land ((\sigma_1[\cup] \sigma_2)(0),\sigma_3(0)) \in \kappa^\ssync_\sqcap (E_1 \cup E_2,E_3)
    \end{align*}
    By definition of $\ind_\sqcap$, if $\ind_\sqcap(E_1\cup E_2, O_3)$, then $\ind_\sqcap(E_2, O_3)$ and $\ind_\sqcap(E_1, O_3)$. Thus,
\[
    (\sigma_2(0),\sigma_3(0)) \in \kappa^\ssync_\sqcap(E_2,E_3)  \land (\sigma_1(0),\sigma_3(0)) \in \kappa^\ssync_\sqcap (E_1,E_2\cup E_3)
\]

    \item Let $t_1 = t_2 \land t_1 < t_3$.
        Then,
    \begin{align*}
            (\sigma_1(0),\sigma_2(0)) \in \kappa^\ssync_\sqcap(E_1,E_2)  \land ((\sigma_1[\cup] \sigma_2)(0),\sigma_3(0)) \in \kappa^\ssync_\sqcap (E_1 \cup E_2,E_3)
    \end{align*}
    By definition of $\kappa_\sqcap^\ssync$, there exists $O_1' \subseteq O_1$ and $O_2' \subseteq O_2$ such that $(O_1',O_2') \in \sqcap$ with $\ind_\sqcap(O_1\setminus O_1', E_2)$ and $\ind_\sqcap(E_1, O_2\setminus O_2')$.
    Moreover, $\ind_\sqcap(O_1 \cup O_2, E_3)$ 
    implies that  $\ind_\sqcap(O_2, E_3)$ and $\ind_\sqcap(O_1, E_3)$ using 
    the definition of $\ind_\sqcap$.
    Thus, 
\[    (\sigma_2(0),\sigma_3(0)) \in \kappa^\ssync_\sqcap(E_2,E_3)  \land (\sigma_1(0),\sigma_2(0)) \in \kappa^\ssync_\sqcap (E_1,E_2\cup E_3)
\]
    \item Let $t_2 = t_3 \land t_2 < t_1$.
        Then,
    \begin{align*}
            (\sigma_1(0),\sigma_2(0)) \in \kappa^\ssync_\sqcap(E_1,E_2)  \land (\sigma_2(0),\sigma_3(0)) \in \kappa^\ssync_\sqcap (E_1 \cup E_2,E_3)
    \end{align*}
    Similar arguments as for the case where $t_1 = t_2 \land t_1 < t_3$ help to conclude that
\[    (\sigma_2(0),\sigma_3(0)) \in \kappa^\ssync_\sqcap(E_2,E_3)  \land (\sigma_1(0),(\sigma_2[\cup]\sigma_3)(0)) \in \kappa^\ssync_\sqcap (E_1,E_2\cup E_3)
\]
    \item Let $t_1 = t_3 \land t_1 < t_2$.
        Then,
    \begin{align*}
            (\sigma_1(0),\sigma_2(0)) \in \kappa^\ssync_\sqcap(E_1,E_2)  \land (\sigma_1(0),\sigma_3(0)) \in \kappa^\ssync_\sqcap (E_1 \cup E_2,E_3)
    \end{align*}
    By definition of $\kappa_\sqcap^\ssync$, there exist $O_1' \subseteq O_1$ and $O_3' \subseteq O_3$ 
                                such that $(O_1', O_3') \in \sqcap$ with $\ind_\sqcap(O_1\setminus O_1',E_3)$ and $\ind_\sqcap(E_1 \cup E_2,O_3\setminus O_3')$ and
                        $\ind_\sqcap(O_1, E_2)$.
                        Moreover, Property 1 item~\ref{sq1} together with 
                        $\ind_\sqcap(O_1, E_2)$ and $\ind_\sqcap(O_1\setminus O_1', E_3)$, we can conclude that
                        $\ind_\sqcap(O_1\setminus O_1',E_2 \cup E_3)$.
                        Also, since $\ind_\sqcap(O_1, E_2)$ and $(O_1', O_3') \in \sqcap$, it must be that $\ind_\sqcap(E_2, O_3')$ otherwise, there would be an $O_2' \subseteq E_2$ with $(O_2', O_3') \in \sqcap$ and therefore $(O_1', O_2') \in \sqcap$ by Property $1$ item~\ref{sq2}. 
                        Thus, $\ind_\sqcap(E_2, O_3)$, and
\[(\sigma_2(0),\sigma_3(0)) \in \kappa^\ssync_\sqcap(E_2,E_3)  \land (\sigma_1(0),\sigma_3(0)) \in \kappa^\ssync_\sqcap (E_1,E_2\cup E_3)\]
    \item Let $t_1 = t_3 \land t_1 = t_2$. Then,
    \begin{align*}
            (\sigma_1(0),\sigma_2(0)) \in \kappa^\ssync_\sqcap(E_1,E_2)  \land ((\sigma_1[\cup] \sigma_2)(0),\sigma_3(0)) \in \kappa^\ssync_\sqcap (E_1 \cup E_2,E_3)
    \end{align*}
    By definition of $\kappa_\sqcap^\ssync$,  there exist $O_1' \subseteq O_1$, $O_2' \subseteq O_2$
                                such that $(O_1', O_2') \in \sqcap$, $\ind_\sqcap(O_1\setminus O_1',E_2)$, and $\ind_\sqcap(E_1,O_2\setminus O_2')$; and 
                                there exist $O_1'' \subseteq O_1$, $O_2'' \subseteq O_2$, $O_3' \subseteq O_3$
                                such that $(O_1''\cup O_2'', O_3') \in \sqcap$ and 
                                $\ind_\sqcap(E_1,O_3\setminus O_3')$; and 
                                $\ind_\sqcap((O_1\cup O_2) \setminus (O_1''\cup O_2''), E_3)$.
                                If $O_2'' = \emptyset$, then $\ind_\sqcap(O_2, E_3)$ and $(\sigma_2(0), \sigma_3(0)) \in \kappa_\sqcap^\ssync(E_2, E_3)$; and if $O_2'' \not = \emptyset$, then, without loss of generality, we can assume that $\neg \ind_\sqcap(O_2'', E_3)$ and that there exists $O_2''' \subseteq O_2''$ and $O_3'' \subseteq O_3'$ such that $(O_2''', O_3'') \in \sqcap$ and $\ind_\sqcap(O_2\setminus O_2''', E_3)$ and $\ind_\sqcap(E_2, O_3\setminus O_3'')$.
                    Thus, $(\sigma_2(0),\sigma_3(0)) \in \kappa^\ssync_\sqcap(E_2,E_3)$.\\
                    Similarly, there exists $O_1''' \subseteq O_1''$ and $O_3''' \subseteq O_3'$ such that $\ind_\sqcap(O_1\setminus O_1''', E_3)$, $\ind_\sqcap(E_1, O_3\setminus O_3''')$, and $(O_1''', O_3''') \in \sqcap$, and therefore $(O_1''' \cup O_1', O_3''' \cup O_2') \in \sqcap$ by Property~\ref{prop:sqcap} item~\ref{sq5}.
                    Using Property 1 item~\ref{sq1} with $\ind_\sqcap(E_1,(O_2\setminus O_2') \cup (O_3\setminus O_3''))$, we get that $\ind_\sqcap(E_1,(O_2 \cup O_3)\setminus (O_2' \cup O_3'''))$ that and together with $\ind_\sqcap(O_1\setminus (O_1'''\cup O_1'),E_2\cup E_3)$, leads to the conclusion that $(\sigma_1(0),(\sigma_2[\cup]\sigma_3)(0)) \in \kappa^\ssync_\sqcap (E_1,E_2\cup E_3)$.
\end{itemize}
We showed the base case, for $n=0$.\\
\underline{Inductive argument.} Let $n$ be arbitrary, and $\tau(n) = (i,j,k)$. 
We show that the result holds for $n+1$ and $\tau(n+1) = (i',j',k')$.
The inductive argument relies on the fact that one may apply the initialization arguments for the triple $(\sigma_1^{(i)}, \sigma_2^{(j)}, \sigma_3^{(k)})$ with enumeration $\tau^{(n)}$. Based on the above arguments, one can therefore prove that the implication holds for $(i',j',k')$, and conclude that it holds for every $n$, which proves the statement.

\noindent
Reflexivity. If $\sqcap$ is co-reflexive, then $(O_1,O_2) \in \sqcap$ implies $O_1 = O_2$.
Suppose any two observations $(O_1, t_1), (O_2,t_2)$ with $O_1, O_2 \subseteq E$.
Due to the reflexivity of $\sqcap$, $(O_1,O_1) \in \sqcap$ and $(O_2,O_2) \in \sqcap$. 
It must be that $t_1 = t_2$.
Moreover, there is no $O_1',O_2' \subseteq E$ such that $(O_1, O_2') \in \sqcap$ or $(O_1', O_2) \in \sqcap$.
Thus, we can conclude that
$((O_1, t_1), (O_2, t_2)) \in \kappa^\ssync_\sqcap(E,E)$ implies $O_1 = O_2$ and $t_1 =t_2$. We therefore conclude that the product $\times_{([\kappa^\ssync_\sqcap],[\cup])}$ is idempotent.
\end{proof}
\begin{example}[Properties of join]
    \label{ex:reo-prod}
    Let $\sqcap = \{(O,O) \mid O \subseteq \E\}$ with $\E$ the universal set of events, then $\sqcap$ is co-reflexive, symmetric, and satisfies each item of Property~\ref{prop:sqcap}.
    As a consequence, the \emph{join} product $\bowtie$ defined in example~\ref{ex:join} is commutative, associative, and idempotent.
    Note that the join product is the product operator for Reo components.
\end{example}

We give in Lemma~\ref{lemma:distributivity} some conditions for two products to distribute, and in Lemma~\ref{lemma:extension} some conditions to extend the underlying relation on observables for a synchronous composability relation.
\begin{lemma}
    \label{lemma:distributivity}
Let $C_1$, $C_2$, and $C_3$ be three components, and let $\kappa_1$ and $\kappa_2$ be two composability relations on observables such that 
for all $\sigma_1, \sigma_2, \sigma_3 \in L_1 \times L_2 \times L_3$:
\begin{itemize}
    \item $(\sigma_1, \sigma_2[\cup] \sigma_3) \in [\kappa_1]$ if and only if $(\sigma_1, \sigma_2) \in [\kappa_1]$ and $(\sigma_1, \sigma_3) \in [\kappa_1]$, and
    \item for all $\tau_1 \in L_1$,  $(\tau_1[\cup]\sigma_2, \sigma_1[\cup]\sigma_3) \in [\kappa_2]$ if and only if $(\sigma_2, \sigma_3) \in [\kappa_2]$ and $\sigma_1 = \tau_1$.
\end{itemize}
Then, 
\[ 
     C_1 \times_{[\kappa_1]} (C_2 \times_{[\kappa_2]} C_3) = 
    (C_1 \times_{[\kappa_1]} C_2) \times_{[\kappa_2]} (C_1 \times_{[\kappa_1]}C_3)
\]
\end{lemma}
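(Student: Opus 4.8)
The plan is to unfold both sides with Definition~\ref{def:prod}, note that the interfaces agree, and then prove that the two behaviors coincide by rewriting each constraint set using the two hypotheses and an elementary identity for $[\cup]$ (recall the shorthand $\times_{[\kappa]} = \times_{([\kappa],[\cup])}$). Both composite components have interface $E_1 \cup (E_2 \cup E_3) = (E_1 \cup E_2) \cup (E_1 \cup E_3) = E_1 \cup E_2 \cup E_3$, so only the behaviors are at stake. Unfolding $C_2 \times_{[\kappa_2]} C_3$ inside the left-hand side gives behavior
\begin{align*}
L_L = \{\, \sigma_1 [\cup] (\sigma_2 [\cup] \sigma_3) \mid &\ \sigma_1 \in L_1,\ \sigma_2 \in L_2,\ \sigma_3 \in L_3,\ (\sigma_2,\sigma_3) \in [\kappa_2](E_2,E_3),\\
&\ (\sigma_1, \sigma_2[\cup]\sigma_3) \in [\kappa_1](E_1, E_2 \cup E_3)\,\},
\end{align*}
while unfolding $C_1 \times_{[\kappa_1]} C_2$ and $C_1 \times_{[\kappa_1]} C_3$ inside the right-hand side, using a fresh copy $\tau_1 \in L_1$ of the first operand in the second factor, gives
\begin{align*}
L_R = \{\, (\sigma_1 [\cup] \sigma_2) [\cup] (\tau_1 [\cup] \sigma_3) \mid &\ \sigma_1,\tau_1 \in L_1,\ \sigma_2 \in L_2,\ \sigma_3 \in L_3,\ (\sigma_1,\sigma_2)\in[\kappa_1](E_1,E_2),\\
&\ (\tau_1,\sigma_3)\in[\kappa_1](E_1,E_3),\ (\sigma_1[\cup]\sigma_2, \tau_1[\cup]\sigma_3) \in [\kappa_2](E_1\cup E_2, E_1\cup E_3)\,\}.
\end{align*}

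Next I would apply the second hypothesis, read with first components over $E_1\cup E_2$ and $E_1\cup E_3$ and second components over $E_2, E_3$: the constraint $(\sigma_1[\cup]\sigma_2, \tau_1[\cup]\sigma_3) \in [\kappa_2]$ holds if and only if $\tau_1 = \sigma_1$ and $(\sigma_2,\sigma_3) \in [\kappa_2](E_2,E_3)$. This collapses the two independent copies $\sigma_1,\tau_1$ of the first operand, so $L_R$ becomes the set of all $(\sigma_1[\cup]\sigma_2)[\cup](\sigma_1[\cup]\sigma_3)$ subject to $(\sigma_1,\sigma_2)\in[\kappa_1](E_1,E_2)$, $(\sigma_1,\sigma_3)\in[\kappa_1](E_1,E_3)$, and $(\sigma_2,\sigma_3)\in[\kappa_2](E_2,E_3)$. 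Then the first hypothesis converts $(\sigma_1,\sigma_2)\in[\kappa_1] \land (\sigma_1,\sigma_3)\in[\kappa_1]$ into the single constraint $(\sigma_1, \sigma_2[\cup]\sigma_3) \in [\kappa_1](E_1, E_2\cup E_3)$, making the constraint set of $L_R$ identical to that of $L_L$.

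What remains is the purely algebraic identity $(\sigma_1[\cup]\sigma_2)[\cup](\sigma_1[\cup]\sigma_3) = \sigma_1[\cup](\sigma_2[\cup]\sigma_3)$ on \tes s. Since $\cup$ on observables is idempotent, commutative, and associative, its lifting $[\cup]$ is idempotent, commutative, and associative as well --- these are exactly the facts about $[+]$ established inside the proof of Lemma~\ref{lemma:lifting-properties}. Hence $(\sigma_1[\cup]\sigma_2)[\cup](\sigma_1[\cup]\sigma_3) = ((\sigma_1[\cup]\sigma_1)[\cup]\sigma_2)[\cup]\sigma_3 = (\sigma_1[\cup]\sigma_2)[\cup]\sigma_3 = \sigma_1[\cup](\sigma_2[\cup]\sigma_3)$, so $L_L = L_R$ and the two components are equal.

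I expect the main obstacle to be the bookkeeping of interface parameters: the hypotheses are written with $[\kappa_1]$ and $[\kappa_2]$ stripped of their $(E,E')$ arguments, so one must check that the instances actually invoked --- $[\kappa_2](E_1\cup E_2, E_1\cup E_3)$ versus $[\kappa_2](E_2,E_3)$, and $[\kappa_1](E_1,E_2)$, $[\kappa_1](E_1,E_3)$ versus $[\kappa_1](E_1,E_2\cup E_3)$ --- are the ones intended, and that the quantifier ``for all $(\sigma_1,\sigma_2,\sigma_3)\in L_1\times L_2\times L_3$'' (together with ``for all $\tau_1 \in L_1$'') ranges over precisely the tuples produced by unfolding the products. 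A secondary, minor point is making explicit that $[\cup]$ is idempotent, which is not isolated as a standalone lemma but is part of the proof of Lemma~\ref{lemma:lifting-properties}; everything else is routine unfolding.
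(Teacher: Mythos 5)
Your proof is correct and follows essentially the same route as the paper's: unfold both sides, use the second hypothesis to collapse the duplicated copy of the $C_1$-trace, use the first hypothesis to merge the two $[\kappa_1]$ constraints, and finish with the identity $(\sigma_1[\cup]\sigma_2)[\cup](\sigma_1[\cup]\sigma_3) = \sigma_1[\cup](\sigma_2[\cup]\sigma_3)$. You are in fact more explicit than the paper, which compresses the application of the second hypothesis and the idempotency/associativity/commutativity of $[\cup]$ into a single unexplained set equality.
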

\begin{proof}
    Let $L$ be the behavior of component $(C_1 \times_{[\kappa_1]} C_2) \times_{[\kappa_2]} (C_1 \times_{[\kappa_1]}C_3)$, $L'$ be the behavior of $C_1 \times_{[\kappa_1]} (C_2 \times_{[\kappa_2]} C_3)$, $L_{12}$ be the behavior of $(C_1 \times_{[\kappa_1]} C_2)$ and $L_{13}$ be the behavior of $(C_1 \times_{[\kappa_1]} C_3)$.
   Then, 
   \begin{align*}
       L = & \{\sigma_1 [\cup] (\sigma_2 [\cup] \sigma_3) \mid \sigma_1 \in L_1,\ \sigma_2 \in L_2,\ \sigma_3 \in L_3,\ (\sigma_1, \sigma_2[\cup]\sigma_3) \in [\kappa_1],\  (\sigma_2, \sigma_3) \in [\kappa_2] \} \\
         = & \{\sigma_1 [\cup] (\sigma_2 [\cup] \sigma_3) \mid \sigma_1 \in L_1,\ \sigma_2 \in L_2,\ \sigma_3 \in L_3,\ (\sigma_1, \sigma_2) \in [\kappa_1],\ (\sigma_1,\sigma_3) \in [\kappa_1],\  (\sigma_2, \sigma_3) \in [\kappa_2] \} \\
         = & \{\sigma[\cup] \tau \mid \sigma \in L_{12},\ \tau \in L_{13},\ (\sigma, \tau) \in [\kappa_2] \} \\
         = & L' 
   \end{align*}
\end{proof}

\begin{lemma}
    \label{lemma:extension}
    Let $C_1 = (E_1, L_1)$ and $C_2 = (E_2, L_2)$ be two components.
    Let $\kappa^\ssync_\sqcap$ be a composability relation on observables with $\sqcap \subseteq \Po(E_1) \times \Po(E_2)$.
    Then, for any $\sqcap'$ with $\sqcap' \cap (\Po(E_1) \times \Po(E_2)) = \emptyset$, then:
    \[
        C_1 \times_{[\kappa^\ssync_{\sqcap}]} C_2 =  C_1 \times_{[\kappa^\ssync_{\sqcap \cup \sqcap'}]} C_2
    \]
\end{lemma}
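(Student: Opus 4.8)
The plan is to prove that the two components are equal by showing that their behaviors coincide; their interfaces are both $E_1 \cup E_2$, so nothing needs to be checked there. By Definition~\ref{def:prod}, the behavior of a product $C_1 \times_{(R, \oplus)} C_2$ refers to the composability relation $R$ only through the single set $R(E_1, E_2)$, and the composition function is the same on both sides of the claimed equality; hence it suffices to establish
\[
[\kappa^\ssync_\sqcap](E_1, E_2) = [\kappa^\ssync_{\sqcap \cup \sqcap'}](E_1, E_2).
\]

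Next I would observe that the lifting, evaluated at a fixed pair of interfaces, depends on the underlying composability relation on observations only at that same pair. Indeed, $[\kappa](E_1, E_2)$ is the greatest post-fixed point of $\Phi_\kappa(E_1, E_2)$ (Lemma~\ref{lemma:correctness-lifting}), and in the definition of $\Phi_\kappa(E_1, E_2)(\Rel)$ the relation $\kappa$ occurs only in the conjunct $(\tau_1(0), \tau_2(0)) \in \kappa(E_1, E_2)$, while the recursive calls remain within $\TES{E_1}{} \times \TES{E_2}{}$ (derivatives of TESs over $E_i$ are again TESs over $E_i$). Consequently, if two composability relations on observations $\kappa_a$ and $\kappa_b$ satisfy $\kappa_a(E_1, E_2) = \kappa_b(E_1, E_2)$, then $\Phi_{\kappa_a}(E_1, E_2) = \Phi_{\kappa_b}(E_1, E_2)$ as functions, hence $[\kappa_a](E_1, E_2) = [\kappa_b](E_1, E_2)$. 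So the goal reduces further to $\kappa^\ssync_\sqcap(E_1, E_2) = \kappa^\ssync_{\sqcap \cup \sqcap'}(E_1, E_2)$.

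For this last equality I would inspect the generating clauses of $\kappa^\ssync_\sqcap(E_1, E_2)$ in Definition~\ref{def:sync} one by one and verify that each queries $\sqcap$ only on pairs lying in $\Po(E_1) \times \Po(E_2)$: the membership test $(O_1, O_2) \in \sqcap$ always has $O_1 \subseteq E_1$ and $O_2 \subseteq E_2$ by construction, and every occurrence of the independence predicate has the form $\ind_\sqcap(X, Y)$ with $X \subseteq E_1$ and $Y \subseteq E_2$, so that each pair $(x, y)$ examined in $\ind_\sqcap(X, Y) = \forall x \subseteq X.\ \forall y \subseteq Y.\ (x, y) \notin \sqcap$ satisfies $(x, y) \in \Po(E_1) \times \Po(E_2)$. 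Since the hypothesis $\sqcap' \cap (\Po(E_1) \times \Po(E_2)) = \emptyset$ yields $\sqcap \cap (\Po(E_1) \times \Po(E_2)) = (\sqcap \cup \sqcap') \cap (\Po(E_1) \times \Po(E_2))$, every clause evaluates identically whether one uses $\sqcap$ or $\sqcap \cup \sqcap'$; therefore the two smallest sets defined by these clauses coincide, which is exactly $\kappa^\ssync_\sqcap(E_1, E_2) = \kappa^\ssync_{\sqcap \cup \sqcap'}(E_1, E_2)$, and chaining the reductions above finishes the proof.

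The main obstacle is precisely this clause-by-clause check in the last step: although the independence predicate $\ind_\sqcap$ syntactically mentions the full interfaces $E_1$ and $E_2$, one must confirm that it never inspects $\sqcap$ outside $\Po(E_1) \times \Po(E_2)$ — this is the single point where the disjointness hypothesis on $\sqcap'$ is used, and it is what makes the extra pairs in $\sqcap'$ invisible to the product of $C_1$ and $C_2$. The rest is unwinding the definitions of the product and of the lifting.
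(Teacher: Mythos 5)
Your proof is correct and follows essentially the same route as the paper's: reduce the equality of the two products to the equality $\kappa^\ssync_\sqcap(E_1,E_2) = \kappa^\ssync_{\sqcap\cup\sqcap'}(E_1,E_2)$ of the observation-level relations, and then observe that the lifting, evaluated at $(E_1,E_2)$, depends only on $\kappa(E_1,E_2)$. You are in fact more careful than the published proof, which only discusses the direct membership clause $(O_1,O_2)\in\sqcap$; your explicit check that every occurrence of $\ind_\sqcap$ in Definition~\ref{def:sync} also inspects $\sqcap$ only on pairs in $\Po(E_1)\times\Po(E_2)$ addresses precisely the point where enlarging $\sqcap$ could otherwise \emph{remove} pairs from $\kappa^\ssync$, a detail the paper leaves implicit.
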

\begin{proof}
    For any observations $((O_1,t_1), (O_2,t_2)) \in \kappa^\ssync_\sqcap$, we have that $((O_1,t_1), (O_2,t_2)) \in \kappa^\ssync_{\sqcap\cup\sqcap'}$ since $(O_1, O_2) \in \sqcap$ implies that $(O_1, O_2) \in \sqcap \cup \sqcap'$. Conversely, if $(O_1, O_2) \in \sqcap \cup \sqcap'$ and $\sqcap' \cap \Po(E_1) \times \Po(E_2) = \emptyset$, then $(O_1, O_2) \in \sqcap$.
    Thus, for any $(\sigma_1,\sigma_2)$, $ (\sigma_1,\sigma_2)\in [\kappa^\ssync_\sqcap]$ if and only if $(\sigma_1,\sigma_2) \in [\kappa^\ssync_{\sqcap\cup\sqcap'}]$.
\end{proof}

\subsection{Properties of TESs}\label{subsection:prop}
We distinguish two kinds of properties of TESs: properties that we call \emph{trace properties}, and properties on sets of \tes s that we call \emph{behavior properties}, which correspond to hyper-properties in~\cite{CS10}.
The generality of our model permits to interchangeably construct a component from a property and extract a property from a component.
As illustrated in Example~\ref{ex:coordination}, when composed with a set of interacting components, a component property constrains the components to only expose desired behavior (i.e., behavior in the property). 
In Section~\ref{section:formal-example}, we provide more intuition for the practical relevance of these properties.
\begin{definition}
    \label{def:properties}
    A trace property $P$ over a set of events $E$ is a subset $P \subseteq \TES{E}{\Rp}$.    
    A component $C = (E,L)$ satisfies a property $P$, if $L \subseteq P$, which we denote as $C \models P$.
\end{definition}
\begin{example}
We distinguish the usual \emph{safety} and \emph{liveness} properties~\cite{AS85,CS10}, and recall that every trace property can be written as the intersection of a safety and a liveness property.
Let $X$ be an arbitrary set, and $P$ be a subset of $\N \rightarrow X$.
Intuitively, $P$ is safe if every \emph{bad} stream not in $P$ has a finite prefix every completion of which is bad, hence not in $P$.
A property $P$ is a liveness property if every finite sequence in $X^*$ can be completed to yield an infinite sequence in $P$. \\
For instance, the property of terminating behavior for a component with interface $E$ is a liveness property, defined as:
$$
P_{\mathit{finite}}(E) = \{ \sigma \in \TES{E}{\Rp} \mid \exists n \in \mathbb{N}. \forall i > n.\ \pr_1(\sigma)(i) = \emptyset \}
$$
$P_{\mathit{finite}}(E)$ says that, for every finite prefix of any stream in $\TES{E}{\Rp}$, there exists a completion of that prefix with an infinite sequence of silent observations $\emptyset$ in $P_{\mathit{finite}}(E)$. 
\hfill$\blacksquare$
\end{example}
\newcommand{\Orch}{\mathit{Orch}}
\newcommand{\Coord}{\mathit{Coord}}
\begin{example}\label{ex:comp-prop}
A trace property is similar to a component, since it describes a set of \tes s, except that it is \emph{a priori} not restricted to any interface. 
A trace property $P$ can then be turned into a component, by constructing the smallest interface $E_P$ such that, for all $\sigma \in P$, and $i \in \mathbb{N}$, $\pr_1(\sigma)(i) \subseteq E_P$.
The component $C_P = (E_P, P)$ is then the componentized-version of property $P$.
    \hfill$\blacksquare$
\end{example}
\begin{lemma}\label{lemma:satisfiability}
Given a property $P$ over $E$, its componentized-version $C_P$ (see Example~\ref{ex:comp-prop}) and a component $C = (E,L)$, then $C \models P$ if and only if $C \cap C_P = C$. 
\end{lemma}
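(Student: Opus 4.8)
The plan is to unfold both sides of the biconditional into statements about the behavior sets and then reduce everything to the definition of the intersection component from Example~\ref{ex:intersection}. Recall that $C \cap C_P$ is defined as $C \times_{([\kappa_\sqcap^\ssync],[\cap])} C_P$ where $\sqcap$ relates every nonempty observable only to itself, over the event set $E \cup E_P$. First I would note that, by Example~\ref{ex:comp-prop}, the interface $E_P$ of $C_P$ satisfies $E_P \subseteq E$ whenever $P$ is a property over $E$ (every observable appearing in a \tes\ of $P$ is a subset of $E$), so the interface of $C \cap C_P$ is $E \cup E_P = E$, matching the interface of $C$. Thus the equation $C \cap C_P = C$ reduces to an equality of behaviors.

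The key observation is that, because $\sqcap$ is co-reflexive and relates each observable only to an identical copy of itself, the synchronous composability relation $[\kappa_\sqcap^\ssync](E,E_P)$ forces two \tes s to be composable exactly when they are \emph{equal} on their shared events, and $[\cap]$ then returns that common \tes. More precisely, I would argue that for $\sigma \in L$ and $\tau \in P$, the pair $(\sigma,\tau)$ lies in $[\kappa_\sqcap^\ssync](E,E_P)$ and $\sigma[\cap]\tau = \sigma$ if and only if $\sigma = \tau$ as \tes s (using that both range over subsets of $E$, and that co-reflexivity of $\sqcap$ rules out any interleaving or any observable occurring that is not matched identically on the other side). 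Consequently the behavior of $C \cap C_P$ is exactly $\{\sigma \mid \sigma \in L \text{ and } \sigma \in P\} = L \cap P$.

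Given that reduction, the lemma becomes: $L \subseteq P$ if and only if $L \cap P = L$, which is elementary set theory. For the forward direction, $C \models P$ means $L \subseteq P$, hence $L \cap P = L$, hence $C \cap C_P = (E, L) = C$. For the converse, if $C \cap C_P = C$ then $(E, L\cap P) = (E,L)$, so $L \cap P = L$, which gives $L \subseteq P$, i.e.\ $C \models P$.

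The main obstacle is the middle step: pinning down precisely that the behavior of $C \cap C_P$ equals $L \cap P$. This requires unwinding the co-inductive definition of $[\kappa_\sqcap^\ssync]$ together with the lifted composition function $[\cap]$ and checking that the only composable-and-surviving pairs are the diagonal ones $(\sigma,\sigma)$. I expect this to follow from the idempotency argument already used in Corollary~\ref{corollary:lifted-sync} (co-reflexive $\sqcap$ makes $\times_{([\kappa_\sqcap^\ssync],[\cup])}$ idempotent, and the same reasoning shows $(\sigma,\tau)\in[\kappa_\sqcap^\ssync](E,E)\implies\sigma=\tau$), adapted to the situation where the two interfaces $E$ and $E_P$ may differ but with $E_P\subseteq E$; I would either invoke that corollary directly or replay its short reflexivity argument. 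Everything else is bookkeeping.
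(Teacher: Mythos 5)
Your proposal follows essentially the same route as the paper's proof: both reduce $C \cap C_P = C$ to the behavior-level identity $L' = L \cap P$, establish it by using the co-reflexivity of $\sqcap$ (via the idempotency argument of Corollary~\ref{corollary:lifted-sync}) to force composable pairs onto the diagonal and self-composability for the reverse inclusion, and then finish with the elementary equivalence $L \subseteq P \iff L \cap P = L$. If anything, you are slightly more explicit than the paper about the point where the two interfaces $E$ and $E_P$ may differ, which the paper's own proof glosses over by working with $\kappa^{\ssync}_{\sqcap}(E,E)$.
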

\begin{proof}
    We recall the definition of the intersection in Example~\ref{ex:intersection}. 
    For any two components $C_1 = (E_1, L_1)$ and $C_2 = (E_2, L_2)$, the intersection $C_1 \cap C_2$ is the component $C_1 \times_{([\kappa_\sqcap^\ssync], [\cap])} C_2 = (E_1 \cup E_2,L)$
 where $\sqcap \subseteq E_1 \times E_2$ is such that $(O,O) \in \sqcap$ for all non-empty $O \subseteq E_1 \cup E_2$. 
    Given that $\cap$ satisfies the condition for using corollary~\ref{corollary:lifted-sync}, the product $\cap$ is idempotent. Let $C \cap C_P = (E, L')$. If $(\sigma, \tau) \in L'$ then $\sigma = \tau$.
    Thus, $L' \subseteq L \cap L_P$.

    Alternatively, let $\sigma \in L \cap L_P$.
    We observe that at any point $n \in \N$, we have $(\sigma(n),\sigma(n)) \in \kappa^\ssync_\sqcap(E,E)$, since, given $\sigma(n) = (O_n,t_n)$ and the assumption on $\sqcap$, we have $(O_n, O_n) \in \sqcap$ or $O_n = \emptyset$. Therefore, $(\sigma, \sigma) \in [\kappa^\ssync_\sqcap]$. 

    We conclude that $L \cap L_P = L'$.
\end{proof}

\begin{example}\label{ex:coordination}
    We use the term \emph{coordination property} to refer to a property used in order to coordinate behaviors.
    Given a set of $n$ components $C_i = (E_i, L_i)$, $i \in \{1, ..., n\}$, a coordination property $\Coord$ for the composed components is a property over events $E = E_1 \cup ... \cup E_n$, i.e., $\Coord \subseteq \TES{E}{\Rp}$.
    
    Consider the synchronous interaction of the $n$ components and let $C = C_1 \bowtie C_2 \bowtie ... \bowtie C_n$ be their synchronous product.
    Typically, a coordination property will not necessarily be satisfied by the composite component $C$, but some of the behavior of $C$ is contained in the coordination property. 
    The coordination problem is to find (e.g., synthesize) an orchestrator component $\Orch = (E_O,L_O)$ such that $C \bowtie \Orch \models \Coord$.
    The orchestrator restricts the component $C$ to exhibit only the subset of its behavior that satisfies the coordination property. 
    In other words, in their composition, $\Orch$ coordinates $C$ to satisfy $\Coord$.
    As shown in Example~\ref{ex:comp-prop}, since $\Coord$ ranges over the same set $E$ that is the interface of component $C_1 \bowtie C_2\bowtie ... \bowtie C_n$, a coordination property can be turned into an orchestrator by building its corresponding component. 
    The coordination problem can be made even more general by changing the composability relations or the composition functions used in the construction of $C$. 
    \hfill$\blacksquare$
\end{example}
\newcommand{\iinsert}{\mathit{insert}}
\newcommand{\shift}{\mathit{shift}}

Trace properties are not sufficient to fully capture the scope of interesting properties of components of cyber-physical systems. Some of their limitations are highlighted in Section~\ref{section:formal-example}. To address this issue, we introduce \emph{behavior properties}, which are strictly more expressive than trace properties, and give two illustrative examples. 
\begin{definition}
    A behavior property $\phi$ over a set of events $E$ is a hyper-property $\phi \subseteq \Po(\TES{E}{\Rp})$. 
    A component $C = (E,L)$ satisfies a hyper-property $\phi$ if $L \in \phi$, which we denote as $C \mmodels \phi$.
\end{definition}
\begin{example}\label{ex:hyperproperties}
    A component $C = (E,L)$ can be oblivious to time\red{, as in Example~\ref{ex:cyber}}. 
    Any sequence of time-stamps for an acceptable sequence of observables is acceptable in the behavior of such a component. This ``obliviousness to time" property is not a trace property, but a hyper-property, defined as:
    \[
        \phi_{\shift}(E) := \{ Q \subseteq \TES{E}{\Rp} \mid \forall \sigma \in Q.  \forall t \in OS(\Rp). \exists \tau \in Q. 
                                                                    \pr_1(\sigma) = \pr_1(\tau) \land \pr_2(\tau) = t\}
                                                                \]
        Intuitively, if $C \mmodels \phi_{\shift}(E)$, then $C$ is independent of time.
        \hfill$\blacksquare$
    \end{example}
    \begin{example}
        \label{ex:insert}
        We use $\phi_{\iinsert}(X, E)$ to denote the hyper-property that allows for arbitrary insertion of observations in $X \subseteq \Po(E)$ into every \tes\ at any point in time, i.e., the set defined as:
        \[
        \begin{array}{rlcccccc}
             \{ Q \subseteq \TES{E}{\Rp} \mid \forall \sigma \in Q.  \forall i \in \mathbb{N}. \exists \tau \in Q.
                \exists x \in X. (\forall&\hspace{-0.2em} j < i.     & \sigma(j) &=& \tau(j)) &\land \\
                                 (\exists&\hspace{-0.2em} t \in \Rp. & \tau(i)   &=& (x,t))   &\land \\
                                 (\forall&\hspace{-0.2em} j \geq i.      & \tau(j+1) &=& \sigma(j))& \}
        \end{array}
    \]
        Intuitively, elements of $\phi_{\iinsert}(X,E)$ are closed under insertion of an observation $x \in X $ at an arbitrary time. 
        \short{We show in Section~\ref{section:formal-example} how we use this closure property to design cyber-physical components.}
        \hfill$\blacksquare$
\end{example}

\newcommand{\sstop}{\mathit{stop}}
\section{Application}\label{section:formal-example}

This section is inspired by the work on soft-agents~\cite{TAY15,KLAT19},
and elaborates on the more intuitive version that we presented in Section~\ref{section:problem}.
We show in Sections~\ref{subsection:components} and ~\ref{subsection:interactions} some expressions that represent interactive cyber-physical systems, and in Section~\ref{subsection:properties} we formulate some trace and behavior properties of those systems.
Through these examples, we show how we use component-based descriptions to model a simple scenario of a robot roaming around in a field while taking energy from its battery.
We structurally separate the battery, the robot, and the field as independent components, and we explicitly model their interaction in a specific composed system.

\subsection{Description of components}\label{subsection:components}
We give, in order, a description for a robot, a battery, and a field component.
    Each component reflects a local and concise view of the physical and cyber aspects of the system. 
\paragraph{Robot} A robot component $R$ is a tuple $(E_R,L_R)$ with:
\[
\begin{array}{rll}
    E_R &= \{(\rread(&\hspace{-0.7em}\loc,R); l), (\rread(\bat,R); b), (\move(R);(d, \alpha)), (\charge(R); s) \mid \\
        &                                             &l \in [0,20]^2,\ b \in \Rp,\ 
    d \in \{\No, \Ea, \We, \So \},\ \alpha \in \Rp,\ s \in \{\ON,\OFF\}\}\\
        L_R &= \{\sigma \in &\hspace{-1.7em}\mathit{TES}(E_R) \mid \forall i \in \N. \exists e \in E_R.\ \pr_1(\sigma)(i) = \{e\}  \}
\end{array}
\]
where the read of the position, the read of the battery, the move, and the charge events contain respectively the position of the robot as a pair of coordinates $l\in[0,20]^2$ grid; the remaining battery power $b$ (in \si{Wh}); the move direction as pair of a cardinal direction $d$ and a positive number $\alpha$ for a demand of energy (in \si{\watt}); and the charge status $s$ as $\ON$ or $\OFF$. 
Note that the set of TESs $L_R$ allows for arbitrary increasing and non-Zeno sequences of timestamp.

\paragraph{Battery} A battery component $B(C)$ with capacity $C$ (in \si{Wh}) is a tuple $(E_B, L_B)$ with:
\begin{align*}
    E_B &= \{ (\rread(B); l), (\discharge(B); \eta_d), (\charge(B); \eta_c) \mid  0 \leq l \leq C,\ \eta_c, \eta_d: \Rp \rightarrow \Rp\}\\
    L_B &= \{ \sigma \in \TES{E_B}{\Rp} \mid \forall i \in \N. \exists e \in E_B.\ \pr_1(\sigma)(i) = \{e\} \land P_{B}(\sigma) \}
\end{align*}
where the read, the charge, and the discharge events respectively contain the current charge status $l$ (in \si{Wh}), the discharge rate $\eta_d$, and the charge rate $\eta_c$.
The discharge and charge rates are coefficients that depend on the internal constitution of the battery, e.g., its current and voltage, and influence how the battery supplies energy to its user. The integration of $\eta_d$ (or $\eta_c$) on a time interval gives the power delivered (or received) by the battery in \si{Wh}.
The predicate $P_{B}(\sigma)$ guarantees that every behavior $\sigma$ of the battery satisfies the physical constraints for its acceptability. 
    An example for the structural constraint $P_{B}$ is that every $\rread$ event instance returns the battery level as a function of the occurrence of prior $\discharge$ and $\charge$ events.
    We introduce the $lev$ function, that takes a sequence of observations $s \in (\Po(\E) \times\Rp)^*$ of size $i > 1$ and returns the cumulative energy spent: $lev(s) = lev(\langle s(0),s(1)\rangle) + lev(\langle s(1) ...s(i)\rangle)$, where
    \[
    lev(\langle (O_1, t_1), (O_2, t_2)\rangle) = 
    \begin{cases} 
        \int_{t_1}^{t_2}(\eta_d(t) +\eta_I(t))dt &\ if\ (\discharge(B);\eta_d) \in O_1 \\
        \int_{t_1}^{t_2}(\eta_I(t) - \eta_c(t)) dt&\ if\ (\charge(B);\eta_c) \in O_1 \\
        \int_{t_1}^{t_2}\eta_I(t) dt &\ otherwise
    \end{cases}
    \]
    and $\eta_I$ is an internal discharge rate.
    The constraint $P_B$ is defined such that all $(\rread(B);l)$ events in $E_B$ return the current battery level of the robot, in accordance with the $lev$ function, i.e., for all $\sigma \in L_B$:
    \begin{align*}
        \forall i \in \mathbb{N}.  &(\rread(B);l) \in \pr_1(\sigma)(i) \implies l = \min(C, \max(C - lev(\langle \sigma(0), ..., \sigma(i)\rangle), 0))
    \end{align*}
    where $C$ is the maximum capacity of the battery.
The property $P_B$ assumes that initially at $t=0$ the battery is at its maximum charge $C$, that the battery level decreases after each discharge event, increases after each charge event, proportionally to the discharge and the charge rates. Moreover, a discharge below $0$ is physically forbidden. 
    Observe that different alternatives for the predicate $P_{B}$ account for different models of batteries. 
    Moreover, our model allows for specifications where the discharge factor depends on external parameters (temperature, discharge level, etc), adding a non-linear aspect to the model.

\paragraph{Field} A field component $F(l_0)$ contains a single object that we identify as $I$ initially at location $l_0$, has a fixed size of $[0,20]^2$, and contains a charging station at location $(5;5)$.
A field component is a tuple  $(E_F, L_F)$ with:
\begin{align*}
    E_F &= \{ (\loc(I);p), (\move(I);(d,F))  \mid  p \in [0,20]^2,\ d \in \{\No, \So, \Ea, \We\},\ 
    F_t \in \Rp\}\\
    L_F &= \{ \sigma \in \TES{E_F}{\Rp} \mid  \forall i \in \N. \exists e \in E_F.\ \pr_1(\sigma)(i) = \{e\} \land P_F(\sigma) \}
\end{align*}
where the $\loc$ and the $\move$ events respectively contain the position of object $I$ and the pair of a direction $d$ of the move of object $I$ and a force $F_t$ of traction applied by object $I$. 
A field has an internal friction factor $\mu$ whose value depends on the position on the field. 
With a friction value of $0$, the object will have no traction and thus will stay put in place instead of moving on the field (e.g., failure to move on a layer of ice). With a friction of $1$, the move event will displace the object proportionally to the force of the move (e.g., a move on a layer of concrete). A friction factor between $0$ and $1$ captures other scenarios between those two extremes (e.g., a move on a layer of grass).
The predicate $P_F(\sigma)$ guarantees that every behavior $\sigma$ satisfies the physics of the field component.
$P_F$ models the case where the object $I$ is initially at position $l_0$ and every move event changes continuously the location of the object on the field according to the direction $d$, the force of traction $F_t$, and the friction $\mu$. A move event has no effect if it occurs while the position of $I$ is on the boundary of the field: this scenario simulates the case of a fenced field, where moving against the fence has the same observable as not moving. 

The internal constraints of the field are such that the $\move$ observation triggers an internal displacement of object $I$ proportional to the force that the object has applied.
We write $\Delta d(t,t_0, (x_0,y_0))$ to denote the displacement from a time $t_0$ where the object is at rest at position $(x_0,y_0)$, to a time $t$, defined as:
\begin{align}
    m\vec{a} &= \vec{F_t}(x_0,y_0)\notag\\
    ma &=  F_t(x_0,y_0)\notag\\
    v(t,t_0) &=  (\cfrac{F_t(x_0,y_0)}{m})(t - t_0) \notag\\
    \Delta d(t,t_0,(x_0,y_0)) &=  \cfrac{1}{2}(\cfrac{F_t(x_0,y_0)}{m})(t - t_0)^2 
\end{align}
where $||\vec{F_t}|| \leq \cfrac{1}{4}\ \mu(x_0,y_0) m g$, e.g., the traction force on a wheel (supporting one fourth of the weight of the object) is less than the maximal friction force, with $\mu$ the friction coefficient, $m$ the mass of the object; and $F_t$ is the constant traction force of the object.
Observe that we chose to make the friction coefficient dependent on the initial position $x_0$ of the object before the move. This choice reflects the simplifying assumption that the friction will not substantially change during the movement.
Alternatively, one can imagine a different structure for the field component to support variable friction during a move in $P_F$.

\newcommand{\dis}{\mathit{dis}}
    An example for the constraint $P_{F}$ reflects the constraint that for each sequence of observations, the output value of a $read$ event corresponds to the current position of the robot given its previous moves.
    We will use a function called $\dis$ to determine the cumulative displacement of the robot after a sequence of observations.
    Let $s \in (\Po(\E)\times \Rp)^*$ be a finite sequence of observations of size $i\geq 1$. The displacement of the object $I$, at position $(x_0,y_0)$, after a sequence of events $s$ is given by $\dis(\langle(O_0,t_0)\rangle, (x_0,y_0)) = (x_0,y_0)$ and $\dis(s,(x_0,y_0)) = \dis(\langle s(0),...,s(i-1)\rangle, (x', y'))$, where for $s(0) = (O_0,t_0)$ and $s(1) = (O_1, t_1)$: 
    \[
    (x',y') = 
    \begin{cases} 
        (x_0,y_0+\Delta d(t_1,t_0, (x_0,y_0)))   &\  if\ (\move(I); (\No,F_t)) \in O_0 \\
        (x_0,y_0-\Delta d(t_1, t_0,(x_0,y_0))) &\  if\ (\move(I); (\So, F_t)) \in O_0 \\
        (x_0+\Delta d(t_1, t_0,(x_0,y_0)),y_0)  &\  if\ (\move(I); (\Ea,F_t)) \in O_0 \\
        (x_0-\Delta d(t_1, t_0,(x_0,y_0)),y_0) &\  if\ (\move(I); (\We, F_t)) \in O_0  \\ 
        (x_0,y_0) &\ otherwise
    \end{cases}
\]
    with $\Delta d(t,t_0,(x_0,y_0))$ defined in Equation~(1).
    $P_F$ is defined to accept all \tes s such that every $\rread$ event returns the current position of the robot on the field, according to its displacement over time. 
    Given $\sigma \in \TES{E_F}{}$, $P_F(\sigma)$ is true if and only if
    \[
    \begin{array}{rl}
        \forall i \in \N.&  (\loc(I);p) \in \pr_1(\sigma)(i) \implies p = |\dis(\langle \sigma(0) .... \sigma(i)\rangle, l_0)|_{[-20,20]}
    \end{array}
\]
    with 
    $
    |(x,y)|_{[-20,20]} = (\min(\max(x,-20),20), \min(\max(y,-20),20))
    $, and $l_0$ the initial position of object $I$.
    $P_F$ models the case where the robot starts in position $l_0$ and every move event changes the location of the robot on the field. 

Robots $R_1$ and $R_2$ are two instances of the robot component, where all occurrences of $R$ have been renamed respectively to $R_1$ and $R_2$ (e.g., $(\rread(\loc,R),l)$ becomes $(\rread(\loc,R_1),l)$ for the robot instance $R_1$, etc.).
    Similarly, we consider $B_1$ and $B_2$ to be two instances of the battery component $B$, and $F_1((0;0))$ and $F_2((5;0))$ to be two instances of the field component $F$ parametrized by the initial location for the object $I$, where the objects in fields $F_1$ and $F_2$ are renamed to $I_1$ and $I_2$, and respectively initialized at position $(0;0)$ and $(5;0)$.

\subsection{Interaction}\label{subsection:interactions}
We detail three points of interactions on observables among a robot and its battery, a robot and a field on which it moves, and two instances of a field component.
    The composability relations that relate the events of a robot, a battery, and a field impose some necessary constraints for the physical consistency of the cyber-physical system. For instance, that the power requested by the robot must match the characteristic of the battery.

\paragraph{Robot-battery} 
Interactions between a robot component and its battery are such that, for instance, every occurrence of a move event at the robot component must be simultaneous with a discharge event of the battery, with the discharge factor proportional to the demand of energy from the robot.
Given a robot component $R$ and a battery component $B$, we define the symmetric relation $\sqcap_{RB}$ on the set $\Po(E_R \cup E_B)$ to be the smallest relation such that:
\[
\begin{array}{rcll}
    \{(\rread(\bat,R);b)\} &\sqcap_{RB}& \{(\rread(B);b)       \} &\text{ for all } 0 \leq b \leq C \\
    \hspace{-0.4em}\{(\move(R);(d,\alpha))      \} &\sqcap_{RB}& \{(\discharge(B);\eta_d)\} &\text{ for all }d \in \hspace{-0.2em}\{\No,\So,\We,\Ea\} \\
    \hspace{-0.2em}\{(\charge(R);\ON)  \} &\sqcap_{RB}& \{(\charge(B);\eta_c)   \} &
\end{array}
\]
with $\eta_d(t) > \alpha$ for all $t \in \Rp$, e.g, the power delivered by the battery during a discharge is greater than the power required by the move; and with $C$ the capacity of the battery.

\paragraph{Robot-field} Interactions between a robot component and a field component are such that, for instance, every move event of the robot component must be simultaneous with a move event of the object $I$ on the field, with a variable friction coefficient.
Given a robot component $R$ and a field component $F$, we define the symmetric relation $\sqcap_{RF}$ on the set $\Po(E_R \cup E_F)$ to be the smallest relation such that:
\[
\begin{array}{rcll}
    \{(\rread(\loc,R);l) \} &\sqcap_{FR}& \{(\loc(I);l)      \} &\text{ for all }l \in [0,20]^2 \\
    \{(\move(R);(d,\alpha))       \} &\sqcap_{FR}& \{(\move(I);(d,F_t))     \} &\text{ for all }d \in \{\No,\We,\Ea,\So\}, v \in \Rp\\
    \{(\charge(R); \ON)  \} &\sqcap_{FR}& \{(\loc(I);(5,5))  \} &
\end{array}
\]
with $F_t = \cfrac{\alpha}{R\omega}$ with $R$ the radius of the wheels of the robot and $\omega$ the speed of rotation of the wheels (assumed to be constant during the move).
Observe that a robot can charge only if it is located at the charging station.

\paragraph{Field-field}
We add also interaction constraints between two fields, such that no observation can gather two read events containing the same position value.
Thus, given two fields $F_1$ and $F_2$, let $\sqcap_{F_{12}}$ be the smallest symmetric mutual exclusion relation on the set $\Po(E_{F_1} \cup E_{F_2})$ such that:
\[
\{(\loc(I_1);l)\} \sqcap_{F_{12}} \{(\loc(I_2);l)\} \text{ for all }l \in [0,20]^2.
\]
Observe that we interpret $\sqcap_{F_{12}}$ as a mutual exclusion relation. 
At first sight, the field does not prevent the two robots to share the same location. It only removes the possibility to observe the two robots at the same position. If, however, the field's behavior is closed under insertion of simultaneous read observables from both robots, then the two propositions stated above coincide (see Example~\ref{ex:hyperproperties}).

\paragraph{Product} We use set union as a composition function on observables: given two observables $O_1$ and $O_2$, we define $O_1 \oplus O_2$ to be the observable $O_1 \cup O_2$.
We use the synchronous and mutual exclusion composability relations on \tes s introduced in Definition~\ref{def:sync} and Definition~\ref{def:excl}. 
We represent the cyber-physical system consisting of two robots $R_1$ and $R_2$ with two private batteries $B_1$ and $B_2$, and individual fields $F_1$ and $F_2$, as the expression:
\begin{equation}
System = (F_1 \times_{[\kappa^{\mathit{excl}}_{\sqcap_{F_{12}}}]} F_2) \times_{[\kappa^{\ssync}_{(\sqcap_{F_1R_1} \cup \sqcap_{F_2R_2})}]}  ((R_1 \times_{[\kappa^{\ssync}_{\sqcap_{R_1B_1}}]} B_1) \times_{[\kappa_\top]} (R_2 \times_{[\kappa^{\ssync}_{\sqcap_{R_2B_2}}]} B_2))
\label{eq:cps}
\end{equation}

Note that the previous expression describes the same component as: 
\begin{equation}
    System = ((F_1 \times_{[\kappa^\ssync_{\sqcap_{F_1R_1}}]} R_1) \times_{[\kappa^{\ssync}_{\sqcap_{R_1B_1}}]} B_1) \times_{[\kappa^{\mathit{excl}}_{\sqcap_{F_{12}}}]} ((F_2 \times_{[\kappa^\ssync_{\sqcap_{F_2R_2}}]} R_2) \times_{[\kappa^{\ssync}_{\sqcap_{R_2B_2}}]} B_2)
\end{equation}
\setcounter{equation}{0} 
\begin{proof}
    \begin{align}
      &(F_1 \times_{[\kappa^{\mathit{excl}}_{\sqcap_{F_{12}}}]} F_2) \times_{[\kappa^{\ssync}_{(\sqcap_{F_1R_1} \cup \sqcap_{F_2R_2})}]}  ((R_1 \times_{[\kappa^{\ssync}_{\sqcap_{R_1B_1}}]} B_1) \times_{[\kappa_\top]} (R_2 \times_{[\kappa^{\ssync}_{\sqcap_{R_2B_2}}]} B_2))\\
        =\ &(F_1 \times_{[\kappa^{\mathit{excl}}_{\sqcap_{F_{12}}}]} F_2) \times_{[\kappa^{\ssync}_{(\sqcap_{F_1R_1} \cup \sqcap_{F_2R_2})}]}  ((R_1 \times_{[\kappa^{\ssync}_{\sqcap_{R_1B_1}}]} B_1) \times_{[\kappa^\ssync_{\sqcap_{F_{11}}\cup \sqcap_{F_{22}}}]} (R_2 \times_{[\kappa^{\ssync}_{\sqcap_{R_2B_2}}]} B_2))\\
        = \ &((F_1 \times_{[\kappa^{\mathit{excl}}_{\sqcap_{F_{12}}}]} F_2) \times_{[\kappa^{\ssync}_{(\sqcap_{F_1R_1} \cup \sqcap_{F_2R_2})}]}  (R_1 \times_{[\kappa^{\ssync}_{\sqcap_{R_1B_1}}]} B_1)) \times_{[\kappa^\ssync_{\sqcap_{F_{11}} \cup \sqcap_{F_{22}}}]} ((F_1 \times_{[\kappa^{\mathit{excl}}_{\sqcap_{F_{12}}}]} F_2) \times_{[\kappa^{\ssync}_{(\sqcap_{F_1R_1} \cup \sqcap_{F_2R_2})}]}  (R_2 \times_{[\kappa^{\ssync}_{\sqcap_{R_2B_2}}]} B_2))\\
        = \ &((F_1 \times_{[\kappa^{\mathit{excl}}_{\sqcap_{F_{12}}}]} F_2) \times_{[\kappa^{\ssync}_{\sqcap_{F_1R_1}}]}  (R_1 \times_{[\kappa^{\ssync}_{\sqcap_{R_1B_1}}]} B_1)) \times_{[\kappa^\ssync_{\sqcap_{F_{11}} \cup \sqcap_{F_{22}}}]} ((F_1 \times_{[\kappa^{\mathit{excl}}_{\sqcap_{F_{12}}}]} F_2) \times_{[\kappa^{\ssync}_{\sqcap_{F_2R_2}}]}  (R_2 \times_{[\kappa^{\ssync}_{\sqcap_{R_2B_2}}]} B_2))\\
        = \ &((R_1 \times_{[\kappa^{\ssync}_{\sqcap_{R_1B_1}}]} B_1) \times_{[\kappa^{\ssync}_{\sqcap_{F_1R_1}}]} (F_1 \times_{[\kappa^{\mathit{excl}}_{\sqcap_{F_{12}}}]} F_2)) \times_{[\kappa^\ssync_{\sqcap_{F_{11}} \cup \sqcap_{F_{22}}}]} ((F_1 \times_{[\kappa^{\mathit{excl}}_{\sqcap_{F_{12}}}]} F_2) \times_{[\kappa^{\ssync}_{\sqcap_{F_2R_2}}]}  (R_2 \times_{[\kappa^{\ssync}_{\sqcap_{R_2B_2}}]} B_2))\\
        = \ &(R_1 \times_{[\kappa^{\ssync}_{\sqcap_{R_1B_1}}]} B_1) \times_{[\kappa^{\ssync}_{\sqcap_{F_1R_1}}]} (F_1 \times_{[\kappa^{\mathit{excl}}_{\sqcap_{F_{12}}}]} F_2) \times_{[\kappa^{\ssync}_{\sqcap_{F_2R_2}}]}  (R_2 \times_{[\kappa^{\ssync}_{\sqcap_{R_2B_2}}]} B_2)\\
  = \ &((F_1 \times_{[\kappa^\ssync_{\sqcap_{F_1R_1}}]} R_1) \times_{[\kappa^{\ssync}_{\sqcap_{R_1B_1}}]} B_1) \times_{[\kappa^{\mathit{excl}}_{\sqcap_{F_{12}}}]} ((F_2 \times_{[\kappa^\ssync_{\sqcap_{F_2R_2}}]} R_2) \times_{[\kappa^{\ssync}_{\sqcap_{R_2B_2}}]} B_2)
    \end{align}
    where:
    \begin{itemize}
        \item $(1)$ to $(2)$ is given by the fact that $((R_1 \times_{[\kappa^{\ssync}_{\sqcap_{R_1B_1}}]} B_1) \times_{[\kappa_\top]} (R_2 \times_{[\kappa^{\ssync}_{\sqcap_{R_2B_2}}]} B_2)) = ((R_1 \times_{[\kappa^{\ssync}_{\sqcap_{R_1B_1}}]} B_1) \times_{[\kappa^\ssync_{\sqcap_{F_{11}}\cup \sqcap_{F_{22}}}]} (R_2 \times_{[\kappa^{\ssync}_{\sqcap_{R_2B_2}}]} B_2))$, i.e., synchronization on events that are not in the interface is the same as the relation $\kappa_\top$ (Lemma~\ref{lemma:extension});
        \item $(2)$ to $(3)$ is given by Lemma~\ref{lemma:distributivity};
        \item $(3)$ to $(4)$ simplifies in both side the synchronous composability relation to range over the interface of its operand components; 
        \item $(4)$ to $(5)$ commutativity of  $\times_{[\kappa^{\ssync}_{\sqcap_{F_1R_1}}]}$;
        \item $(5)$ to $(6)$ first rewrite $\kappa^{\ssync}_{\sqcap_{F_1R_1}}$, $\kappa^{\ssync}_{\sqcap_{F_2R_2}}$, and $\kappa^\ssync_{\sqcap_{F_{11}} \cup \sqcap_{F_{22}}}$ to all the same $\kappa^\ssync_{\sqcap}$ where $\sqcap = \sqcap_{F_{11}} \cup \sqcap_{F_{22}} \cup \sqcap_{F_1R_1} \cup \sqcap_{F_2R_2}$ which does not change the synchronization product (Lemma~\ref{lemma:extension}); and then uses associativity and idempotency of the product $\times_{[\kappa^\ssync_{\sqcap}]}$;
        \item $(6)$ to $(7)$ the synchronous products are distributed to the component on which they have effect only, i.e., $F_1$ and $F_2$ for $R_1$ and $R_2$ respectively. 
    \end{itemize}
\end{proof}

\newcommand{\Pswap}{\mathrm{P_{swap}}}
\newcommand{\id}{\mathrm{id}}
\newcommand{\System}{\mathit{System}}
\newcommand{\swap}{\mathit{swap}}
\subsection{Behavioral properties of components}\label{subsection:properties}
Let $E = E_{R_1} \cup E_{R_2} \cup E_{B_1} \cup E_{B_2} \cup E_{F_1} \cup E_{F_2}$ be the set of events for the composite system in Equation~(\ref{eq:cps}).
We formulate the scenarios described in Section~\ref{section:problem} in terms of a satisfaction problem involving a safety property on \tes s and a behavior property on the composite system.
We first consider two safety properties:
\[
P_{\energy} = \{ \sigma \in \TES{E}{\Rp} \mid \forall i \in \mathbb{N}. \{(\rread(B_1),0), (\rread(B_2),0)\} \cap \pr_1(\sigma)(i) = \emptyset \}
\]\[
P_{\nooverlap} = \{ \sigma \in \TES{E}{\Rp} \mid \forall i \in \mathbb{N}. \forall l \in [0,20]^2, \{ (\loc(I_1),l), (\loc(I_2),l) \} \nsubseteq  \pr_1(\sigma)(i) \} 
\]

The property $P_{\energy}$ collects all behaviors that never observe a battery value of $0\si{Wh}$.
The property $P_{\nooverlap}$ describes all behaviors where the two robots are never observed together at the same location.
Observe that, while both $P_{\energy}$ and $P_{\nooverlap}$ specify some safety properties, they are not sufficient to ensure the safety of the system.
We illustrate some scenarios with the property $P_{\energy}$.
If a component never reads its battery level, then the property $P_{\energy}$ is trivially satisfied, although effectively the battery may run out of energy. 
Also, if a component reads its battery level periodically, each of its readings may return an observation agreeing with the property. However, in between two read events, the battery may run out of energy (and somehow recharge).
To circumvent those unsafe scenarios, we add an additional behavioral property.

Let $X_{\rread} = \{ (\rread(B_1);l_1), (\rread(B_2);l_2) \mid 0 \leq l_1 \leq C_1,\ 0 \leq l_2 \leq C_2 \}$ be the set of reading events for battery components $B_1$ and $B_2$, with maximal charge $C_1$ and $C_2$ respectively. The property
$
\phi_{\mathit{insert}}(X_{\rread}, E)
$, as detailed in Example~\ref{ex:insert}, defines a class of component behaviors that are closed under insertion of $\rread$ events for the battery component.
Therefore, the system denoted as $C$, defined in Equation~\ref{eq:cps} is energy safe if $C \models P_{\energy}$ and its behavior is closed under insertion of battery read events, i.e., $C \mmodels \phi_{\mathit{insert}}(X_{\rread}, E)$.
In that case, every TES of the component's behavior is part of a set that is closed under insertion, which means all read events that the robot may do in between two events observe a battery level greater than $0\si{Wh}$.
The behavior property enforces the following safety principle: had there been a violating behavior (i.e., a run where the battery has no energy), then an underlying \tes\ would have observed it, and hence the behavioral property would have been violated.

Another scenario for the two robots is to consider their coordination in order to have them swap their positions. Let $F_1$ be initialized to have object $I_1$ at position $(0,0)$ and $F_2$ have $I_2$ at position $(5,0)$. The property of position swapping is a liveness property defined as:
\begin{align*}
    \Pswap = \{ \sigma \in \TES{E\cup\{\diamond\}}{\Rp} \mid &\{ (\loc(I_1),(0,0)), (\loc(I_2),(5,0))\} \subseteq \pr_1(\sigma)(0) \textit{ and } \\ 
    &\exists i \in \mathbb{N}. \{ (\loc(I_1),(5,0)), (\loc(I_2),(0,0)), \diamond \} \subseteq \pr_1(\sigma)(i) \}
\end{align*}
where $\diamond$ is used as an external symbol not in $E$.
It is sufficient for a liveness property to be satisfied for the system to be live, i.e., in the case of $\Pswap$, eventually reach a swapped position. However, it may be that the two robots swap their positions before the actual observation happens. In that case, using a similar behavioral property as for safety property will make sure that if there exists a behavior where robots swap their positions, then such behavior is observed  as soon as it happens.

Given a set of events $E$, let $\sqcap_E = \{ (O,O) \mid O \subseteq E\}$ be a relation on sets of events. 
Let $\Pswap\downarrow R_i\subseteq \TES{E_{R_i}\cup\{\diamond\}}{}$ be the projection of property $\Pswap$ on the set of events $E_{R_i}$ such that:
\[
    \tau \in \Pswap\downarrow R_i \iff \exists \sigma \in \Pswap. \forall n \in \N. (\sigma(n) = (O,t) \implies \tau(n) = (O\cap (E_{R_i}\cup\{\diamond\}),t))
\]
Let $C_\mathrm{swap}^{R_i} = (E_{R_i}\cup\{\diamond\}, \Pswap\downarrow R_i)$ and $C_\mathrm{swap} = (E_{R_1}\cup E_{R_2} \cup \{\diamond\}, \Pswap)$, 
then $C_\mathrm{swap} = (C_\mathrm{swap}^{R_i} \times_{\kappa^\ssync_{\sqcap_{\{\diamond\}}}} C_\mathrm{swap}^{R_i})$.
We show an equivalence between coordination of the two robots by a centralized coordinator (e.g., $C_\mathrm{swap}$), and coordination of the two robots by a decentralized coordination (e.g., $C_\mathrm{swap}^{R_1}$ and $C_\mathrm{swap}^{R_2}$).

Indeed, due to Lemma~\ref{lemma:distributivity}:
\begin{align}
    C_{\mathrm{swap}} \times_{[\kappa^\ssync_{\sqcap_{E_{R_1}}\cup \sqcap_{E_{R_2}}}]} (R_1 \times_{[\kappa^\ssync_\top]} R_2)
    = &(C^{R_1}_{\mathrm{swap}} \times_{[\kappa^\ssync_{\sqcap_{\{\diamond\}}}]} C^{R_2}_{\mathrm{swap}}) \times_{[\kappa^\ssync_{\sqcap_{E_{R_1}}\cup \sqcap_{E_{R_2}}}]} (R_1 \times_{[\kappa^\ssync_\top]} R_2)\\
    = &(C^{R_1}_{\mathrm{swap}} \times_{[\kappa^\ssync_{\sqcap_{E_{R_1}}}]} R_1) \times_{[\kappa^\ssync_{\sqcap_{\{\diamond\}}}]} (C^{R_2}_{\mathrm{swap}} \times_{[\kappa^\ssync_{\sqcap_{E_{R_2}}}]} R_2)
\end{align}

The above example shows how a property can be decomposed into sub-properties that interact via some shared events.
Such decomposition makes, a structural distinction between a \emph{local} form of coordination imposed by the products of the sub-property and its interacting component, and a \emph{global} form of coordination that coordinate the two properties using an additional $\diamond$ signal. 
Moreover, this scheme demonstrates the ability to localize a coordinator next to the components that it coordinates, which makes our design modular.

\section{Related and future work}
\label{section:related-work}

Our work offers a component-based semantics for cyber-physical systems~\cite{L08,KK12}. 
In~\cite{AH01}, a similar aim is pursued by defining an algebra of components using interface theory. Our component-based approach is inspired by~\cite{AR03,A05}, where a component exhibits its behavior as a set of infinite timed-data streams.
More details about co-algebraic techniques to prove component equivalences can be found in~\cite{R00}. 

Our model of components assumes some underlying physical
models. We do not give a precise account on how to model
physics of a component, in contrast to work using hybrid
models~\cite{H96}. TESs contrast to discrete event modes, such as
the work in~\cite{A02,L19,SSLST95}, by being based on arbitrary time
sampling strategies rather than event driven observations.
We abstract and generalize such work by supporting time
sensitive system evolution and more generally observation
of physical properties.

In~\cite{FLDL18}, the authors describe an algebra of timed machines and of networks of timed machines. 
A timed machine is a state based description of a set of timed traces, such that every observation has a time stamp that is a multiple of a time step $\delta$.
\short{
The work provides some decidability results on feasibility and consistency of networks of timed machines. 
}
This work differs from our current development in several respects. 
We focus in this paper on different algebraic operations on sets of timed-traces (TESs), and abstract away any underlying operational model (e.g., timed-automata).
In~\cite{FLDL18}, the authors explain how algebraic operations on timed machines \emph{approximate} the intersection of sets of timed-traces.
In our case, interaction is not restricted to input/output composition, but depends on the choice of a composability constraint on \tes s and a composition function on observables.
The work in~\cite{FLDL18} defines an interesting class of components (closed under insertion of silent observation - $r$-closed) that deserves investigation.

Cyber-physical systems have also been studied from an actor-model perspective, where actors interact through events~\cite{T08}. Methods for achieving synchronous behavior using asynchronous means of interaction are presented in~\cite{SASMO10}.

In \cite{kanovich-etal-16formats} a multiset rewriting model of time
sensitive distributed systems such as cyber-physical agent, is
introduced. Two verification problems are defined relative to a given
property P: realizability (is there a trace that satisfies P), and
survivability (do all traces satisfy P) and their complexity is analyzed.
In \cite{kanovich2021complexity} the theory is extended with two further
properties that concern the ability to avoid reaching a bad state.

Recent work has shown plenty of interest in studying the satisfaction problem of hyper-properties and the synthesis of reactive systems~\cite{BCJL20}. Some works focus more particularly on using hyper-properties for cyber-physical design~\cite{LJXJT17}.

The extension of hybrid automata~\cite{LSV03} to quantized hybrid automata is presented in~\cite{JAP19}, where the authors apply their model to give a formal semantics for data flow models of cyber-physical systems such as Simulink~\cite{DBLP:journals/sttt/HamonR07}.

Compared to formalisms that model cyber-physical systems as more concrete operational or state-based mechanisms, such as automata or abstract machines, our more general abstract formalism is based only on the observable behavior of cyber-physical components and their composition into systems, regardless of what more concrete models or mechanisms may produce such behavior.

For future work, we want to provide a finite description for components, and use our current formalism as its formal semantics. 
In fact, we first started to model interactive cyber-physical systems as a set of finite state automata in composition, but the underlying complexity of automata interaction led us to introduce a more abstract component model to clarify the semantics of those interactions.
Moreover, we want to investigate several proof techniques to show equivalences of components. We expect to be able to reason about local and global coordination, by studying how coordinators distribute over our different composition operators.
Finally, our current work serves as a basis for defining a compositional semantics for a state-based component framework~\cite{cp-agents} written in Maude~\cite{clavel-etal-07maudebook}, a specification and programming language based on rewriting logic. We plan to focus on evaluating the robustness of a set of components with respect to system requirements expressed as trace- or hyper-properties. The complexity of the satisfaction problem requires some run-time techniques to detect deviations and produce meaningful diagnoses~\cite{KLAT19}, a topic that we are currently exploring. 

\section{Conclusion}\label{section:conclusion}
This paper contains three main contributions. First, we introduce a component model for cyber-physical systems where cyber and physical processes are uniformly described in terms of sequences of observations.
Second, we provide ways to express interaction among components using algebraic operations, such as a parametric product and division, and
give conditions under which product is associative,
commutative, or idempotent.
Third, we provide a formal basis to study trace- and hyper-properties of components, and demonstrate the application of our work in an example describing several coordination problems. 

Our semantic model provides a formal basis for designing interacting cyber-physical systems, where interaction is defined explicitly and exogenously as an algebraic operation acting on components.
As a future step, we plan to use the semantic model introduced in this work to give a compositional semantics for interacting (state-based) specification for cyber-physical components. 
We aim to use our modular design in order to study problems of diagnosis in systems of interacting cyber-physical components.

\paragraph{Acknowledgement.}
Talcott was partially supported by the U. S. Office of Naval Research under award numbers N00014-15-1-2202 and N00014-20-1-2644, and NRL grant N0017317-1-G002.
Arbab was partially supported by the U. S. Office of Naval Research under award number N00014-20-1-2644.

\nocite{*}
\bibliographystyle{elsarticle-num}
\bibliography{references.bib}

\end{document}